\newtheorem{theorem}{Theorem}
\newtheorem{cor}{Corollary}
\newtheorem{lem}{Lemma}
\newtheorem{prop}[theorem]{Proposition}
\newtheorem{defn}{Definition}
\newcommand{\comb}[2]{\left( \begin{array}{c}  #1 \\ #2 \\  \end{array}  \right)}
\newcommand{\Reals}{\mathds R}
\newcommand{\Integers}{\mathds Z}
\newcommand{\bx}{\mathbf{x}}
\newcommand{\ba}{\mathbf{a}}
\newcommand{\bu}{\mathbf{u}}
\newcommand{\by}{\mathbf{y}}
\newcommand{\bz}{\mathbf{z}}
\newcommand{\bX}{\mathbf{X}}
\newcommand{\bY}{\mathbf{Y}}
\newcommand{\bU}{\mathbf{U}}
\newcommand{\bZ}{\mathbf{Z}}
\newcommand{\eps}{\varepsilon}
\newcommand{\cA}{\mathcal{A}}
\newcommand{\cF}{\mathcal{F}}
\newcommand{\cY}{\mathcal{Y}}
\newcommand{\cX}{\mathcal{X}}
\newcommand{\cZ}{\mathcal{Z}}
\newcommand{\cC}{\mathcal{C}}
\newcommand{\cP}{\mathcal{P}}
\newcommand{\cB}{\mathcal{B}}
\newcommand{\cU}{\mathcal{U}}
\newcommand{\EE}{\mathbb{E}}
\newcommand{\ra}{\rightarrow}
\newcommand{\type}{\mathsf{P}} 
\newcommand{\typecl}{\mathsf{T}} 
\newcommand{\ID}{\mathrm{ID}}
\newcommand{\LCT}{{\ensuremath{\textrm{LC-}\triangle}}}
\newcommand{\TCT}{{\ensuremath{\textrm{TC-}\triangle}}}
\newcommand{\MAYBE}{\mathtt{maybe}}
\newcommand{\NO}{\mathtt{no}}
\newcommand{\Wrev}{P_{X|U}}
\newcommand{\Ber}{\mathrm{Ber}}
\begin{document}

\title{The Minimal Compression Rate \\for Similarity Identification}
\author{Amir Ingber and Tsachy Weissman%
\thanks{The authors are with the Dept. of Electrical Engineering, Stanford University, Stanford, CA 94305.
Email: \{ingber, tsachy\}@stanford.edu.}
\thanks{This work is supported in part by the NSF Center for Science of Information
under grant agreement CCF-0939370, and by a Google research award}
}
\date{\today}
\maketitle
\markboth{Submitted to IEEE Transactions on Information Theory}{Ingber and Weissman: The Minimal Compression Rate for Similarity Identification}
\begin{abstract}
    Traditionally, data compression deals with the problem of concisely representing a data source, e.g. a sequence of letters, for the purpose of eventual reproduction (either exact or approximate). In this work we are interested in the case where the goal is to answer \emph{similarity queries} about the compressed sequence, i.e. to identify whether or not the original sequence is similar to a given query sequence.

We study the fundamental tradeoff between the compression rate and the reliability of the queries performed on compressed data. For i.i.d. sequences, we characterize the minimal compression rate that allows query answers, that are reliable in the sense of having a vanishing false-positive probability, when false negatives are not allowed. The result is partially based on a previous work by Ahlswede et al. \cite{Ahlswede97}, and the inherently typical subset lemma plays a key role in the converse proof.

We then characterize the compression rate achievable by schemes that use lossy source codes as a building block, and show that such schemes are, in general, suboptimal. Finally, we tackle the problem of evaluating the minimal compression rate, by converting the problem to a sequence of convex programs that can be solved efficiently.
\end{abstract}

\section{Introduction}

Traditionally, data compression deals with concisely representing data, e.g., a sequence of letters, for the purpose of eventual reproduction (either exact or approximate). More generally, one wishes to know \emph{something} about the source from its compressed representation. In this work, we are interested in compression when the goal is to identify whether the original source sequence is \emph{similar} to a given query sequence.

A typical scenario where this problem arises is in database queries. Here, a large database containing many sequences $\{\bx_1,...,\bx_M\}$ is required to answer queries of the sort ``what are the sequences in the database that are close to the sequence $\by$?''. Such a scenario (see Fig.~\ref{fig:DB_original}) appears, for example, in computation biology (where the sequences can be, e.g., DNA sequences), forensics (where the sequences represent fingerprints) and internet search.

\begin{figure}[t]
  \centering
  \includegraphics[width=4in]{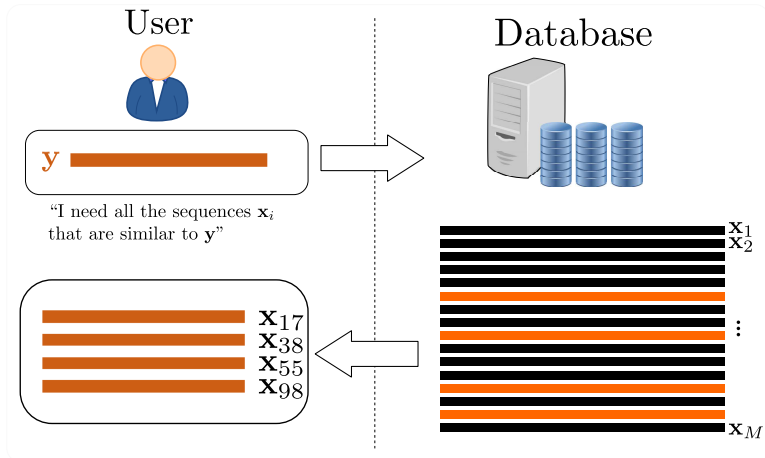}\\
  \caption{Similarity queries in a database.}\label{fig:DB_original}
\end{figure}

Specifically, our interest is in the case where for each sequence $\bx$ in the database we keep a short \emph{signature} $T(\bx)$, and the similarity queries are performed based on $T(\bx)$ and $\by$. Our setting differs from classical compression in that we do not require that the original data be reproducible from the signatures, so the signatures are not meant to replace the original database.
There are many instances where such compression is desirable. For example, the set of signatures can be thought of as a cached version of the original database which, due to its smaller size, can be stored on a faster media (e.g. RAM), or even hosted on many location in order to reduce the burden on the main database. Typically, the user will eventually request the relevant sequences (and only them) from the original database -- see Fig.~\ref{fig:DB_cache}.
\begin{figure}[t]
  \centering
  \includegraphics[width=4in]{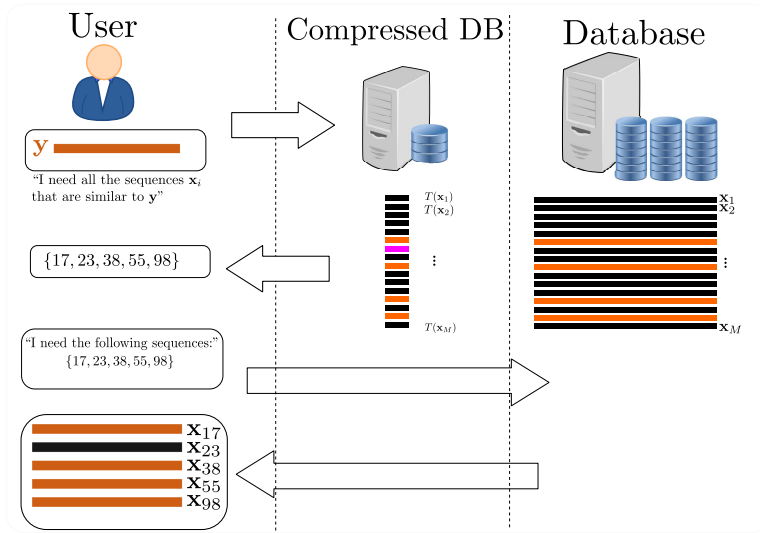}\\
  \caption{Answering similarity queries using a compressed database: first the user receives a set of potential matches, and then asks the original database for the actual sequences. In the example, $\bx_{23}$ is a false positive (FP).
  }\label{fig:DB_cache}
\end{figure}

Naturally, when the queries are answered from the compressed data, one cannot expect to get accurate answers all the time. There are two error events that may occur: the first is a \emph{false positive} (FP), where the query returns a positive answer (``the sequences \emph{are} similar'') but the answer is wrong (the sequences are in fact dissimilar). The second is a \emph{false negative} (FN), when the query returns a negative answer (``the sequences are not similar'') but the sequences are actually similar\footnote{In the statistics literature, the FP and FN events are also known as \emph{type 1} and \emph{type 2} errors, respectively, while in the engineering literature they are
also known as \emph{false alarm} and \emph{misdetection} events, respectively.}. Therefore the interesting tradeoff to consider is between the compression rate (the amount of space required to represent $T(\bx)$) and the reliability of the query answers (measured by the FP and FN probabilities, under an appropriately defined probabilistic model).

The problem was first studied from an information-theoretic viewpoint in the seminal work by Ahlswede et al. \cite{Ahlswede97}. In this work, the source and query sequences are assumed to be drawn i.i.d. from a known distribution, and both false negatives and false positives are allowed. In \cite{Ahlswede97}, the authors first considered the case where the probability of false positives and the probability of false negatives are only required to vanish with the dimension $n$ (in the same spirit of the definition of an achievable rate for communication as a rate for which the error probability can be made to vanish). However, it was shown in \cite{Ahlswede97} that this definition leads to degeneracy: there exist schemes for compression at rates that are arbitrarily close to zero while the two error probabilities vanish. Then, the authors in \cite{Ahlswede97} moved on to consider the case where the FP and FN probabilities are required to vanish \emph{exponentially}, with prescribed exponents $\alpha$ and $\beta$, respectively, and were able to find the optimal compression rate in that setting\footnote{This optimal rate, however, is uncomputable, since the expression depends on an auxiliary random variable with unbounded cardinality.}. We note that this case is atypical in information theory: in the channel coding setting, the highest achievable rate (the channel capacity) is the same, regardless of whether an achievable rate is defined by an error probability vanishing \emph{exponentially} or just vanishing. The same holds for the lowest compression rate for lossy reproduction (the rate-distortion function).

In this paper, we consider the case where no false negatives are allowed. The main motivation is that false negatives cause an \emph{undetected error} in the system where, in contrast, false positives can be easily detected (after retrieving the sequences `flagged' as potential matches, it is easy to filter out the false positives). This is important in several applications, where one cannot compromise on the accuracy of the results (e.g. in a forensic database), but still would like to enjoy the benefits of compression. While it is natural to ask what can be gained when the FN probability is nonzero but `very very small' -- it is important to recall that this probability is based on a probabilistic model of the data, which may not be accurate (in fact, it is rarely the case, especially in source coding settings, where the probabilistic model matches the actual data very closely).

The contributions of the current paper are as follows.
\begin{enumerate}
  \item We find the minimal compression rate for reliable similarity identification with no false negatives. This rate, called the \emph{identification rate} and denoted $R_\ID(D)$, turns out to be the infimal rate at which the ``false-positive'' exponent of \cite{Ahlswede97} is positive. Our result holds for both fixed and variable length compression schemes.

  \item In the case where $\bx$ and $\by$ have the same alphabet, and the similarity measure satisfies the \emph{triangle inequality}, we analyze two schemes for compression that are based on the notion of lossy compression. In those schemes the signature is used for producing a reconstruction $\hat\bx$ of the source, and the decision whether $\bx$ and $\by$ is done according to the distance between $\hat \bx$ and $\by$. We show that those schemes, although simpler for analysis and implementation, attain rates that are generally suboptimal, i.e. strictly greater than $R_\ID(D)$.

  \item The identification rate $R_\ID(D)$ is stated as a non-convex optimization program with an auxiliary random variable. We provide two results that facilitate the computation of $R_\ID(D)$. First, we improve a bound on the cardinality of the auxiliary RV. Then, we propose a method of transforming the said non-convex optimization program into a sequence of \emph{convex} optimization programs, and by that allowing efficient computation of $R_\ID(D)$ for small alphabet sizes. We demonstrate the effectiveness of this approach by calculating $R_\ID(D)$ for several sources.
\end{enumerate}

\medskip

The paper is organized as follows.
In the next section we provide an extended literature survey, which compares the setting discussed in the paper with other ideas, including different hashing schemes. In Sec.~\ref{sec:setting} we formulate the problem, and in Sec.~\ref{sec:MainResults} we state and discuss our main results.
In Sec.~\ref{sec:ProofConverse} we prove the result for the identification rate,
Sec.~\ref{sec:Triangle} contains the analysis of the schemes based on lossy compression, and in Sec.~\ref{sec:computation} we describe the results enabling the computation of the identification rate. Sec.~\ref{sec:Summary} delivers concluding remarks.

\section{Related Literature}\label{sec:litSurvey}

\subsection{Directly related work}
In the current paper we focus on discrete alphabets only, following \cite{Ahlswede97}. A parallel result, with complete characterization of the identification rate (and exponent) for the Gaussian case and quadratic distortion, appears in \cite{ICW13_DCC},\cite{ICW13_IT}. The identification exponent problem was originally studied in \cite{Ahlswede97} for the variable length case, where the resulting exponent depends on an auxiliary random variable with unbounded cardinality. A bound on the cardinality has been obtained recently in \cite{IDexponent}, where the exponent for fixed-length schemes is also found (and is different than that of the variable length schemes, unlike the identification rate -- see Prop.~\ref{prop:VL=FL} below). In the special case of exact match queries (i.e. identification with $D=0$ for Hamming distance), the exponent was studied in \cite{IDexponentExact}.

\subsection{Other work in information theory}
Another closely related work is the one by Tuncel et al. \cite{tuncel2004rate}, where a similar setting of searching in a database was considered. In that work the search accuracy was addressed by a reconstruction requirement with a single-letter distortion measure that is side-information dependent (and the tradeoff between compression and accuracy is that of a Wyner-Ziv \cite{WynerZiv} type). In contrast, in the current paper the search accuracy is measured \emph{directly} by the probability of false positives.

A different line of work geared at identifying the fundamental performance limits of database retrieval includes \cite{OSullivan2002,Willems03}, which characterize the maximum rate of entries that can be reliably identified in a database. These papers were extended in \cite{Westover08,Tuncel09} allowing compression of the database, and in \cite{Tuncel12submitted} to the case where sequence reconstruction is also required. In all of these papers, the underlying assumption is that the original sequences are corrupted by noise before enrolled in the database, the query sequence is \emph{one of those original sequences}, and the objective is to identify which one.
There are two fundamental differences between this line of work and the one in the current paper.
First, in our case the query sequence is random (i.e. generated by nature) and does not need to be a sequence that has already been enrolled in the database.
Second, in our problem we are searching for sequences that are \emph{similar} to the query sequence (rather than an exact match).

\subsection{Hashing and related concepts}

The term \emph{hashing} (see, e.g. \cite{hashingBook}) generally refers to the process of representing a complex data entity with a short signature, or hash. Classically, hashing is used for quickly identifying exact matches, by simply comparing the hash of the source sequence and that of the query sequence. Hashing has been extended to detect membership in sets, a method known as the Bloom Filter \cite{Bloom1970} (with many subsequent improvements, e.g. \cite{Porat09_Opt_Bloom_Filter_matrix}). Here, however, we are interested in similarities, or ``approximate'' matches.

The extension of the hashing concept to similarity search is called Locality Sensitive Hashing (LSH), which is a framework for data structures and algorithms for finding similar items in a given set
(see \cite{AndoniI08} for a survey). LSH trades off accuracy with computational complexity and space, and false negatives are allowed. Several fundamental points are different in our setting.
First, we study the information-theoretic aspect of the problem, i.e. concentrate on space only (compression rate) and ignore computational complexity in an attempt to understand the amount of information \emph{relevant to querying} that can be stored in the short signatures.
Second, we do not allow false negatives, which, as discussed above, are inherent for LSH. Third, in the general framework of LSH (and also in hashing), the general assumption is that the data is fixed and that the hashing functions are random. This means that the performance guarantees are given as low failure probabilities, where the probability space is that of the random functions. However, for a given database, the hashing function is eventually fixed, which means that there always exist source and/or query sequences for which the scheme will \emph{always} fail. In our case, the scheme is deterministic, false negatives never occur (by design), and the probability of false positive depends on the probabilistic assumptions on the data.

Another related idea is that of dimensionality reduction techniques that preserve distances,
namely based on Johnson-Lindenstrauss type embeddings \cite{JohnsonLindenstrauss84}. Such embeddings take a set of points in space, and transform each point to a point in a lower-dimensional space, with a guarantee that the distance between these points is approximately preserved. However, note that such mappings generally depend on the elements in the database --
so that the distance preservation property cannot apply to \emph{any} query element outside the database, making the guarantee for zero false negative impossible without further assumptions. In fact, the original proof of the lemma in \cite{JohnsonLindenstrauss84} results in a guarantee for any two points in space, but this guarantee is probabilistic, and therefore cannot match our setting (similarly to LSH).

The process of compressing a sequence to produce a short signature can be also thought of as a type of \emph{sketching} (see, e.g. \cite{sketchingNotes}), which is a computational framework for succinct data representation that still allows performing different operations with the data.

\subsection{Practical examples of compression for similarity identification}
The idea of using compression for accelerating similarity search in databases is not new. Earlier practical examples include the \emph{VA-file} scheme \cite{VAfile}, which uses scalar quantization of each coordinate of the source sequence in order to form the signature. The VA-file approach demonstrates that compression-based similarity search systems can outperform tree-based systems for similarity search, providing further motivation to study the fundamental tradeoff between compression and search accuracy. The VA-file scheme has been generalized to \emph{vector} quantization in \cite{VQfile}, showing further improvements in both computational time and number of disk I/O operations.

In the machine learning literature, the term `semantic hashing' \cite{semantichashing} refers to a transformation that maps similar elements to bit strings that have low Hamming distance. Extensions of this concept include \cite{spectralhashing,ragiskyHashing}. We comment that in neither of these papers there is a guarantee for zero false negatives, as in the setting considered in the current paper.

We emphasize again that the results in the current paper are concerned with the amount of compression only, and ignore the computational complexity (as is typical for information theoretical results). Nevertheless, the fundamental limits, such as $R_\ID(D)$, characterize the playing field at which practical schemes should be evaluated.

\section{Problem Formulation}\label{sec:setting}
\subsection{Notation}
Throughout this paper, boldface notation $\bx$ denotes a column vector of elements $[x_1,...x_n]^T$. Capital letters denote random variables (e.g. $X,Y$), and $\bX,\bY$ denote random vectors. We use calligraphic fonts (e.g. $\cX,\cY$) to represent the finite alphabets. $\log(\cdot)$ denotes the base-$2$ logarithm, while $\ln(\cdot)$ is used for the usual natural logarithm.

We measure the similarity between symbols with an arbitrary per-letter distortion measure $\rho:\cX \times \cY \ra \Reals_+$. For length $n$ vectors, the distortion is given by
\begin{equation}
  d(\bx,\by) \triangleq \frac{1}{n} \sum_{i=1}^n \rho(x_i,y_i).
\end{equation}
We say that $\bx$ and $\by$ are $D$-\emph{similar} when $d(\bx,\by)\leq D$, or simply \emph{similar} when $D$ is clear from the context.

\subsection{Identification Systems}
A rate-$R$ identification system $(T,g)$ consists of a \emph{signature assignment}
\begin{align}
T : \cX^n \ra \{1,2,\dots,2^{nR}\}
\end{align}
and a \emph{decision function}
\begin{align}
g : \{1,2,\dots,2^{nR}\}\times \cY^n \ra \{\NO, \MAYBE\}.
\end{align}

A system $(T,g)$ is said to be $D$-\emph{admissible}, if for any $\bx,\by$ satisfying $d(\bx,\by)\leq D$, we have
\begin{equation}\label{eqn:maybe}
  g(T(\bx),\by) = \MAYBE.
\end{equation}
This notion of $D$-{admissibility} motivates the use of ``$\NO$" and ``$\MAYBE$" in describing the output of $g$:
\begin{itemize}
\item If $g(T(\bx),\by) = \NO$, then $\bx$ and $\by$ can not be $D$-similar.
\item If $g(T(\bx),\by) = \MAYBE$, then $\bx$ and $\by$ are possibly $D$-similar.
\end{itemize}
Stated another way, a $D$-{admissible} system $(T,g)$ does not produce false negatives. Thus, a natural figure of merit for a $D$-{admissible} system $(T,g)$ is the frequency at which false positives occur (i.e., where $g(T(\bx),\by) = \MAYBE$ and $d(\bx,\by)>D$). To this end, let $P_X$ and $P_Y$ be probability distributions on $\cX,\cY$ respectively, and assume
that the vectors $\bX$ and $\bY$ are independent of each other and drawn i.i.d.\ according to $P_X$ and $P_Y$ respectively.  Define the \emph{false positive event}
\begin{align}
\mathcal{E} = \{ g(T(\bX),\bY) = \MAYBE,  d(\bX,\bY) > D\}, \label{FPeventDefn}
\end{align}
and note that, for any $D$-admissible system $(T,g)$, we have
\begin{align}
\Pr \{ g(T(\bX),\bY) = \MAYBE \}
&= \Pr \{ g(T(\bX),\bY) = \MAYBE | d(\bX,\bY)\leq D\}\Pr\{ d(\bX,\bY)\leq D\} \notag\\
&\quad+ \Pr\{ g(T(\bX),\bY) = \MAYBE,  d(\bX,\bY) > D\} \label{eqn:FP_maybeRelationPre}\\
&=\Pr\{ d(\bX,\bY)\leq D\} + \Pr\{\mathcal{E}\}, \label{eqn:FP_maybeRelation}
\end{align}

where \eqref{eqn:FP_maybeRelation} follows since $\Pr \{ g(T(\bX),\bY) = \MAYBE | d(\bX,\bY)\leq D\}=1$ by $D$-admissibility of $(T,g)$.  Since $\Pr\{ d(\bX,\bY)\leq D\}$ does not depend on what scheme is employed, minimizing the false positive probability $\Pr\{\mathcal{E}\}$ over all $D$-admissible schemes $(T,g)$ is equivalent to minimizing $\Pr \{ g(T(\bX),\bY) = \MAYBE \}$. Also note, that the only interesting case is when $\Pr\{d(\bX,\bY)\leq D \} \ra 0$ as $n$ grows, since otherwise almost all the sequences in the database will be similar to the query sequence, making the problem degenerate (since almost all the database needs to be retrieved, regardless of the compression). In this case, it is easy to see from \eqref{eqn:FP_maybeRelationPre} that $\Pr\{\mathcal{E}\}$ vanishes if and only if the conditional probability
\begin{equation}
  \Pr \{ g(T(\bX),\bY) = \MAYBE | d(\bX,\bY)> D\}
\end{equation}
vanishes as well.
In view of the above, we henceforth restrict our attention to the behavior of $\Pr \{ g(T(\bX),\bY) = \MAYBE \}$.  In particular, we study the tradeoff between the rate $R$ and $\Pr \{ g(T(\bX),\bY) = \MAYBE \}$.
This motivates the following definitions:
\begin{defn}\label{def:achRate}
    For given distributions $P_X, P_Y$ and a similarity threshold $D$, a rate $R$ is said to be $D$-\emph{achievable} if there exists a sequence of admissible schemes $(T^{(n)},g^{(n)})$ with rates at most $R$, satisfying
    \begin{equation}
      \lim_{n\ra\infty} \Pr\left\{g^{(n)}\left(T^{(n)}(\bX),\bY \right) = \MAYBE\right\} = 0. \label{eqn:reliable}
    \end{equation}
\end{defn}
\begin{defn}\label{def:RID}
    For given distributions $P_X, P_Y$ and a similarity threshold $D$, the \emph{identification rate}  $R_\ID(D,P_X,P_Y)$ is the infimum of $D$-achievable rates.  That is,
   \begin{align}
     R_\ID(D) \triangleq \inf \{ R : R~ \mbox{is $D$-achievable}\},
   \end{align}
   where an infimum over the empty set is equal to $\infty$.
\end{defn}

It is not hard to see that $R_\ID(D)$ must be nondecreasing. To see this, note that any sequence of schemes at rate $R$ that achieve vanishing probability of $\MAYBE$ for similarity threshold $D$, is also admissible for any threshold $D'\leq D$, so if $R$ is $D$-achievable, then it is also $D'$-achievable. In other words, a higher similarity threshold is a more difficult task (i.e. requires higher compression rate). Therefore, analogously to the definition of $R_\ID(D)$, we define $D_\ID(R)$ as the maximal achievable similarity threshold for fixed rate schemes.

\medskip
The definitions of an achievable rate and the identification rate are in the same spirit of the rate distortion function (the rate above which a vanishing probability for excess distortion is achievable),
and also in the spirit of the channel capacity (the rate below which a vanishing probability of error can be obtained). See, for example, Gallager~\cite{GallagerInfoTheoryBook}.

\subsection{Variable Length Identification Systems}
In \cite{Ahlswede97}, the authors study a similar setting, where the compression is of \emph{variable length}. In that spirit, we define the corresponding  variable-length quantities:

A \emph{variable length} identification system $(T_{vl},g_{vl})$ consists of a signature assignment
\begin{align}
T_{vl} : \cX^n \ra B,
\end{align}
where $B\subseteq \{0,1\}^*$ is a prefix-free set,
and a decision function
\begin{align}
g_{vl} : B\times \cY^n \ra \{\NO, \MAYBE\}.
\end{align}
The rate of the system is given by
\begin{equation}
  R = \frac{1}{n}\EE\left[\mbox{length}\left(T_{vl}(\bX)\right)\right].
\end{equation}
As before, a scheme is said to be admissible, if for any $\bx,\by$ satisfying $d(\bx,\by)\leq D$, we have
\begin{equation}\label{eqn:maybeVL}
  g_{vl}(T_{vl}(\bx),\by) = \MAYBE.
\end{equation}

Analogously to Definitions~\ref{def:achRate} and \ref{def:RID}, we define the variable-length identification rate, denoted $R_\ID^{vl}$ as the infimum of achievable rates for variable length identification systems. Clearly, any rate $R$ that is achievable with fixed-length schemes is also achievable with variable length schemes, and therefore $R_\ID \geq R_\ID^{vl}$. It turns out that both quantities are actually equal:
\begin{prop}\label{prop:VL=FL}
    The identification rate for variable rate is the same as that for fixed rate, i.e.
    \begin{equation}
      R_\ID(D) = R_\ID^{vl}(D)
    \end{equation}
\end{prop}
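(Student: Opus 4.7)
The inequality $R_\ID^{vl}(D) \leq R_\ID(D)$ is immediate, since any rate-$R$ fixed-length scheme is itself a variable-length scheme of rate at most $R+1/n$ (the fixed-length binary code over $2^{nR}$ indices is trivially prefix-free). The plan for the opposite inequality $R_\ID(D) \leq R_\ID^{vl}(D)$ is a \emph{multi-block truncation} argument: take any VL scheme of rate $R_n$ with MAYBE probability $\eta_n \to 0$, run it on $k_n$ independent sub-blocks, and pad the concatenated prefix-free signature to a fixed length, reserving one extra codeword as an ``overflow'' symbol.

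Concretely, fix $R > R_\ID^{vl}(D)$ and take VL schemes $(T_{vl}^{(n)}, g_{vl}^{(n)})$ of rate $R_n \leq R$ with MAYBE probability $\eta_n \to 0$. As a WLOG reduction, I would first bound $L_{\max}^{(n)} \leq n\log|\cX| + O(1)$: any codeword longer than this can be shortened and rebalanced against Kraft's inequality, at a rate cost of $O(1/n)$ and with no change to admissibility or to $\eta_n$. Now pick $\delta>0$ and set $k_n := \lfloor 1/\sqrt{\eta_n}\rfloor$, so that $k_n\to\infty$ and $k_n\eta_n\to 0$. At block length $N_n := k_n n$, the FL scheme parses $\bx = (\bx^{(1)},\ldots,\bx^{(k_n)})$ into $k_n$ sub-blocks of length $n$, forms the concatenated signature $T_{vl}(\bx^{(1)}) \| \cdots \| T_{vl}(\bx^{(k_n)})$, and pads to length $N_n(R+\delta)$ if the concatenation fits; otherwise it emits a reserved ``overflow'' codeword. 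The decoder $g_N$ outputs $\MAYBE$ on overflow, and otherwise parses the payload back into the $k_n$ individual VL signatures (using prefix-freeness) and returns $\MAYBE$ iff $g_{vl}^{(n)}(T_{vl}^{(n)}(\bx^{(i)}), \by^{(i)}) = \MAYBE$ for at least one $i$.

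Admissibility follows from the elementary identity $d_{N_n}(\bx,\by) = \frac{1}{k_n}\sum_i d_n(\bx^{(i)},\by^{(i)})$: whenever $d_{N_n}\leq D$, at least one sub-block satisfies $d_n(\bx^{(i)},\by^{(i)})\leq D$ and the VL admissibility forces that sub-block's decision, and hence $g_N$, to output $\MAYBE$ (the overflow branch is admissible by default). The FL rate is $R + \delta + O(1/N_n) \to R+\delta$, while the MAYBE probability is at most $\Pr\{\text{overflow}\} + k_n\eta_n$: the union-bound term vanishes by the choice of $k_n$, and Chebyshev (with per-block variance $\leq (n\log|\cX|)^2$ inherited from the WLOG reduction) gives $\Pr\{\text{overflow}\} \leq (\log|\cX|)^2/(k_n \delta^2) \to 0$. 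Letting $\delta\downarrow 0$ and $R\downarrow R_\ID^{vl}(D)$ yields $R_\ID(D)\leq R_\ID^{vl}(D)$. The main obstacle is that a \emph{single-block} truncation fails -- Markov/Chebyshev on the length of one VL codeword gives an overflow probability bounded away from zero at any rate within a constant of $R$ -- so one really needs the LLN over $k_n$ sub-blocks to concentrate the total length at $k_n n R_n$; the coupled choice $k_n = \Theta(1/\sqrt{\eta_n})$ is what simultaneously drives the overflow and the union bound $k_n\eta_n$ to zero, and the WLOG reduction bounding $L_{\max}^{(n)} = O(n)$ is what makes the Chebyshev estimate uniform across $n$.
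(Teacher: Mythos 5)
Your argument is correct and is essentially the paper's proof in Appendix~\ref{app:VL=FL}: concatenate many independent copies of the VL scheme on sub-blocks, truncate the combined prefix-free signature to a fixed budget with a reserved overflow/erasure symbol, declare $\MAYBE$ iff any sub-block (or the overflow) does, and control the overflow probability by Chebyshev and the per-block errors by a union bound. The only cosmetic differences are that the paper fixes the number of sub-blocks $M$ as a function of $\eps$ and lets $n\to\infty$ afterward (implicitly diagonalizing over $\eps$), while you couple $k_n=\Theta(1/\sqrt{\eta_n})$ directly to the block index so both error terms vanish at once, and you make explicit the $L_{\max}^{(n)}=O(n)$ reduction that keeps the per-block variance uniformly bounded across $n$ -- a small technical step the paper leaves implicit with its informal bound $\mathrm{Var}(L_m)\leq R^2$.
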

The proof of the proposition, given in detail in Appendix~\ref{app:VL=FL}, is based on a simple meta-argument, that essentially says that any variable length scheme can be used as a building block to construct a fixed-length scheme. The argument is based on the concatenation of several input sequences into a larger one, and then applying a variable length scheme to each of the sequences. This will result in a variable length scheme, but with high probability, most of the signatures will have overall length bounded by some fixed length, enabling the conversion to a fixed-length scheme.

There are two direct consequences of Prop.~\ref{prop:VL=FL} that will enable the evaluation of $R_\ID(D)$: In order to show that a given rate is achievable with fixed-rate schemes, it is possible to consider variable length schemes, as in \cite{Ahlswede97}. On the other hand, in order to prove a converse, it suffices to consider fixed-length schemes, slightly simplifying the proof. This is the path we take in the paper.

\section{Main Results}\label{sec:MainResults}

\subsection{The Identification Rate}
Define the following distance between distributions $P_X,P_Y$:
\begin{equation}
   \bar\rho(P_X,P_Y) \triangleq \min\EE[\rho(X,Y)],
\end{equation}
where the minimization is w.r.t. all random variables $X,Y$ with marginal distributions $P_X$ and $P_Y$, respectively. This distance goes by many names, such as the Wasserstein (or Vasershtein) distance, the Kantorovich distance and also the Transport distance (see \cite{gray1975rhoBar}, and also \cite{villani2009optimal} for a survey).

Define the (informational) identification rate as
\begin{align}\label{eqn:R_ID_info}
    \bar R_\ID(D) &=\min_{P_{U|X}: \sum_{u\in\cU} P_U(u) \bar\rho(P_{X|U}(\cdot|u),P_Y)\geq D} I(X;U),
\end{align}
where $U$ is any random variable with finite\footnote{The cardinality of $\cU$ can be taken as $|\cX|+2$, according to \cite[Lemma 3]{Ahlswede97}. However, in the sequel we improve the cardinality bound, see Subsection~\ref{ssec:comp}.} alphabet $\cU$, that is independent of $Y$.

It follows from \cite[Theorem 2]{Ahlswede97} that when $R>\bar R_\ID(D)$, there exist (variable-length) identification schemes with FP probability that vanishes exponentially with $n$ (the explicit connection to the limiting rate $\bar R_\ID(D)$ is made in \cite[Eq. (2.21)]{Ahlswede97}). This fact, combined with Prop.~\ref{prop:VL=FL} above implies that $R_\ID(D) \leq \bar R_\ID(D)$. However, it remains open whether $R_\ID(D)$ is even strictly positive. In the related case studied in \cite{Ahlswede97}, where the probability of both FN and FP events are required to vanish, it was shown \cite[Thm. 1]{Ahlswede97} that the achievable rate in this sense is equal to zero, so according to \cite{Ahlswede97}, ``the only problem left to investigate is the case tradeoff between the rate $R$ and the two error exponents [of the FP and FN events]''. Our first result below shows that the restriction to the case of no FN (also called a `one-sided error') is, in fact, very interesting.

\begin{theorem}[The Identification Rate Theorem]\label{thm:RID}
    \begin{equation}
      R_\ID(D) = \bar R_\ID(D),
    \end{equation}
    i.e. the identification rate is given in \eqref{eqn:R_ID_info}. Moreover, if $R < R_\ID(D)$, then the probability of $\MAYBE$ converges to $1$ exponentially fast.
\end{theorem}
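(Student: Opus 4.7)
The inequality $R_\ID(D)\le\bar R_\ID(D)$ is already essentially in hand: by \cite[Thm.~2, Eq.~(2.21)]{Ahlswede97} there exist, for every $R>\bar R_\ID(D)$, variable-length schemes whose false-positive probability decays exponentially, and Proposition~\ref{prop:VL=FL} turns these into fixed-length schemes at the same rate. So my plan is to focus on the strong converse: if $R<\bar R_\ID(D)$, then every sequence of $D$-admissible rate-$R$ schemes has $\Pr\{\MAYBE\}\to 1$ exponentially fast. By Proposition~\ref{prop:VL=FL} again, I may restrict attention to fixed-length schemes throughout.

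My approach is a ``bin/neighborhood'' analysis driven by the Inherently Typical Subset Lemma (ITSL) of \cite[Lem.~3]{Ahlswede97}. Let $(T,g)$ be a rate-$R$ scheme and set $B_i=T^{-1}(i)$ for $i=1,\ldots,2^{nR}$. By $D$-admissibility, $g(i,\by)=\MAYBE$ whenever $\by$ lies in the $D$-neighborhood $N_D(B_i):=\{\by:\exists\,\bx\in B_i,\ d(\bx,\by)\le D\}$, and independence of $\bX,\bY$ gives
\begin{equation*}
\Pr\{\MAYBE\}\ \ge\ \sum_{i} P_X^n(B_i)\,P_Y^n(N_D(B_i)).
\end{equation*}
A Markov step discards bins with $P_X^n(B_i)<2^{-n(R+\delta)}$ at exponentially small cost, so it suffices to show $P_Y^n(N_D(B))\to 1$ exponentially for every ``heavy'' bin. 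For such a $B$, I invoke the ITSL to extract a subset $B'\subseteq B$ together with a sequence $\bu$ and a joint type $P_{XU}$ satisfying $B'\subseteq T_{X\mid U}(\bu)$ and $I(X;U)\le R+\delta+o(1)$. Choosing $\delta$ small so that $R+\delta<\bar R_\ID(D)$, the defining minimization~\eqref{eqn:R_ID_info} forces the extracted $U$ to violate its Wasserstein constraint, so $\sum_u P_U(u)\,\bar\rho(P_{X\mid U=u},P_Y)<D-\eta$ for some fixed $\eta>0$. Picking, for each $u$, an optimal coupling between $P_{X\mid U=u}$ and $P_Y$ assembles a joint $P_{XYU}$ with the correct marginals, with $Y\perp U$, and with $\EE[\rho(X,Y)]<D-\eta$.

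The decisive step is then a method-of-types argument driven by this $P_{XYU}$. Since $\bY\sim P_Y^n$ is independent of the (fixed) $\bu$, the empirical joint type of $(\bY,\bu)$ concentrates on $P_Y\otimes P_U$ with exponential speed. For any such typical $\by$, standard conditional-type estimates under $P_{XYU}$ furnish on the order of $2^{nH(X\mid YU)}$ sequences $\bx\in T_{X\mid U}(\bu)$ jointly typical with $(\by,\bu)$, each with empirical distortion below $D-\eta/2$. A pigeonhole/counting argument then places at least one of these candidates inside $B'$ (using that $B'$ is a large subset of $T_{X\mid U}(\bu)$ via the ITSL), whence $\by\in N_D(B')\subseteq N_D(B)$. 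Summing the exceptional probabilities yields $\Pr\{\MAYBE\}\ge 1-e^{-\Omega(n)}$, completing the strong converse and thereby the theorem.

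The main obstacle I anticipate is precisely the quantitative meshing of the ITSL with the method-of-types covering: the former produces $B'$ as only a ``large enough'' subset of $T_{X\mid U}(\bu)$, while the latter yields a pool of jointly typical candidate $\bx$'s whose locations we do not directly control. Handling this requires the sharpest form of the ITSL, a uniform bound on the cardinality of $\cU$ so that all extractions live in a compact family, uniformity of the $o(1)$ corrections across bins, and genuine use of the slack $\eta>0$, which is needed because $D$-admissibility is a worst-case condition whereas the Wasserstein coupling only controls distortion in expectation.
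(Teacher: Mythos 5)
Your overall scaffolding is right and matches the paper: reduce to fixed-length schemes via Prop.~\ref{prop:VL=FL}, decompose over bins $T^{-1}(i)$, discard light bins, invoke the Inherently Typical Subset Lemma (ITSL) on the heavy ones, and pick the channel $V$ to be the Wasserstein-optimal coupling so that $U$ and $Y$ become independent. But there are two genuine gaps in the middle that break the argument.

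First, you misread the ITSL. It does not produce a single fixed reference sequence $\bu$ with $B'\subseteq T_{X|U}(\bu)$. It produces a \emph{causal mapping} $\phi:B'\to\cU_m^n$, so each $\bx\in B'$ gets its own $\phi(\bx)$, together with a single joint type $Q$ such that $\type_{\bx,\phi(\bx)}=Q$ for every $\bx\in B'$. An inherently typical set is a strict generalization of a conditional type class; it is generally not contained in any single $V$-shell. Your concentration step (``the empirical joint type of $(\bY,\bu)$ concentrates on $P_Y\otimes P_U$'') and your subsequent counting both assume a fixed $\bu$, so they do not apply as written. The paper instead works with the union of conditional $Y$-shells $F=\bigcup_{\bx\in\tilde A}T_{Y_0|X_0U_0,\delta_n}(\bx,\phi(\bx))$ and lower-bounds $\Pr\{\bY\in F\}$ via an entropy/log-sum argument (Lemma~\ref{lem:LogSumLemma}) that is specifically tailored to the causal mapping structure.

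Second, and more fundamentally, the pigeonhole step does not give what you claim, and nothing in your sketch produces a strong converse. Even granting a fixed $\bu$, the pool of $\bx$'s jointly typical with $(\by,\bu)$ has size on the order of $2^{nH(X|YU)}$, which is exponentially smaller than $|T_{X|U}(\bu)|\approx 2^{nH(X|U)}$ whenever $I(X;Y|U)>0$ (which is exactly the interesting regime, since the coupling forces $\EE[\rho(X,Y)]<D$). Meanwhile, the ITSL only guarantees $|\tilde A|\geq |A|\cdot 2^{-o(n)}$, so the complement of $\tilde A$ in its ambient type class can have essentially full measure. You therefore cannot conclude that the exponentially small candidate pool must hit $\tilde A$. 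What the paper actually proves at this stage is that $\Pr\{\bY\in F\}$ does not vanish exponentially (the exponent in \eqref{eqn:ExponentThatNeedsToBeSmall} can be driven arbitrarily close to $\tfrac{\log^2 m}{m}$), which is precisely the strength of the argument in \cite{Ahlswede97} and, by itself, only yields an error-exponent converse. The missing ingredient is the \emph{blowing-up lemma}: the paper uses the distortion slack $\eps$ to show $\Gamma_H^{\eps'}(F)\subseteq\Gamma^D(\tilde A)$ with $\eps'=\eps/\rho_{\max}$, and then applies Lemma~\ref{lem:BlowingUp} to amplify the ``not exponentially small'' bound on $\Pr\{\bY\in F\}$ into $\Pr\{\bY\in\Gamma_H^{\eps'}(F)\}\to1$ exponentially. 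Your proposal has no such amplification mechanism, so even after the earlier issues are patched, it cannot reach the theorem's conclusion that $\Pr\{\MAYBE\}\to1$ exponentially; you would be stuck at the weaker claim that the FP exponent is zero.
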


A few comments are in order regarding the theorem, whose proof is given in detail in Sec.~\ref{sec:ProofConverse}:
\begin{itemize}
  \item Theorem~\ref{thm:RID} states that the case where the probability of FN events is \emph{equal} to zero is inherently different from the case first discussed in \cite[Thm. 1]{Ahlswede97}, where the FN probability is only required to vanish with $n$: here the rate problem is not degenerate (since the minimal achievable rate question does not always give zero), and is in fact, more along the lines of the classical results in information theory, such as channel capacity and classical rate distortion compression.
  \item As discussed above, the direct part of the theorem follows from \cite{Ahlswede97} and Prop.~\ref{prop:VL=FL}. Nevertheless, in Sec.~\ref{sec:ProofConverse} we shall also outline how to prove the achievability part directly, based on a version of the type covering lemma.
  \item The techniques used in \cite{Ahlswede97} for proving a converse result on the error exponents are not strong enough for proving the converse for Thm.~\ref{thm:RID}. In the proof here, we utilize some of the tools developed in \cite{Ahlswede97}, namely the \emph{inherently typical subset lemma}, and augment them with the \emph{blowing-up lemma} (\cite{marton1996bounding}, see also \cite[Lemma 12]{RaginskySasonConcentration}). The purpose of the blowing up lemma in this context is to take an event whose probability is exponentially small, but with a very small exponent, and transform it to a related event, whose probability is very close to 1.  See Sec.~\ref{sec:ProofConverse} for details.
\end{itemize}

\subsection{Special Case: Distortion Measures Satisfying the Triangle Inequality}

Consider the special case where $\cX = \cY$, and for all $x,y,z \in \cX$, we have
\begin{equation}\label{eqn:triangleInequality}
  \rho(x,y) + \rho(y,z) \geq \rho(x,y).
\end{equation}
In other words, the distortion measure satisfies the \emph{triangle inequality}, which is a common property of distance / similarity measures. For simplicity, also assume that the measure is symmetric, i.e. that for all $x,y\in\cX$, $\rho(x,y)= \rho(y,x)$ (later on we discuss the generalization to the non-symmetric case). In this case there are intuitive compression schemes that naturally come to mind. The main idea is to use the compressed representation for reconstructing an approximation $\hat\bx$ of the source, and then to use this reconstruction to decide whether $\bx$ and $\by$ are similar or not. For example, suppose that the source was indeed reconstructed as $\hat\bx$, and also assume that we know the value of $d(\bx,\hat\bx)$ (adding this value to the compressed representation of $\hat\bx$ is a negligible addition to the rate). Consider the following decision rule, based on the pair $(\hat\bx,d(\bx,\hat\bx))$:
\begin{equation}\label{eqn:triangleDecisionRule}
  g((\hat\bx,d(\bx,\hat\bx)),\by)= \left\{
  \begin{array}{ll}
    \NO, & \hbox{$d(\hat\bx,\by) > d(\bx,\hat\bx) + D$;} \\
    \MAYBE, & \hbox{otherwise.}
  \end{array}
\right.
\end{equation}
This rule is admissible because in cases where $\bx$ and $\by$ are $D$-similar, it follows by the triangle inequality that
\begin{align}
    d(\hat\bx,\by)
    &\leq d(\hat\bx,\bx) + d(\bx,\by)\\
    &\leq d(\hat\bx,\bx) + D,
\end{align}
resulting in the decision function in \eqref{eqn:triangleDecisionRule} returning a $\MAYBE$. The process is illustrated in Fig.~\ref{fig:triangle}.

\begin{figure}
  \centering
  \includegraphics[width=4in]{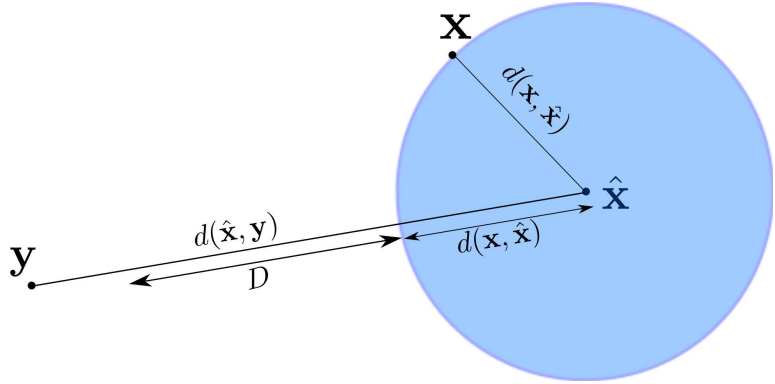}\\
  \caption{Illustration of the triangle-inequality decision rule. If $d(\hat\bx,\by) > D + d(\bx,\hat\bx)$, then it is certain that $d(\bx,\by) > D$, and we can therefore safely declare that $\bx$ and $\by$ are not similar.}\label{fig:triangle}
\end{figure}

The remaining question is, then, how to choose the lossy representation $\hat\bx$, and whether this results in an optimal scheme (i.e. whether the optimal $R_\ID(D)$ is achieved). We survey two schemes based on the triangle inequality principle, and discuss the compression rate achievable with each scheme. In general, neither can be shown to achieve $R_\ID(D)$.

\smallskip

\underline{The naive scheme: \LCT\ (Lossy Compression signatures and triangle ineq. decision rule)}

In this scheme, we use standard lossy compression in order to represent $\hat\bx$ with fixed rate $R$, i.e. we optimize for a reconstruction that minimizes $d(\bx,\hat\bx)$.
When the compression rate is $R$, it is known that the attained distortion $d(\bx,\hat\bx)$ is close, with high probability and for long enough sequences, to the distortion-rate function $D(R)$.

The full details of the $\LCT$ scheme are given in Section~\ref{sec:Triangle}, along with the proof of the following theorem, which characterizes the similarity threshold supported by any given compression rate:
\begin{theorem}\label{thm:Naive}
    Any similarity threshold below $D^\LCT_\ID(R)$ can be attained with a \LCT\ compression scheme of rate $R$, where
    \begin{equation}\label{eqn:DIDLCT}
      D^\LCT_\ID(R) \triangleq \EE[\rho(\hat X,Y)]-D(R),
    \end{equation}
    where $D(R)$ is the classical distortion-rate function of the source $X$, and $\hat X$, which is independent of $Y$, is distributed according to the marginal distribution of the $D(R)$ achieving distribution (if there are several $D(R)$-achieving distributions, take one that maximizes $\EE[\rho(\hat X,Y)]$).
\end{theorem}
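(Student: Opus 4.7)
The plan is to construct the signature as the concatenation of (i) an index identifying a lossy reconstruction $\hat\bx$ drawn from a fixed-composition codebook whose codewords all have type close to $P_{\hat X}^*$---the marginal of a $D(R)$-achieving test channel chosen to maximize $\EE[\rho(\hat X,Y)]$---and (ii) a quantized version of $d(\bx,\hat\bx)$, which takes at most polynomially many distinct values and therefore contributes only $O(\log n / n)$ to the overall rate. The decision rule is \eqref{eqn:triangleDecisionRule}, whose admissibility was already verified via the triangle inequality. It remains to show that the probability of $\MAYBE$ vanishes whenever $D < D^\LCT_\ID(R)$.

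For the codebook, I would appeal to the standard type-covering lemma: for every $\eps > 0$ and all $n$ large enough there is a set $\cC \subseteq \hat\cX^n$ of size at most $2^{n(R+\eps)}$, all of whose members have (approximate) type $P_{\hat X}^*$, such that with probability tending to one under $\bX \sim P_X^{\otimes n}$ some $\hat\bx \in \cC$ satisfies $d(\bX,\hat\bX) \leq D(R) + \eps$ (take $\hat\bX$ to be any such covering codeword, or a fixed default otherwise). Let $\cA_n$ be the high-probability event on which covering succeeds and the empirical type of $\hat\bX$ lies within $\delta$ of $P_{\hat X}^*$. On $\cA_n$, the $\MAYBE$ event forces
\begin{align*}
d(\hat\bX,\bY) \leq d(\bX,\hat\bX) + D \leq D(R) + D + \eps.
\end{align*}
Because $\bY$ is independent of $(\bX,\hat\bX)$ and i.i.d.\ $P_Y$, conditional on any $\hat\bX = \hat\bx$ of type within $\delta$ of $P_{\hat X}^*$, the sum $d(\hat\bx,\bY) = \tfrac{1}{n}\sum_i \rho(\hat x_i, Y_i)$ is an average of independent bounded random variables whose conditional mean is within $O(\delta)$ of $\EE[\rho(\hat X, Y)]$; Hoeffding's inequality then shows $d(\hat\bx,\bY)$ lies within $\eps$ of $\EE[\rho(\hat X, Y)]$ except with exponentially small probability. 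Combining these, $\Pr\{\MAYBE\}$ vanishes (exponentially) whenever $D(R) + D + 2\eps < \EE[\rho(\hat X,Y)]$, and since $\eps$ is arbitrary any $D < \EE[\rho(\hat X,Y)] - D(R) = D^\LCT_\ID(R)$ is attainable.

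The main subtlety, and the reason I would use a fixed-composition codebook rather than random i.i.d.\ $P_{\hat X}^*$ generation, is that the proof needs the encoder's \emph{actual} output $\hat\bX$ to have empirical type concentrated on $P_{\hat X}^*$, so that $\tfrac{1}{n}\sum_i \EE_{P_Y}[\rho(\hat X_i, Y)]$ equals $\EE[\rho(\hat X, Y)]$ up to $o(1)$; with type-class codewords this is automatic. The remaining bookkeeping---absorbing the $O(\log n/n)$ overhead of transmitting $d(\bX,\hat\bX)$, declaring $\MAYBE$ on the vanishing event that covering fails (which preserves admissibility), and shaving $R+\eps$ back to $R$ via continuity of $D(\cdot)$---is routine and does not move the threshold $D^\LCT_\ID(R)$.
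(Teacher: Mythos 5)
Your proof is correct, and it takes a somewhat different route from the paper in the final step. Both proofs use the type-covering lemma (the paper's Lemma~\ref{lem:covering}) to build a fixed-composition codebook whose members have type close to $P_{\hat X}^*$, and both use the triangle-inequality decision rule with the same admissibility argument. Where you diverge is in how you bound the residual $\MAYBE$ probability: the paper fixes the type of $\bY$, counts the joint types $(\hat X, Y)$ with $\EE[\rho(\hat X,Y)] \le D + D(R) + \eps$, obtains an exponent $\min I(Y;\hat X)$, and argues this exponent is strictly positive because the only way it could vanish is if some \emph{independent} pair $(\hat X, Y)$ met the distortion threshold, contradicting $D < D^\LCT_\ID(R)$. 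You instead observe that conditional on $\hat\bX = \hat\bx$, the quantity $d(\hat\bx,\bY)$ is an average of $n$ independent bounded terms with mean near $\EE[\rho(\hat X,Y)]$, and invoke Hoeffding to conclude the $\MAYBE$ event has exponentially small probability. The concentration argument is more elementary and avoids the type-enumeration bookkeeping; the paper's method-of-types version is more aligned with its other proofs and makes the exponent structure visible (which matters elsewhere in the paper when error exponents are discussed). Your observation about why fixed composition is the right choice---so that the conditional mean of $d(\hat\bx,\bY)$ is pinned to $\EE_{P_{\hat X}^*\times P_Y}[\rho(\hat X,Y)]$ up to $O(\delta)$---is exactly the point that justifies the construction in both proofs, and the handling of the covering-failure event and the $O(\log n / n)$ overhead for transmitting $d(\bx,\hat\bx)$ matches the paper's erasure-symbol device. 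No gaps.
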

\smallskip

We denote by $R_\ID^\LCT(D)$ the inverse function of $D^\LCT_\ID(R)$, i.e. the compression rate that is achieved for a similarity threshold $D$. This scheme is suboptimal in general, i.e. there are cases for which $R^\LCT_ID(D) > R_\ID(D)$, but in some cases they are equal. For example, the symmetric binary source with Hamming distortion is one case in which this naive scheme is optimal. This case is discussed in detail in \cite{IdoiaAllerton2013}, where an actual scheme is implemented based on lossy compression algorithms.

\bigskip

\underline{An improved scheme: \TCT\ (``Type Covering'' signatures and triangle ineq. decision rule)}

The expression in \eqref{eqn:DIDLCT} gives rise to the following intuitive idea: in the distortion rate case, we wish to minimize the distortion, with a constraint on the mutual information (that controls the compression rate). The free variable in the optimization is the transition probability $P_{\hat X|X}(\hat x | x)$. So far this is in agreement with \eqref{eqn:DIDLCT}, as we wish that the similarity threshold will be maximized. However, the expectation term in \eqref{eqn:DIDLCT} also depends on the transition probability $P_{\hat X|X}(\hat x | x)$, raising the question whether both terms should be optimized together. The answer to this question is positive. The key step is to use a general type covering lemma for generating $\hat\bx$ (using a distribution that does not necessarily minimize the distortion between $X$ and $\hat X$). This idea is made concrete in the following theorem (whose proof is given in section ~\ref{sec:Triangle}):
\begin{theorem}\label{thm:LessNaive}
    Any similarity threshold below $D^\TCT_\ID(R)$ can be attained by a \TCT\ compression scheme of rate $R$, where
    \begin{equation}\label{eqn:DIDTCT}
      D^\TCT_\ID(R) \triangleq \max_{P_{\hat X|X} : I(X;\hat X) \leq R} \EE[\rho(\hat X,Y)]-\EE[\rho(X,\hat X)],
    \end{equation}
where on the RHS, $\hat X$ and $Y$ are independent.
\end{theorem}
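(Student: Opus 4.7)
The plan is to mimic the LCT argument but to replace the rate-distortion codebook with one produced by a \emph{type-covering lemma}, so that the conditional distribution $P_{\hat X\mid X}$ used to generate reconstructions becomes a free parameter, optimizable to maximize the gap $\EE[\rho(\hat X,Y)]-\EE[\rho(X,\hat X)]$ rather than only to minimize $\EE[\rho(X,\hat X)]$.

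Fix $D<D^{\TCT}_\ID(R)$ and pick a near-maximizer $P^*_{\hat X\mid X}$ of the right-hand side of \eqref{eqn:DIDTCT} satisfying $I(X;\hat X)\le R-\eta$ and $\EE[\rho(\hat X,Y)]-\EE[\rho(X,\hat X)]\ge D+\delta$ for some $\eta,\delta>0$. A standard type-covering lemma (as in \cite{Ahlswede97}) then yields, for all $n$ large enough, a codebook $\cC_n\subset\cX^n$ of cardinality at most $2^{nR}$, every codeword having marginal type $P_{\hat X}$, such that for every strongly typical $\bx\in T_{P_X}^n(\eps)$ there exists $\hat\bx\in\cC_n$ whose joint type with $\bx$ lies within $\eps$ (in total variation) of $P_XP^*_{\hat X\mid X}$. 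Atypical source sequences are mapped to one reserved index carrying an ``atypical'' flag, adding only a negligible rate.

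The decoder declares $\NO$ iff the flag is ``typical'' and $d(\hat\bx,\by)>\EE[\rho(X,\hat X)]+D+\eps'$, where $\eps'=O(\eps)$ absorbs the slack inherited from the joint-type approximation; otherwise it declares $\MAYBE$. Admissibility follows from the triangle inequality exactly as in the LCT scheme: when $\bx$ is typical the joint-type guarantee yields $d(\bx,\hat\bx)\le\EE[\rho(X,\hat X)]+\eps'$, so $d(\bx,\by)\le D$ forces
\[
  d(\hat\bx,\by)\le d(\hat\bx,\bx)+d(\bx,\by)\le \EE[\rho(X,\hat X)]+D+\eps',
\]
and $\MAYBE$ is returned; atypical sources are labelled $\MAYBE$ by construction.

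The main obstacle is the false-positive analysis, i.e., showing that $\Pr\{g(T(\bX),\bY)=\MAYBE\}\to 0$. The atypical contribution vanishes by the AEP, so I condition on $\bX\in T_{P_X}^n(\eps)$. Under the type-covering construction the assigned $\hat\bX$ is a deterministic function of $\bX$ with marginal type exactly $P_{\hat X}$, and is therefore independent of $\bY\sim P_Y^n$. Conditioning on $\hat\bX=\hat\bx$,
\[
  \EE\bigl[d(\hat\bx,\bY)\bigr]=\frac{1}{n}\sum_{i=1}^n\EE_Y[\rho(\hat x_i,Y)]=\sum_{\hat x\in\cX}P_{\hat X}(\hat x)\,\EE_Y[\rho(\hat x,Y)]=\EE[\rho(\hat X,Y)],
\]
which exceeds the decoder's threshold $\EE[\rho(X,\hat X)]+D+\eps'$ by at least $\delta/2$ once $\eps'<\delta/2$. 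A Hoeffding bound for the sum $\sum_i\rho(\hat x_i,Y_i)$ of independent bounded summands then forces the conditional probability that $d(\hat\bx,\bY)$ falls below the threshold to be exponentially small, uniformly in $\hat\bx\in\cC_n$. Averaging over $\hat\bx$ and combining with the negligible atypical term gives $\Pr\{g(T(\bX),\bY)=\MAYBE\}\to 0$. Since $D<D^{\TCT}_\ID(R)$ was arbitrary, the theorem follows.
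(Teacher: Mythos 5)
Your proposal is correct and reaches the same result via essentially the same encoder and decision rule as the paper: a type-covering codebook designed for a general $P_{\hat X|X}$ rather than the $D(R)$-achieving one, combined with the triangle-inequality threshold test. The one genuine difference is in the false-positive analysis. The paper (in the proof of Theorem~\ref{thm:Naive}, which Theorem~\ref{thm:LessNaive} inherits verbatim) conditions on $\bY$ lying in a fixed type class, counts the fraction of the type class in the forbidden shell via the method of types, and establishes that the resulting exponent $\min I(Y;\hat X)$ over constrained joint types is strictly positive. You instead observe that, conditioned on $\hat\bX=\hat\bx$, the quantity $d(\hat\bx,\bY)=\tfrac{1}{n}\sum_i\rho(\hat x_i,Y_i)$ is a sum of independent bounded variables whose mean (which depends only on $\type_{\hat\bx}\approx P_{\hat X}$) sits a constant $\delta-\eps'$ above the decoder threshold, so Hoeffding gives an exponential tail bound uniformly over codewords. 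This is a more elementary and self-contained argument: it sidesteps the type-counting machinery entirely and does not require conditioning on $\bY$ being typical. The price is that it gives a looser (Hoeffding-order) exponent, which does not matter for the rate result but would not recover the sharper type-theoretic exponent were one to pursue an error-exponent statement. Two small points to tighten: (i) codewords have a type \emph{close to} $P_{\hat X}$ rather than exactly equal (since $P_{\hat X}$ need not be an $n$-type), so the conditional mean of $d(\hat\bx,\bY)$ is only within $O(\eps)$ of $\EE[\rho(\hat X,Y)]$ — this is absorbed into $\eps'$ but should be stated; (ii) the existence of a near-maximizer with $I(X;\hat X)\le R-\eta$ and value $\ge D+\delta$ relies on continuity of $D^{\TCT}_{\ID}(\cdot)$, which is worth a one-line justification. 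Neither affects the validity of the argument.
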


We denote by $R_\ID^\TCT(D)$ the inverse function of $D^\TCT_\ID(R)$, i.e. the compression rate that is achieved by a \TCT\ scheme for a similarity threshold $D$. It can be also written as \begin{equation}
  R_\ID^\TCT(D) = \min_{P_{\hat X|X}:
  \EE[\rho(\hat X,Y)]-\EE[\rho(X,\hat X)] \geq D  } I(X;\hat X).
\end{equation}
We also note that the \TCT\ scheme is a natural extension of the scheme given in \cite[Theorem~3]{ICW13_IT}, which applies for continuous sources and quadratic distortion.

\bigskip

It is not hard to see that $R_\ID^\TCT(D) \leq R_\ID^\LCT(D)$, since the distortion-rate achieving distribution in \eqref{eqn:DIDLCT} is a feasible transition probability for the expression in \eqref{eqn:DIDTCT}. However, the \TCT\ scheme is not optimal in general, as we shall see later on.
\bigskip

So far we have, in general, that
\begin{equation}
      R_\ID(D) \leq R_\ID^\TCT(D) \leq R_\ID^\LCT(D).
\end{equation}
There are special cases, however, where some of the above inequalities are actually equalities (proved in Sec.~\ref{sec:Triangle}):
\begin{itemize}
  \item For the binary-Hamming case, we have
  \begin{equation}\label{eqn:special1}
    R_\ID(D) = R_\ID^\TCT(D).
  \end{equation}
  \item If $\sum_{y\in\cY} P_Y(y) \rho(\hat x,y)$ is constant for all $\hat x \in \cX$, then
  \begin{equation}\label{eqn:special2}
      R_\ID^\TCT(D) = R_\ID^\LCT(D) = R(D_0-D),
  \end{equation}
  where $R(\cdot)$ is the standard rate-distortion function and $D_0 \triangleq \sum_{y\in\cY} P_Y(y) \rho(\hat x,y)$.
    \item As a consequence, in the binary-Hamming case where $Y$ is symmetric, both \eqref{eqn:special1} and \eqref{eqn:special2} hold, and we have
    \begin{equation}
        R_\ID(D) = R_\ID^\TCT(D) = R_\ID^\LCT(D) = R(\tfrac{1}{2}-D),
    \end{equation}
    where $R(\cdot)$ is the rate distortion function for the source $X$ and Hamming distortion.
\end{itemize}

\bigskip
Next, we provide an easily computable \emph{lower bound} on the identification rate, that holds for the case of Hamming distortion:
\begin{theorem}\label{thm:UniversalLB_Hamming}
    For Hamming distortion, the identification rate is always lower bounded by
\begin{equation}
  R_\ID(D) \geq \left[2D^2 \log e - D(P_X||P_Y)\right]^+.
\end{equation}
In particular, when $P_X = P_Y$, we have
\begin{equation}
  R_\ID(D) \geq 2D^2 \log e.
\end{equation}
\end{theorem}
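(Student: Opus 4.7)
The plan is to invoke Theorem~\ref{thm:RID} and work directly with the informational rate
\[
\bar R_\ID(D)=\min_{P_{U|X}:\ \sum_u P_U(u)\bar\rho(P_{X|U=u},P_Y)\geq D} I(X;U).
\]
The first step is to recognize that, for Hamming distortion, the transport distance simplifies to total variation: for any two distributions $P,Q$ on the common alphabet,
\[
\bar\rho(P,Q)=\min_{(X,Y):\,X\sim P,\,Y\sim Q}\Pr(X\neq Y)=\|P-Q\|_{TV},
\]
by the standard maximal-coupling construction. Thus the feasibility constraint becomes
\[
\sum_u P_U(u)\,\|P_{X|U=u}-P_Y\|_{TV}\;\geq\;D.
\]

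Next, I would apply Pinsker's inequality $\|P-Q\|_{TV}^2\leq\frac{1}{2\log e}D(P\|Q)$ (with $D$ measured in bits), giving
\[
D(P_{X|U=u}\|P_Y)\;\geq\;2\log e\cdot\|P_{X|U=u}-P_Y\|_{TV}^2,
\]
for each value $u$. Averaging over $U$ and then applying Jensen's inequality to the convex function $t\mapsto t^2$ in conjunction with the feasibility constraint yields
\[
\sum_u P_U(u)\,D(P_{X|U=u}\|P_Y)\;\geq\;2\log e\cdot\Bigl(\sum_u P_U(u)\,\|P_{X|U=u}-P_Y\|_{TV}\Bigr)^{\!2}\;\geq\;2D^2\log e.
\]

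The last step is the key algebraic identity that connects the divergence-to-$P_Y$ with the mutual information. Writing
\[
D(P_{X|U=u}\|P_Y)=D(P_{X|U=u}\|P_X)+\sum_x P_{X|U=u}(x)\log\frac{P_X(x)}{P_Y(x)}
\]
and averaging over $U$ collapses the second term to $D(P_X\|P_Y)$ (independent of the choice of $P_{U|X}$), and the first term to $I(X;U)$. Hence
\[
I(X;U)+D(P_X\|P_Y)\;\geq\;2D^2\log e,
\]
so that $I(X;U)\geq 2D^2\log e-D(P_X\|P_Y)$ for any feasible $P_{U|X}$. Since mutual information is nonnegative, the claimed bound $R_\ID(D)\geq [2D^2\log e - D(P_X\|P_Y)]^+$ follows, and the specialization $P_X=P_Y$ is immediate. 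No real obstacles are expected; the only subtlety is keeping the $\log e$ factor correct when translating Pinsker between nats and bits, and verifying that the coupling characterization of $\bar\rho$ under Hamming distortion indeed equals $\|\cdot\|_{TV}$.
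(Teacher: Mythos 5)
Your proposal is correct and is essentially the paper's proof: both use Pinsker's inequality for the $\bar\rho$-to-divergence bound (the paper cites it via Marton's $\bar d$-bound, which, as the paper's footnote acknowledges, reduces to Pinsker here), both apply Jensen's inequality (you via convexity of $t\mapsto t^2$, the paper via concavity of $\sqrt{\cdot}$ -- the same inequality read from opposite sides), and both then rewrite $\sum_u P_U(u)\,D(P_{X|U=u}\|P_Y)$ as $I(X;U)+D(P_X\|P_Y)$. Your explicit decomposition $D(P_{X|U=u}\|P_Y)=D(P_{X|U=u}\|P_X)+\sum_x P_{X|U=u}(x)\log\tfrac{P_X(x)}{P_Y(x)}$ is a slightly cleaner way of stating the paper's one-line $\log\tfrac{P_{X|U}(x|u)}{P_Y(x)}=\log\tfrac{P_{X|U}(x|u)P_X(x)}{P_X(x)P_Y(x)}$ manipulation, but the argument is the same.
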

\begin{proof}Appendix~\ref{app:UniversalLB_Hamming}\end{proof}

\bigskip

In Fig.~\ref{fig:ternary} we plot the three rates: $R_\ID^\LCT(D)$, $R_\ID^\TCT(D)$ and $R_\ID(D)$ for the case of $\cX = \cY = \{0,1,2\}$, $P_X = P_Y = [.8 .1 .1]$ and Hamming distortion. As seen in the figure, the three rates are different, indicating that neither of the \LCT\ and \TCT\ schemes is  optimal. We also plot the lower bound from Theorem~\ref{thm:UniversalLB_Hamming}.
\begin{figure}
  \centering
  \includegraphics[width=6in]{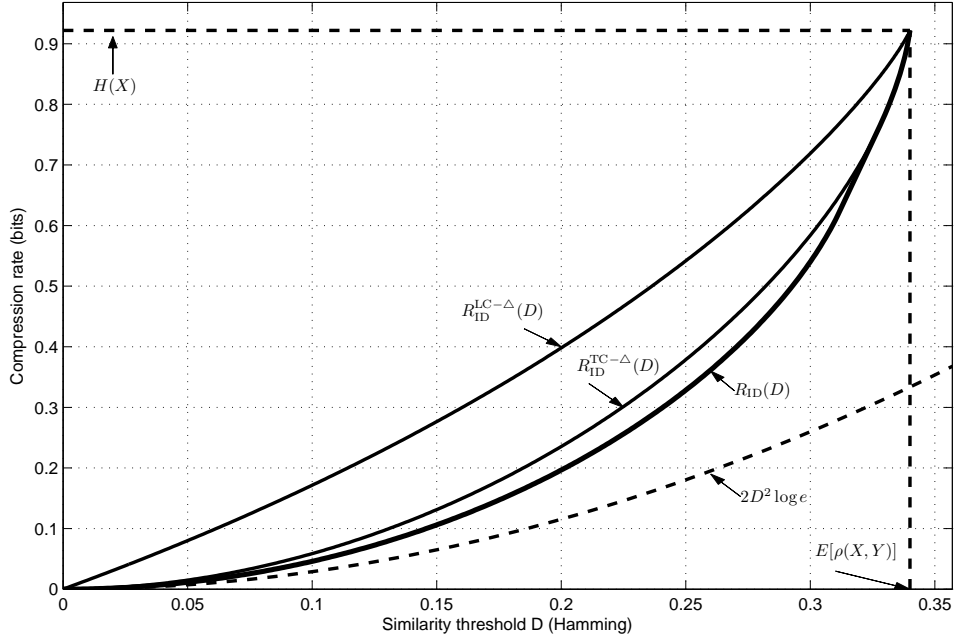}\\
  \caption{Three compression rates for the ternary source $P_X = [.8,.1 .1]$. The distribution $P_Y$ of the query sequence is the same as $P_X$. The distortion measure is Hamming.}\label{fig:ternary}
\end{figure}

\subsection{Computing the Identification Rate}\label{ssec:comp}

The identification rate $R_\ID(D)$ is given in \eqref{eqn:R_ID_info} as an optimization program of  single-letter information-theoretic quantities, with an auxiliary random variable with alphabet $\cU$. In order to actually compute the value of $R_\ID(D)$, one has to (a) obtain a bound on the cardinality of $\cU$, and (b) be able to efficiently solve the optimization program, which is non-convex.

In order to tackle the cardinality issue, let us define
\begin{equation}
  R_\ID^k(D) \triangleq \min_{P_{U|X}: \sum_{u\in\cU} P_U(u) \bar\rho(P_{X|U}(\cdot|u),P_Y)\geq D} I(X;U),
\end{equation}
where the alphabet $\cU$ is given by $k$. Clearly, $R_\ID^k(D)$ is a decreasing function of $k$ (since a lower value of $k$ can be considered a special case of the higher $k$). In \cite[Lemma 3]{Ahlswede97}, the standard \emph{support lemma} (see, e.g. \cite[Lemma 15.4]{CsiszarKorner2nd}) is used to show that $R_\ID(D) = R^{|\cX|+2}_\ID(D)$. In other words, taking $|\cU|=|\cX|+2$ suffices in order to obtain the true value of $R_\ID(D)$. We improve this result as follows:
\begin{theorem}\label{thm:cardinality}
    For any $D$, we have
    \begin{equation}
      R_\ID(D) = R^{|\cX|+1}_\ID(D).
    \end{equation}
    Furthermore, the entire curve $R_\ID(D)$ can be obtained by calculating the curve $R^{|\cX|}_\ID(D)$ for all $D$, and then taking the lower convex envelope. 
\end{theorem}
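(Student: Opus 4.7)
My plan is to apply the Fenchel-Eggleston strengthening of Caratheodory's theorem (see \cite[Lemma 15.4]{CsiszarKorner2nd}), which exploits connectedness to save one atom compared to the standard Caratheodory bound that underlies \cite[Lemma 3]{Ahlswede97}. The claim then splits into a direct Fenchel-Eggleston reduction for the $|\cX|+1$ bound and a two-dimensional convex-geometry argument for the envelope characterization.

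For $R_\ID(D)=R_\ID^{|\cX|+1}(D)$, fix an optimizer $P_{U|X}$ for $\bar R_\ID(D)$ and consider on $\cP(\cX)$ the $|\cX|+1$ continuous functionals
\begin{align*}
f_i(Q) &= Q(i),\quad i=1,\dots,|\cX|-1,\\
f_{|\cX|}(Q) &= H(Q),\\
f_{|\cX|+1}(Q) &= \bar\rho(Q,P_Y).
\end{align*}
Since $\cP(\cX)$ is connected, so is its image under $Q\mapsto(f_1(Q),\ldots,f_{|\cX|+1}(Q))$ in $\Reals^{|\cX|+1}$, and Fenchel-Eggleston produces $U'$ with $|\cU'|\leq|\cX|+1$ matching all $|\cX|+1$ expectations. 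The first $|\cX|-1$ equalities preserve the marginal $P_X$, the next preserves $H(X|U)$ and therefore $I(X;U)$, and the last maintains the distortion constraint; thus $U'$ is admissible with the same rate, giving $R_\ID^{|\cX|+1}(D)\leq R_\ID(D)$, while the reverse inequality is immediate.

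For the envelope statement, let
\[
\cF\triangleq\bigl\{(\delta,h):\delta=\textstyle\sum_u P_U(u)\bar\rho(P_{X|U}(\cdot|u),P_Y),\ h=H(X|U),\ P_{U|X}\text{ induces }P_X\bigr\}.
\]
A time-sharing construction (adjoining an independent switch $V$ to $U$) shows that $\cF$ is convex, and $R_\ID(D)=H(X)-\sup\{h:(\delta,h)\in\cF,\ \delta\geq D\}$. For any extreme point $(\delta^*,h^*)$ of the upper boundary of $\cF$, pick a supporting line $\alpha h+\beta\delta=c^*$ with $\alpha,\beta\geq 0$, $\alpha+\beta>0$. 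Applying Fenchel-Eggleston with the $|\cX|-1$ marginal functionals together with the single scalar $\alpha H(Q)+\beta\bar\rho(Q,P_Y)$ (a total of $|\cX|$ functionals) yields $U'$ with $|\cU'|\leq|\cX|$ preserving $P_X$ and the value $\alpha h+\beta\delta=c^*$. Because this line supports the convex set $\cF$, the produced point $(\delta',h')$ also lies on it, hence on the upper boundary, so extreme points of the upper boundary are attainable with at most $|\cX|$ atoms. Since the upper boundary of the two-dimensional convex set $\cF$ is a one-dimensional concave curve, every one of its points is a convex combination of at most two extreme points (Caratheodory in dimension one). Translating back via $R=H(X)-h$ realizes $R_\ID(D)$ as a convex combination $\lambda R_\ID^{|\cX|}(D_1)+(1-\lambda)R_\ID^{|\cX|}(D_2)$ with $\lambda D_1+(1-\lambda)D_2=D$, which is precisely the lower convex envelope of $R_\ID^{|\cX|}(D)$.

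The main obstacle is the extreme-point reduction above: care is required to verify that (i) the scalar-functional Fenchel-Eggleston step indeed lands on the original supporting hyperplane, which follows from $(\delta^*,h^*)$ maximizing $\alpha h+\beta\delta$ on $\cF$; and (ii) degenerate cases where the supporting line is vertical or horizontal (e.g.\ $D=\bar\rho(P_X,P_Y)$, where a constant $U$ already attains the optimum, or $D$ equal to the maximal admissible distortion) admit a direct $|\cU|\leq|\cX|$ realization. Both checks are geometric rather than substantive, but they are where the argument is most delicate.
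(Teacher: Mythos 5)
Your first half — reducing to $|\cU|\leq|\cX|+1$ via Fenchel--Eggleston applied to the $|\cX|+1$ functionals (the $|\cX|-1$ marginal coordinates, the entropy $H(Q)$, and $\bar\rho(Q,P_Y)$) — is essentially identical to the paper's argument and is correct.

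The second half has a genuine gap in the step where you pass from extreme points to $|\cU|\leq|\cX|$. You write that for an extreme point $(\delta^*,h^*)$ of the upper boundary, applying Fenchel--Eggleston with the $|\cX|-1$ marginals plus the single combined functional $\alpha H(Q)+\beta\bar\rho(Q,P_Y)$ yields a $U'$ with $|\cU'|\leq|\cX|$ that "preserves the value $\alpha h+\beta\delta=c^*$", and you then conclude that "extreme points of the upper boundary are attainable with at most $|\cX|$ atoms." This conclusion does not follow: Fenchel--Eggleston only places the new point $(\delta',h')$ somewhere on the supporting line $\alpha h+\beta\delta=c^*$, not necessarily at $(\delta^*,h^*)$. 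If that line meets $\cF$ in a nondegenerate segment — i.e.\ if $(\delta^*,h^*)$ is an extreme point that is \emph{not exposed}, such as the endpoint of a flat portion of the boundary — then $(\delta',h')$ may land elsewhere on the segment, and nothing forces the original extreme point to be attainable with $|\cX|$ atoms. Your subsequent Carath\'eodory-in-dimension-one step therefore decomposes $(D,R_\ID(D))$ into extreme points for which attainability with $|\cX|$ atoms has not actually been established.

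The paper closes this exact gap by working with \emph{exposed} points rather than extreme points. At an exposed point the supporting line meets the achievable region only at that one point, so the point produced by Fenchel--Eggleston, being on the line and in the region, must coincide with the exposed point; this is precisely where exposedness (not mere extremality) is used. The paper then invokes Straszewicz's theorem (\cite[Theorem 18.7]{rockafellar70}) — a closed bounded convex set is the closure of the convex hull of its exposed points — in place of your one-dimensional Carath\'eodory step, which gives the lower-convex-envelope characterization. To repair your proof, replace "extreme point" by "exposed point" throughout the second half, justify equality of the F--E output with the original point via the singleton intersection property, and conclude via Straszewicz rather than via convex combinations of two extreme points.
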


Remarks:
\begin{itemize}
  \item The proof, given in detail in Sec.~\ref{sec:computation}, is based on the support lemma, but uses it in a more refined way than in \cite[Lemma 3]{Ahlswede97}, in a manner similar in spirit to \cite{Jana2009Cardinality}.
  \item The second part of the result follows from the fact that whenever $(R_\ID(D),D)$ is an \emph{exposed} point of the convex region of achievable pairs,
    \begin{equation}\label{eqn:CardinalityImproved}
       R_\ID(D) = R^{|\cX|}_\ID(D).
    \end{equation}
    See Sec.~\ref{sec:computation} for details.
  \item Taking the lower convex envelope is also necessary. In other words, we sometimes have a strict inequality of the form $R^{|\cX|+1}_\ID(D) < R^{|\cX|}_\ID(D)$. For example, in the case of ternary alphabet and Hamming distortion as in Fig.~\ref{fig:ternary} above, we have such a strict inequality for $D=0.32$.
\end{itemize}

\bigskip

The harder problem is the non-convexity of the optimization in \eqref{eqn:R_ID_info} (not to be confused with the fact that the function $R_\ID(D)$ itself is a convex function of $D$, see \cite[Lemma 3]{Ahlswede97}). While the target function (the mutual information) \emph{is} convex, the feasibility region is not, which makes the optimization hard. In order to tackle this issue, we show that this region is the complementary of a convex polytope. Then, we propose a method for reducing the optimization program \eqref{eqn:R_ID_info} to a sequence of \emph{convex} optimization programs that can be solved easily (e.g. via \texttt{cvx} \cite{cvx}), and the minimum among the solutions of those programs is equal to $R_\ID(D)$.

To illustrate this idea, consider the optimization of a convex function $f:\Reals^n \ra \Reals$ over the set $\Reals^n \setminus \Xi$, where $\Xi$ is a convex polytope:
\begin{equation}
  \inf_{\bz \in \Reals^n \setminus \Xi} f(\bz).
\end{equation}
Any polytope can be written as
\begin{equation}
  \Xi = \{\bz \in \Reals^n : \ba_i^T\bz \leq b_i \mbox{ for all } 1 \leq i \leq m\},
\end{equation}
where $\{\ba_i\}$ are $m$ length-$n$ vectors and $\{b_i\}$ are $m$ scalars, where $m$ corresponds to the number of facets of the polytope $\Xi$. Rewriting the optimization program gives
\begin{align}
  \inf_{\bz \in \Reals^n \setminus \Xi} f(\bz)
  & = \min_{\bz \in \Reals^n: \exists_{i}:\ba_i^T \bz \geq b_i} f(\bz) \\
  & = \min_{1 \leq i \leq m } \ \min_{\bz \in \Reals^n: \ba_i^T \bz \geq b_i} f(\bz).
\end{align}
Each of the new optimization programs is a minimization of a the original convex function, but now with \emph{linear} constraints, and therefore can be calculated easily.

While this method does not scale well with the alphabet size (since the number of facets of the polytope grows very quickly with $|\cX|$), it still provides a method for calculating $R_\ID(D)$ for small values of $|\cX|$. For example, the $R_\ID(D)$ curve in Fig.~\ref{fig:ternary} was obtained with this method. The full details of the reduction method, along with simplifications for the Hamming distortion case, are given in Sec.~\ref{sec:computation}.

\section{Proof of the Identification Rate Theorem}\label{sec:ProofConverse}
In this section we prove Theorem~\ref{thm:RID}. After giving a high-level overview of the proof, we introduce additional notation and review the basic tools that are used in the proof. The proof itself is given in Subsection \ref{ssec:proofRIDthm}.

\subsection{Proof roadmap}
We start with an informal overview of the proof. First, note that it suffices to consider only \emph{typical} sequences $\bx$, i.e. sequences whose (first-order) empirical distribution is close to the true one $P_X$ (denoted $\typecl_{P_X}$, see a formal definition below). The same holds for the query sequences $\by$.

The achievability scheme is based on constructing a code, which is a set of sequences from another alphabet $\cU^n$, that ``covers'' the typical set $\typecl_{P_X}$. The covering is in the sense that for each $\bx\in\typecl_{P_X}$, there exists a word $\bu$ in the code, s.t. $\bx,\bu$ will have a first-order empirical distribution that is close to some given joint distribution (which is given as a design parameter $P_{U|X}$, along with the size of the alphabet $\cU$). Such a covering is guaranteed to exist by a version of the type covering lemma (stated below), and the code rate is given by the mutual information between $X$ and $U$, which define the joint distribution. The signature of a sequence $\bx$, $T(\bx)$, is defined as the index to the sequence $\bu$ in the code that covers $\bx$. The decision process $g(\cdot,\cdot)$ simply declares $\MAYBE$ (given $\bu$ and $\by$), if there exists a typical sequence $\bx$ that is mapped to $\bu$, that is also similar to $\by$. The scheme is admissible, and all that remains is to evaluate the probability of $\MAYBE$, i.e. the probability that $\bY$ falls in the $D$-`expansion' of the set of sequences $\bx$ that are mapped to $\bu$. See Fig.~\ref{fig:covering} for an illustration.

It can be shown that if the similarity threshold $D$ satisfies
\begin{equation}\label{eqn:blabla}
  D < \sum_u P_U(x) \bar\rho(P_{X|U}(\cdot|u),P_Y),
\end{equation}
the fraction of sequences for which a $\MAYBE$ is declared vanishes with $n$. This leads to the achievablity of the rate $\bar R_\ID(D)$.

\begin{figure}
  \centering
  \includegraphics[width=6in]{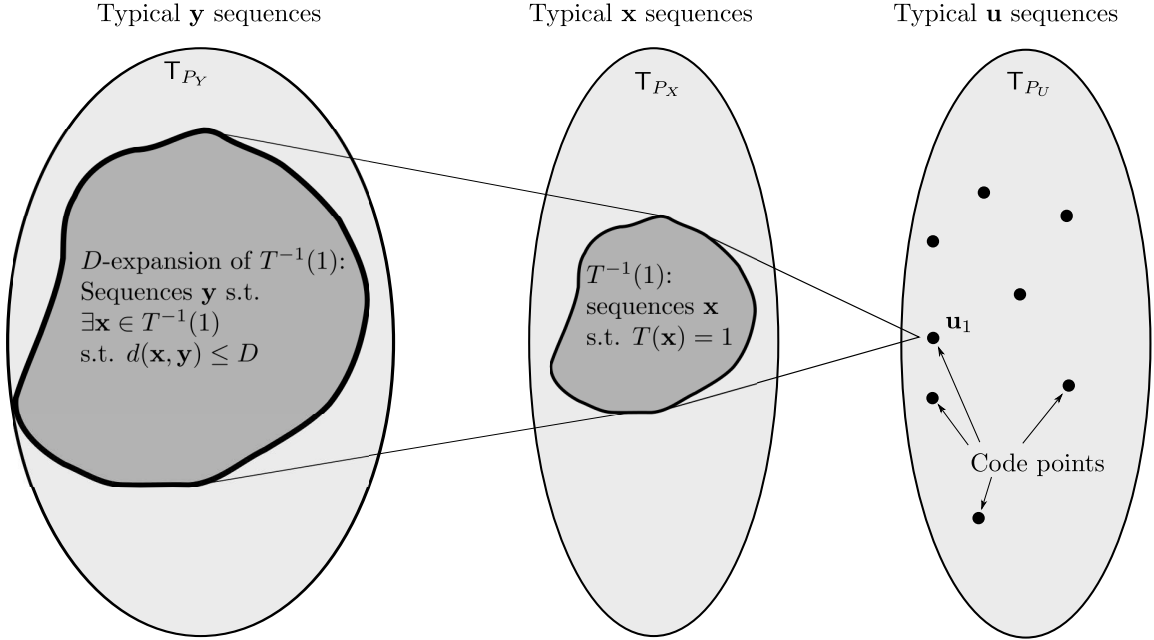}\\
  \caption{Illustration of the achievability scheme. Each sequence in the typical set $\typecl_{P_X}$ is mapped to (the index of) a single code point $\bu$. A $\MAYBE$ is declared if $\by$ falls in the $D$-expansion of the set of $\bx$-sequences that are mapped to $\bu$.}\label{fig:covering}
\end{figure}

For the converse part, we follow the steps of \cite{Ahlswede97} and essentially show that any compression scheme performs as well as a scheme that was considered in the achievability part. In particular, we show that there exists a distribution $P_{U|X}$ for which each of the sets of $\bx$-sequences mapped to the same $i$, contains a set which is called `inherently typical' w.r.t. $P_{U_X}$ (see below). While in the achievability part we claimed that if \eqref{eqn:blabla} holds then the probability of $\MAYBE$ vanishes, here we need to claim the opposite, i.e. that if \eqref{eqn:blabla} does not hold, then the probability of $\MAYBE$ cannot go to zero. We note that in \cite{Ahlswede97} the argument can only lead to a claim of the sort ``if \eqref{eqn:blabla} does not hold, then the probability of $\MAYBE$ cannot vanish \emph{exponentially} with $n$''. Here, however, we require a stronger result. In order to proceed, we use the blowing-up lemma (see below), and show that if \eqref{eqn:blabla} does not hold, then the probability of $\MAYBE$ \emph{converges to 1}, and this convergence is \emph{exponential} in $n$. This can be regarded as a `exponentially strong converse' (see, e.g. \cite{WangIngberKochmanJSCCstrongConverse} for an overview of converse types).

\subsection{Additional Notation}

We shall use the method of types \cite{CsiszarKorner2nd}. Let $\cP(\cX)$ denote the set of all probability distributions on the alphabet $\cX$. We denote by $\cP(\cX \ra \cY)$ the set of all stochastic matrices (or, equivalently, conditional distributions, or channels) from alphabet $\cX$ to $\cY$. On occasions, we deal with vectors of different lengths. For this purpose, we use the notation $x^k$ as short hand for the vector $[x_1,...,x_k]$, so, for example, $x^n$ and $\bx$ shall denote the same thing. For a sequence $\bx \in \cX^n$ and $a\in\cX$, let $N(a|\bx)$ denote the number of occurrences of $a$ in $\bx$. The \emph{type} of the sequence $\bx$, denoted $\type_\bx$, is defined as the vector $\frac{1}{n}[N(1|\bx),N(2|\bx),...,N(|\cX|\ |\bx)]$. For any sequence length $n$, let $\cP_n(\cX)$ denote the set of all possible types of sequences of length $n$, (also called $n$-types):
\begin{equation}
    \cP_n(\cX) \triangleq \left\{P\in\cP(\cX) |\forall x\in\cX,\ nP(x) \in \Integers_+\right\}.
\end{equation}
For a type $P \in \cP_n(\cX)$, the \emph{type class} $\typecl_P$ is defined as the set of all sequences $\bx\in\cX^n$ with type $P$ (or, equivalently, that $\type_\bx = P$). More generally, for a distribution $P \in \cP(\cX)$ (not necessarily an $n$-type), and a constant $\gamma>0$, the set of typical sequences $\typecl_{P,\gamma}$ is defined as the set of all sequences $\bx\in\cX^n$ for which:
\begin{enumerate}
  \item $|P(x) - \type_\bx(x)| \leq \gamma \quad \forall x \in \cX,$
  \item $\type_\bx(x) = 0$  whenever   $P(x) = 0$.
\end{enumerate}
If $X$ is a random variable distributed according to $P$, we shall sometimes write $\typecl_{X,\gamma}$ for $\typecl_{P,\gamma}$.

Similarly, for $V \in \cP(\cX \ra \cY)$ and $\bx \in \cX^n$, we denote by $\typecl_{V,\gamma}(\bx)$ the set of conditionally typical sequences, i.e. that
\begin{enumerate}
  \item $\left|N(x',y'|\bx,\by) - N(x'|\bx)V(y'|x')\right| \leq n\cdot \gamma \quad \forall x' \in \cX,y' \in \cY,$
  \item $N(x',y'|\bx,\by) = 0$  whenever   $V(y|x) = 0$.
\end{enumerate}
For random variables $X,Y$ where $Y$ is the output of the channel $V$ with input $X$, we'll sometimes use the notation $\typecl_{Y|X,\gamma}(\bx)$ for $\typecl_{V,\gamma}(\bx)$.

\medskip
Let $\delta_n$ be defined according to the delta convention \cite{CsiszarKorner2nd}; i.e. that $\delta_n \ra 0$, but also that $n \delta_n^2 \ra \infty$ (e.g. $\delta_n \triangleq n^{-\alpha}$, with some $\alpha \in (0,0.5)$). With this convention, we have (see \cite[Lemma 2.12]{CsiszarKorner2nd}):
\begin{itemize}
  \item If $\bX$ is distributed i.i.d. according to a distribution $P$, then
  \begin{equation}\label{eqn:probTypicalSet}
    \Pr\{\bX \in \typecl_{P,\delta_n}\} \geq 1 - \frac{|\cX|}{4n\delta_n^2}.
  \end{equation}
  \item If $\bY$ is the output of the DMC $V$ with input $\bx \in \cX^n$, then
  \begin{equation}\label{eqn:probConditionalTypicalSet}
    \Pr\{\bY \in \typecl_{V,\delta_n}(\bx)\} \geq 1 - \frac{|\cX||\cY|}{4n\delta_n^2}.
  \end{equation}
\end{itemize}

\medskip
Recall that we have defined an arbitrary distortion measure $\rho:\cX\times \cY \ra \Reals^+$. The maximal possible distortion is denoted by $\rho_{\max}$:
\begin{equation}
  \rho_{\max} \triangleq \max_{x \in \cX, y \in \cY} \rho(x,y).
\end{equation}

For a set $A \subseteq \cX^n$, we denote its $D$-expansion by
\begin{equation}
  \Gamma^D(A) \triangleq \left\{\by \in \cY^n : \exists \bx \in A \mbox{ s.t. } d(\bx,\by) \leq D\right\}.
\end{equation}

When we use the Hamming distance as the distortion measure, we denote the expansion by $\Gamma_H^D(A)$.

\subsection{A covering lemma}
The main building block for the achievability proofs in the paper is the following version of the covering lemma (see e.g. \cite[Prop. 1]{DemboWeissman2003}):
\begin{lem}[A covering lemma]\label{lem:covering}
    For any distribution $P_X\in \cP(\cX)$ and any channel $V \in \cP(\cX\ra\cU)$, there exists a mapping $\omega:\typecl_{P_X,\delta_n}\ra \cU^n$ s.t.
    \begin{equation}
      \omega(\bx) \in \typecl_{V,\delta_n}(\bx) \mbox{ for all } \bx \in \typecl_{P_X,\delta_n},
    \end{equation}
    and
    \begin{equation}
      \left|\left\{\omega(\bx) : \bx \in \typecl_{P_X,\delta_n} \right\}\right| \leq 2^{n(I(P_X,V)+\eps_n)},
    \end{equation}
    where $\eps_n = O\left(\delta_n\log(1/\delta_n)\right)$, with constants that depend only on $|\cX|$ and $|\cU|$.
\end{lem}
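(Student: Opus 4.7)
My plan is to prove the covering lemma by the standard random coding / probabilistic-method argument for type covering, which dates back to Berger and appears (in essentially this form) in the references cited just before the lemma.

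First I would fix the output distribution $P_U(u) = \sum_x P_X(x) V(u|x)$ and draw $M = \lceil 2^{n(I(P_X,V) + \eps_n)} \rceil$ i.i.d. candidate codewords $\bU_1, \dots, \bU_M \sim P_U^n$. For any \emph{single} sequence $\bx \in \typecl_{P_X,\delta_n}$, I would lower bound the probability that a fresh $\bU \sim P_U^n$ lies in the conditional typical set $\typecl_{V,\delta_n}(\bx)$. By the standard method-of-types calculation, the cardinality of each joint conditional type class inside $\typecl_{V,\delta_n}(\bx)$ is at least $2^{n(H(U|X) - \eta_n')}$, and each such $\bu$ has probability at least $2^{-n(H(U) + \eta_n'')}$ under $P_U^n$ (both $\eta_n', \eta_n''$ are $O(\delta_n \log(1/\delta_n))$ using that $|\type_\bx - P_X|_\infty \le \delta_n$ and continuity of entropy on the probability simplex). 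Combining gives
\begin{equation}
\Pr\bigl\{\bU \in \typecl_{V,\delta_n}(\bx)\bigr\} \ \ge\ 2^{-n(I(P_X,V) + \eps_n/2)}
\end{equation}
for a suitable $\eps_n = O(\delta_n \log(1/\delta_n))$, uniformly in $\bx$.

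Next I would apply the probabilistic method. The probability that \emph{no} codeword covers a given $\bx$ is at most $(1 - 2^{-n(I + \eps_n/2)})^M \le \exp(-2^{n \eps_n/2})$, which is doubly exponentially small. Since $|\typecl_{P_X,\delta_n}| \le 2^{n(H(X) + \eta_n)}$, a union bound over all typical $\bx$ gives an expected number of uncovered sequences bounded by $2^{n(H(X)+\eta_n)} \cdot \exp(-2^{n\eps_n/2}) \to 0$. Hence there exists a deterministic codebook $\{\bu_1,\dots,\bu_M\}\subseteq \cU^n$ that covers every $\bx \in \typecl_{P_X,\delta_n}$; defining $\omega(\bx)$ as any $\bu_j$ with $\bu_j \in \typecl_{V,\delta_n}(\bx)$ gives a map with image size at most $M$, as required.

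The main obstacle is the bookkeeping that yields the claimed rate $\eps_n = O(\delta_n \log(1/\delta_n))$ with constants depending only on $|\cX|$ and $|\cU|$. The delicate point is that a sequence $\bx \in \typecl_{P_X,\delta_n}$ has type $\type_\bx$ only $\delta_n$-close to $P_X$, so the joint type induced by $V$ is not exactly $P_X \times V$; one must therefore invoke uniform continuity of $H(U|X)$ and $H(U)$ on the simplex, using bounds of the form $|H(P) - H(Q)| \le |P-Q|_1 \log(|\cX|/|P-Q|_1)$ (and its conditional analogue), to absorb the deviation into $\eps_n$. Once these continuity estimates are packaged cleanly, the rest of the argument is a direct large-deviations / union-bound calculation.
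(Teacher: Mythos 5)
The paper does not actually prove this lemma; it states it with a pointer to Dembo and Weissman, Prop.~1, so there is no ``paper's proof'' to compare against. Your random-coding argument is the standard one (going back to Berger's type-covering lemma) and is essentially correct: draw roughly $2^{n(I(P_X,V)+\eps_n)}$ codewords, show each typical $\bx$ fails to be covered only with doubly-exponentially small probability, union-bound over the singly-exponential typical set, and derandomize. One detail is worth tightening. Because you draw codewords i.i.d.\ from $P_U^n$, the per-codeword probability of a given $\bu\in\typecl_{V,\delta_n}(\bx)$ has exponent $\sum_u Q'(u)\log\bigl(1/P_U(u)\bigr)$ with $Q'=\type_{\bu}$, and bounding the gap between this and $H(U)$ requires controlling $\max_u\log\bigl(1/P_U(u)\bigr)$, which depends on the smallest positive atom of $P_U$ and not only on $|\cX|$ and $|\cU|$. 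This is harmless asymptotically (once $\delta_n$ is small compared with $\min_{u:P_U(u)>0}P_U(u)$), but it does not literally deliver the stated ``constants depending only on $|\cX|$ and $|\cU|$.'' The cleaner fix, and the standard one, is to draw the $M$ codewords uniformly from a single type class $T_{Q_n}\subset\cU^n$ with $Q_n$ an $n$-type close to $P_U$; then $\Pr\{\bU\in T_{V'}(\bx)\}=|T_{V'}(\bx)|/|T_{Q_n}|\ge(n+1)^{-|\cX||\cU|}2^{-nI(\type_\bx,V')}$ purely by counting, with constants that genuinely depend only on the alphabet sizes, and the remainder of your union-bound and derandomization argument carries over unchanged.
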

\medskip

Essentially, this lemma says that there exists a code $\cC \subseteq \cU^n$ of rate $I(P_X,W)$ that covers the typical set $\typecl_{P_X,\delta_n}$, in the sense that for every $\bx \in\typecl_{P_X,\delta_n}$, there exists a `codeword' $\bu \in \cC$ such that the sequences $\bx,\bu$ have a joint type that is very close to $(P_X,W)$.

\subsection{The inherently typical subset lemma}
The proof of the converse of Theorem~\ref{thm:RID} relies heavily on the inherently typical subset lemma, due to Ahlswede et al.\cite{Ahlswede97}. An inherently typical set is a generalization of the conditional type class concept, as detailed below. Loosely speaking, the lemma says that every set contains an inherently typical subset of essentially the same size.

Before stating the lemma we require several definitions.
\begin{defn}[Prefix set]
    For $A \subseteq \cX^n$ and $1 \leq k \leq n$, let $A_k$ denote the set of prefixes of sequences of $A$, i.e.
    \begin{equation}
      A_k \triangleq \{x^k \in \cX^k : \exists \tilde x^n \in A \mbox{ s.t. } x^k =\tilde x^k\}.
    \end{equation}
    We denote $A_0 = \{\Lambda\}$, where $\Lambda$ denotes the empty string. Note that, by our convention, for any $\bx \in A$, $x^0 = \Lambda$.
\end{defn}

\smallskip

\begin{defn}[Causal mapping]
    For $A \subseteq \cX^n$ and an arbitrary alphabet $\cU$, a mapping $\phi:A \ra \cU^n$ is said to be \emph{causal}, if there exist mappings $\{\phi_i : A_i \ra \cU\}_{i=0}^{n-1}$, s.t. for all $\bx \in A$ we have
    \begin{equation}
       \phi(\bx) = [\phi_1(x^0),  \phi_1(x^1),...,\phi_n(x^{n-1})].
    \end{equation}
\end{defn}

\smallskip

Let $\cU_m$ denote a discrete alphabet of size
\begin{equation}
    |\cU_m| = |\cP_m(\cX)| = \comb{|\cX| + m - 1}{m}.
\end{equation}
The alphabet $\cU_m$ is exactly the set of $m$-types over the alphabet $\cX$. For example, if $\cX = \{0,1\}$, we have $|\cU_m| = m+1$.

\begin{defn}[Inherently typical set \cite{Ahlswede97}]
    A set $A \subseteq \cX^n$ is said to be $m$-\emph{inherently typical}, if there exist a causal mapping $\phi:A \ra \cU_m^n$ and an $n$-type $Q \in \cP_n(\cX \times \cU_m)$, s.t.
    \begin{enumerate}
      \item For every $\bx \in A$, the sequences $\bx,\phi(\bx)$ have the joint type $Q$, i.e.
          \begin{equation}
            \type_{\bx,\phi(\bx)} = Q.
          \end{equation}
      \item If $X_0,U_0$ are jointly distributed according to $Q$, then
      \begin{equation}
        \frac{1}{n}\log |A| \leq H(X_0 | U_0) \leq \frac{1}{n}\log |A| + \frac{\log^2m}{m}.
      \end{equation}
    \end{enumerate}
\end{defn}

\begin{lem}[The inherently typical subset lemma \cite{Ahlswede97}]
Let $m > 2^{16|\cX|^2}$, and let $n$ be large enough s.t. $(m+1)^{5|\cX|+4}\ln(n+1)/n \leq 1$. Then for any set $A \subseteq \cX^n$ there exists a subset $\tilde A \subseteq A$ that is $m$-inherently
typical, whose size satisfies
  \begin{equation}
    \frac{1}{n}\log \frac{|A|}{|\tilde A|} \leq |\cX|(m+1)^{|\cX|} \frac{\log(n+1)}{n}.
  \end{equation}
\end{lem}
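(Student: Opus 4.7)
My plan is to produce the inherently typical subset $\tilde A$ together with its causal mapping $\phi$ in three phases. In \textbf{Phase 1 (causal mapping from local statistics)}, for each $i$ and each prefix $x^{i-1} \in A_{i-1}$, I would set $\phi_i(x^{i-1}) \in \cU_m$ to be any $m$-type closest in total variation to the empirical distribution $\hat P_i(\cdot \mid x^{i-1})$ of the $i$-th symbol across $A(x^{i-1}) = \{\bx \in A : \tilde x^{i-1} = x^{i-1}\}$; this is causal by construction, and the approximation error is $O(1/m)$ per coordinate. In \textbf{Phase 2 (pigeonhole over joint types)}, since $|\cU_m| = \binom{m+|\cX|-1}{|\cX|-1} \leq (m+1)^{|\cX|}$, the number of joint $n$-types in $\cP_n(\cX \times \cU_m)$ is at most $(n+1)^{|\cX|(m+1)^{|\cX|}}$. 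Pigeonhole on $\{\type_{\bx, \phi(\bx)} : \bx \in A\}$ then yields a subset $\tilde A \subseteq A$ sharing a common joint type $Q$, with
\[
\frac{1}{n}\log\frac{|A|}{|\tilde A|} \;\leq\; |\cX|(m+1)^{|\cX|}\frac{\log(n+1)}{n},
\]
which is simultaneously the claimed size bound and the first inherent-typicality property.

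\textbf{Phase 3 (entropy sandwich)} is the crux: I must show $\tfrac{1}{n}\log|\tilde A| \leq H(X_0 \mid U_0) \leq \tfrac{1}{n}\log|\tilde A| + \log^2 m / m$. For the left inequality I would introduce the auxiliary law $\mu(\bx) = \prod_{i=1}^n \phi_i(x^{i-1})(x_i)$ on $\cX^n$; since every $\bx \in \tilde A$ has joint type $Q$, the log-likelihood $-\tfrac{1}{n}\log\mu(\bx)$ collapses to the constant
\[
E \;=\; H(X_0 \mid U_0) + \sum_{u} Q_U(u)\, D\bigl(Q_{X \mid U}(\cdot \mid u)\,\|\,u\bigr),
\]
and $\mu(\tilde A) \leq 1$ gives $\tfrac{1}{n}\log|\tilde A| \leq E$; further thinning of $\tilde A$ (or an iterative redefinition of $\phi$ on the thinned subset) is then needed to force $Q_{X \mid U}(\cdot \mid u) = u$ exactly and eliminate the residual KL terms. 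For the right inequality I would count paths along the causal context tree of $\tilde A$: at each node $x^{i-1}$ the branching fanout is, up to the $m$-type quantization error, controlled by $\phi_i(x^{i-1}) \in \cU_m$, and a Pinsker-type estimate relates that quantization error to the advertised $\log^2 m/m$ slack.

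\textbf{Main obstacle.} Phases 1--2 are bookkeeping, but Phase 3 --- and in particular obtaining the left-hand inequality \emph{without} slack, rather than the weakened form $\tfrac{1}{n}\log|\tilde A| \leq H(X_0 \mid U_0) + O(\log^2 m/m)$ that the auxiliary-law argument delivers for free --- is the technical heart of the lemma. A naive one-shot pigeonhole leaves the slack on the wrong side of the sandwich; obtaining the sharp inequality seems to require an iterative refinement in which the causal mapping is redefined level-by-level based on the currently refined subset, forcing $Q_{X \mid U}(\cdot \mid u)$ to equal the $m$-type $u$ literally. The delicate point is to carry out such a refinement while keeping the cumulative subset losses across all $n$ positions subexponential in $n$, which is precisely what the numerical hypothesis $(m+1)^{5|\cX|+4}\ln(n+1)/n \leq 1$ is in place to enable.
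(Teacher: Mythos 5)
The paper cites this lemma from Ahlswede, Yang and Zhang \cite{Ahlswede97} and does not reprove it, so there is no in-paper proof to compare against; I will assess your sketch on its merits. Phases 1 and 2 are sound bookkeeping. The problem is that Phase 3 misidentifies which half of the entropy sandwich is the hard one. The left inequality $\frac{1}{n}\log|\tilde A|\leq H(X_0|U_0)$ in fact holds for \emph{any} causal $\phi$ and \emph{any} $\tilde A$ with a common joint type $Q$, with no slack and no circularity: taking $\tilde\bX$ uniform on $\tilde A$ and $\tilde U_i=\phi_i(\tilde X^{i-1})$,
\begin{equation*}
\log|\tilde A|=H(\tilde\bX)=\sum_{i=1}^{n} H(\tilde X_i|\tilde X^{i-1})\leq\sum_{i=1}^{n} H(\tilde X_i|\tilde U_i)\leq n\,H(X_0|U_0),
\end{equation*}
where the first inequality is because $\tilde U_i$ is a deterministic function of $\tilde X^{i-1}$, and the second follows from concavity of conditional entropy in the joint law together with the identity $\frac{1}{n}\sum_{i}\Pr\{\tilde X_i=x,\tilde U_i=u\}=Q(x,u)$, which is exactly the common-joint-type property of $\tilde A$. (This is the same entropy calculus the paper runs in Appendix~\ref{app:LogSumLemma}, just in the opposite direction.) Your auxiliary-law detour via $\mu(\bx)=\prod_i\phi_i(x^{i-1})(x_i)$ is therefore both unnecessary and strictly weaker: as you noted, it only yields $\frac{1}{n}\log|\tilde A|\leq H(X_0|U_0)+\sum_u Q_U(u)D(Q_{X|U}(\cdot|u)\,\|\,u)$ with nonnegative residual, and the remedy is to drop that argument, not to try to kill the KL terms.

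The genuinely hard half is the \emph{right} inequality $H(X_0|U_0)\leq\frac{1}{n}\log|\tilde A|+\log^2m/m$, i.e.\ the \emph{lower} bound on $|\tilde A|$, and that is where the circularity you describe actually lives. One needs $Q_{X|U}(\cdot|u)$ to track the $m$-type $u$ to within the quantization resolution, so that $H(X_0|U_0)\approx\sum_u Q_U(u)H(u)$ matches the branching factors that count $|\tilde A|$; but this forces $\phi_i$ to quantize conditional laws over $\tilde A$ itself rather than over $A$, and $\tilde A$ is only produced after pigeonholing on $\phi$. Your ``count paths with a Pinsker-type estimate'' sketch for this direction contains no construction that actually forces $Q_{X|U}(\cdot|u)$ to agree with $u$. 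Breaking that circularity --- via a level-by-level iterative refinement whose cumulative losses are kept subexponential by the numerical hypothesis on $m$ and $n$ --- is the real technical content of the Ahlswede--Yang--Zhang proof, and is the piece still missing from your argument.
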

Note that the lemma holds starting at $m=2^{16|\cX|^2}$, which is extremely large, even for $X$ being binary. It is therefore expected that finite blocklength versions of results based on the lemma will be very loose. Nevertheless, it is sufficient for proving a converse for the error exponent (as in \cite{Ahlswede97}), and also for proving the converse for the rate (as we show next).

\subsection{The Blowing Up Lemma}
The blowing up lemma (see, e.g., \cite[Lemma 12]{RaginskySasonConcentration}) will play a key role in proving the converse part of Theorem~\ref{thm:RID}.
\begin{lem}[Blowing-up Lemma]\label{lem:BlowingUp}
    Let $\bZ$ be distributed i.i.d. according to $P_Z$ and let $B \subseteq \cZ^n$ be a given set. Then:
    \begin{align}
      &\Pr\{\bZ \notin \Gamma_H^\delta(B)\} \leq \exp\left[-2n\left(\delta-\sqrt{\frac{1}{2n}\ln\left(\frac{1}{\Pr\{\bZ \in B\}}\right)}\right)^2\right], \\
&\quad\quad\quad \forall \delta >
      \sqrt{\frac{1}{2n}\ln\left(\frac{1}{\Pr\{\bZ \in B\}}\right)}.\nonumber
    \end{align}
\end{lem}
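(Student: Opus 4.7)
The plan is to deduce the lemma from McDiarmid's bounded-differences inequality applied to the Hamming distance from the set $B$. First I would introduce the function $f:\cZ^n\to\Reals_+$ defined by $f(\bz) \triangleq \min_{\bb\in B} d_H(\bz,\bb)$, where $d_H$ is the unnormalized Hamming distance. The triangle inequality immediately shows that $f$ is coordinate-wise $1$-Lipschitz: altering a single symbol of $\bz$ changes $d_H(\bz,\bb)$ by at most one for every $\bb\in B$, and hence changes $f(\bz)$ by at most one.

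With this Lipschitz property in hand, I would invoke McDiarmid's inequality---equivalently, Azuma--Hoeffding applied to the Doob martingale obtained by exposing the coordinates of $\bZ$ one at a time---to conclude that for every $t>0$,
\begin{equation}
\Pr\{f(\bZ) \geq \EE f(\bZ) + t\} \leq e^{-2t^2/n}, \qquad \Pr\{f(\bZ) \leq \EE f(\bZ) - t\} \leq e^{-2t^2/n}.
\end{equation}

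The key observation is that $\{\bZ \in B\}$ coincides exactly with the event $\{f(\bZ)=0\}$, so the lower-tail bound applied with $t=\EE f(\bZ)$ yields $\Pr\{\bZ\in B\} \leq \exp(-2(\EE f(\bZ))^2/n)$, which rearranges to $\EE f(\bZ) \leq \sqrt{(n/2)\ln(1/\Pr\{\bZ\in B\})}$. This is how the quantity $\Pr\{\bZ\in B\}$ enters the final bound.

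Finally, under the normalization $d(\cdot,\cdot)=\frac{1}{n}\sum\rho(\cdot,\cdot)$ adopted in Sec.~\ref{sec:setting}, the event $\bZ \notin \Gamma_H^\delta(B)$ is precisely $\{f(\bZ) > n\delta\}$. Applying the upper-tail bound with $t = n\delta - \EE f(\bZ)$---which is strictly positive by the hypothesis on $\delta$ together with the just-derived bound on $\EE f(\bZ)$---produces
\begin{equation}
\Pr\{\bZ \notin \Gamma_H^\delta(B)\} \leq \exp\left[-2n\left(\delta - \sqrt{\tfrac{1}{2n}\ln\left(\tfrac{1}{\Pr\{\bZ\in B\}}\right)}\right)^2\right],
\end{equation}
as claimed. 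There is no substantive obstacle: McDiarmid's inequality is a standard off-the-shelf concentration tool, and the only care required is in tracking the factor of $n$ when translating between the unnormalized distance $f$ and the fractional expansion radius $\delta$. An entirely analogous route via Marton's transportation-cost inequality (combined with Pinsker) yields the same constants, but the martingale approach is the most self-contained.
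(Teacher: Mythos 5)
Your proof is correct. Note that the paper itself does not prove this lemma; it cites it directly from \cite[Lemma~12]{RaginskySasonConcentration}, where the standard derivation goes through Marton's transportation-cost inequality for product measures combined with a coupling/triangle-inequality argument in $W_1$-distance. Your route via McDiarmid (equivalently, Azuma--Hoeffding on the Doob martingale of the $1$-Lipschitz function $f(\bz)=\min_{\bb\in B}d_H(\bz,\bb)$) is a genuinely different, arguably more self-contained derivation that lands on the exact same constants: the lower-tail bound applied at $t=\EE f(\bZ)$ converts the information about $\Pr\{\bZ\in B\}$ into the bound $\EE f(\bZ)\leq\sqrt{\tfrac{n}{2}\ln\tfrac{1}{\Pr\{\bZ\in B\}}}$, and the upper-tail bound then gives the claimed exponent after one monotonicity step (squaring $\delta-\EE f(\bZ)/n\geq\delta-\sqrt{\tfrac{1}{2n}\ln\tfrac{1}{\Pr\{\bZ\in B\}}}>0$, which you implicitly used but could have spelled out). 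What the transportation route buys is a conceptual link to Pinsker-type inequalities and a cleaner path to generalizations beyond product measures (e.g.\ Markov chains, Dobrushin conditions), while the martingale route buys elementariness and avoids introducing optimal-coupling machinery. Both are standard; either is acceptable here.
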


\subsection{Proof of Theorem~\ref{thm:RID}}\label{ssec:proofRIDthm}

As discussed above, the direct part of the theorem follows from the combination of \cite[Thm. 2]{Ahlswede97} and Prop.~\ref{prop:VL=FL} above. Nevertheless, for completeness, we give a direct proof here. It will be simpler than the proof of \cite[Thm. 2]{Ahlswede97}, since it only involves the achievable rate (and not error exponents). After reading through the direct part, it should be easier to the reader to follow the converse proof.

\begin{proof}[Proof of Theorem~\ref{thm:RID}, direct part]

Let $W:\cP(\cX \ra \cU)$ be an arbitrary channel into an arbitrary (discrete) alphabet $\cU$. Following Lemma~\ref{lem:covering}, let $\omega:\typecl_{P_X,\delta_n} \ra \cU^n$ be the mapping that covers the typical set $\typecl_{P_X,\delta_n}$. Let $\cC = \left\{\omega(\bx) : \bx \in \typecl_{P_X,\delta_n} \right\}$ denote the set of codewords. Let $\bu = \omega(\bx)$. The mapping $T(\cdot)$ is defined as follows:

\begin{equation}
  T(\bx) = \left\{
             \begin{array}{ll}
               [\bu,\type_{\bx\bu}], & \hbox{$\bx \in \typecl_{P_X,\delta_n}$;} \\
               \texttt{e} , & \hbox{otherwise.}
             \end{array}
           \right.
\end{equation}
In other words, the signature $T(\bx)$ consists of the (index to the) codeword $\bu$, and also the joint type of $\bx$ and $\bu$. The special symbol $\texttt{e}$ denotes `erasure'.

The decision function $g(T(\bx),\by)$ shall return $\MAYBE$ only if one of the following occurs:
\begin{itemize}
  \item The erasure symbol was received, i.e. $T(\bx)=\texttt{e}$, or
  \item $\by$ was not typical, i.e. $\by \notin \typecl_{P_Y,\delta_n}$, or
  \item $\by$ is typical, $T(\bx) \neq \texttt{e}$, and there exists a joint type $Q\in\cP(\cX\times\cU\times \cY)$ s.t.:
\begin{enumerate} 
  \item The marginal of $Q$ w.r.t. $U,Y$ is $\type_{\bu\by}$ (i.e. the empirical joint distribution of the sequences $\bu,\by$),
  \item the marginal of $Q$ w.r.t. $X,U$ is $\type_{\bx\bu}$ and
  \item the marginal of $Q$ w.r.t. $X,Y$ satisfies  $\EE[\rho(X,Y)]\leq D$.
\end{enumerate}
\end{itemize}

First, let us see why this scheme is admissible, i.e. there are no false negatives. Note that the only case where the scheme may return $\NO$ is when the signature consists of $[\bu,\type_{\bx\bu}]$. If $d(\bx,\by) \leq D$, then the joint type $Q = \type_{\bx\bu\by}$ satisfies the conditions 1) - 3) above, and therefore the scheme would return $\MAYBE$, as required.

Next, note that the rate of the scheme is arbitrarily close to $I(P_X,W)$, since the joint $n$-type $\type_{\bx,\bu}$ adds $O(\log(n)/n)$ to the rate, and the signature $\texttt{e}$ adds a negligible amount to the rate.

The last point that needs to be proved is that the probability of $\MAYBE$ vanishes. Consider the three error events:

\begin{itemize}
  \item $T(\bx) = \texttt{e}$. This happens if and only of $\bx \notin \typecl_{P_X,\delta_n}$, which, following \eqref{eqn:probTypicalSet} and the definition of $\delta_n$, vanishes with $n$.
  \item For a similar reason, the probability that $\by \notin \typecl_{P_Y,\delta_n}$ vanishes with $n$.
  \item The remaining event is when $\bx$ and $\by$ are both typical (i.e. have empirical distributions near $P_X$ and $P_Y$, respectively), and there exists a joint type $Q$  that satisfies 1)-3) above.
\end{itemize}

Fix a sequence $\bx \in \typecl_{P_X,\delta_n}$ and let $P_Y' \in \cP_n(\cY)$ be a given $n$-type. Conditioned on $\bY \in T_{P_Y'}$, we know that $\bY$ is distributed uniformly on $T_{P_Y'}$. Among those sequences, the fraction of these sequences that trigger a $\MAYBE$ are the ones residing in the $Q_{Y|U}(\bu)$-shell of some distribution $Q$ that satisfies 1)-3). Note that the requirement 3) is on the distribution $Q$, and not on the sequence $\by$. Therefore the probability of $\MAYBE$, conditioned on $\bX = \bx$ and $\by \in T_{P_Y'}$, is upper bounded, for any $\eps>0$ and large enough $n$, by
\begin{equation}
  \sum_{
Q:
Q_{XU} = \type_{\bx\bu},
\EE_Q[\rho(X',Y')]\leq D,
Q_{Y} = P_Y'
 } 2^{-n(I(Y';U')-\eps)},
\end{equation}
where $X',U',Y'$ are jointly distributed according to $Q$. Since $P_Y'$ is arbitrarily close to $P_Y$ and $\type_{\bx\bu}$ is arbitrarily close to $(P_X,W)$, we can sum over the (polynomial number of) types $P_Y'$, and bound the probability for $\MAYBE$, now conditioned on $\bx \in T_{P_X,\delta_n}$ and $\by \in T_{P_Y,\delta_n}$, is upper bounded, for any $\eps>0$ and large enough $n$, by
\begin{equation}\label{eqn:thirdErrorEvent}
  \max_{
Q:
Q_{XU} = (P_X,W),
\EE_Q[\rho(X',Y')]\leq D,
Q_{Y} = P_Y
 } 2^{-n(I(Y';U')-\eps)}.
\end{equation}

It therefore remains to show that if $W$ is chosen according to \eqref{eqn:R_ID_info}, then there are no types $Q$ that satisfy 1)-3), for which $U'$ and $Y'$ are independent.  This fact will make the exponent in \eqref{eqn:thirdErrorEvent} to be strictly positive, completing the proof.

In order to show this, let $\Delta_R, \Delta_D >0$ be arbitrarily small constants, s.t.
\begin{equation}
  R = \Delta_R + \bar R_\ID(D-\Delta_D).
\end{equation}
In other words, we know that $(P_X,W)$ satisfies
\begin{equation}\label{eqn:preLongAlign}
  \sum_{u \in \cU} P_U(u) \bar\rho(P_{X|U}(\cdot|u),P_Y) \geq D + \Delta_D.
\end{equation}
and we need to prove that for all joint distributions $Q$ that satisfy $Q_{XU}=(P_X,W)$, $Q_Y=P_Y$, and $\EE_Q[\rho(X,Y)]\leq D$, we must have $I(Y;U) > 0$. To prove this, assume, for contradiction, that $Q$ is such a distribution, where $Y$ and $U$ are independent. Then we can write the following:
\begin{align}
  D & \geq \EE_Q[\rho(X,Y)] \\
  D & \geq \sum_{u \in \cU} P_U(u) \sum_{x,y} Q_{XY|U}(x,y|u) \rho(x,y) \\
  D & \geq \sum_{u \in \cU} P_U(u) \sum_{x,y} Q_{Y|U}(y|u) Q_{X|YU}(x,y|u) \rho(x,y) \\
  D & \geq \sum_{u \in \cU} P_U(u) \sum_{x,y} P_{Y}(y) Q_{X|YU}(x,y|u) \rho(x,y)
\end{align}
Note that for all $u$, the term $P_{Y}(y) Q_{X|YU}(x,y|u)$ defines a joint distribution on $X,Y$, with marginals $P_Y$ and $Q_{X|U}(\cdot|u) = P_{X|U}(\cdot|u)$. Therefore the inner summation is an upper bound, by definition, for the term $\bar\rho( P_{X|U}(\cdot|u),P_Y)$, and we get that
\begin{equation}
  D  \geq \sum_{u \in \cU} P_U(u) \bar\rho( P_{X|U}(\cdot|u),P_Y),
\end{equation}
which contradicts \eqref{eqn:preLongAlign}. Therefore $U$ and $Y$ can never be independent, and the exponent in \eqref{eqn:thirdErrorEvent} is strictly positive. This makes sure that the third error event also has a vanishing probability, and the proof is concluded by using the union bound on the error events.
\end{proof}

\bigskip

To prove the converse part, we start by following the steps similar to that of the converse of \cite[Thm. 2]{Ahlswede97}. In our case these steps are actually slightly simpler than those of \cite[Thm. 2]{Ahlswede97} because of the restriction to fixed-length compression schemes. As mentioned before, the key step that is missing from \cite[Thm. 2]{Ahlswede97} is the ability to transform an event with probability that vanishes with an exponent that is arbitrarily small, to an event whose probability goes to $1$. The key to achieving this is the blowing-up lemma, as detailed in what follows.

\begin{proof}[Proof of Theorem~\ref{thm:RID}, converse part]

Let $\Delta_R,\Delta_D>0$, and let $T,g$ be a sequence of schemes with rate $R < \bar R_\ID(D+\Delta_D) - \Delta_R$. Our goal is to show that the probability for $\MAYBE$ cannot vanish with $n$, for arbitrarily small $\Delta_R$ and $\Delta_D$.

Let $i \in [1:2^{nR}]$, and let
\begin{equation}
  T^{-1}(i) \triangleq \{\bx \in \cX^n : T(\bx) = i\}.
\end{equation}

Since the given scheme is admissible, we must have
\begin{align}
    \Pr\{\MAYBE\}
    &= \sum_{i=1}^{2^{nR}} \Pr\{T(\bX) = i\} \Pr\{\MAYBE | T(\bX) = i\}\\
    & \geq \sum_{i=1}^{2^{nR}} \Pr\left\{T(\bX) = i\right\} \Pr\left\{\bY \in \Gamma^D\left(T^{-1}(i)\right)\right\}.
\end{align}

For some $\gamma>0$, define the set
\begin{equation}
  A_i \triangleq T^{-1}(i) \cap \typecl_{P_X,\gamma},
\end{equation}
Since by definition $A_i \subseteq T^{-1}(i)$, we also have
\begin{align}
    \Pr\{\MAYBE\}
    & \geq \sum_{i=1}^{2^{nR}} \Pr\left\{T(\bX) = i\right\} \Pr\left\{\bY \in \Gamma^D(A_i) \right\}.
\end{align}

It appears that it is enough to consider only $A_i$'s that are `large', by the following lemma:
\begin{lem}[Most $A_i$'s are large]\label{lem:MostAiAreLarge}
    There exists $n_0 = n_0(\gamma) >0$, s.t. for all $n>n_0$,
    \begin{equation}
      \sum_{i: |A_i| \leq 2^{n[H(P_X)-R -2\gamma']} }\Pr\{\bX \in A_i\} \leq 2^{-\gamma' n },
    \end{equation}
    where $\gamma' \triangleq 2 \gamma |\cX| \log (1/\gamma)$.
\end{lem}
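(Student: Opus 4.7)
My plan is to prove this by a direct counting argument that exploits two facts: (i) every sequence $\bx$ in the typical set $\typecl_{P_X,\gamma}$ has small $P_X^n$-probability, of the order $2^{-n[H(P_X)-\gamma']}$; and (ii) there are at most $2^{nR}$ bins $A_i$ in total, so if we restrict to those that are ``small,'' the combined cardinality is limited. Combining these two observations will immediately yield the claimed bound $2^{-\gamma' n}$.

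First, I would invoke the standard entropy-typicality estimate (essentially \cite[Lemma 2.7]{CsiszarKorner2nd}): for any $\bx \in \typecl_{P_X,\gamma}$, since $|\type_\bx(x) - P_X(x)| \leq \gamma$ for all $x$ and $\type_\bx(x) = 0$ whenever $P_X(x) = 0$, one has
\begin{equation}
P_X^n(\bx) \;=\; 2^{n \sum_{x} \type_\bx(x)\log P_X(x)} \;\leq\; 2^{-n[H(P_X) - \gamma']}
\end{equation}
for all $n \geq n_0(\gamma)$, where the constant $\gamma' = 2\gamma|\cX|\log(1/\gamma)$ in the statement of the lemma is exactly the uniform-continuity-of-entropy slack that absorbs the deviation $\sum_x (\type_\bx(x) - P_X(x))\log(1/P_X(x))$. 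Note that restriction to typicality guarantees $\bx$ only uses letters in the support of $P_X$, which prevents any divergence in this estimate.

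Next, for each index $i$ with $|A_i| \leq 2^{n[H(P_X) - R - 2\gamma']}$, I would bound the bin probability by
\begin{equation}
\Pr\{\bX \in A_i\} \;=\; \sum_{\bx \in A_i} P_X^n(\bx) \;\leq\; |A_i|\cdot 2^{-n[H(P_X)-\gamma']} \;\leq\; 2^{-n[R + \gamma']}.
\end{equation}
Then, since the signature $T$ has range of size at most $2^{nR}$, the total number of ``small'' bins is also bounded by $2^{nR}$, and summing over them gives
\begin{equation}
\sum_{i:\, |A_i| \leq 2^{n[H(P_X) - R - 2\gamma']}} \Pr\{\bX \in A_i\} \;\leq\; 2^{nR}\cdot 2^{-n[R+\gamma']} \;=\; 2^{-\gamma' n},
\end{equation}
which is exactly the claim.

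I do not expect a genuine obstacle here; the lemma is really a packaging of the well-known fact that the ``operational'' entropy of $P_X$ restricted to its typical set is $H(P_X)$, so any bin that fails to reach near-maximal log-size $n[H(P_X)-R]$ cannot carry much mass. The one point requiring mild care is choosing $n_0(\gamma)$ large enough that the entropy-typicality estimate is valid and that the $\gamma'$ slack dominates any lower-order $O(\log n / n)$ terms that might appear depending on the precise typicality convention; this is purely routine.
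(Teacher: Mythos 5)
Your proposal is correct and follows essentially the same route as the paper: both reduce to the per-bin estimate $\Pr\{\bX\in A_i\}\leq |A_i|\,2^{-n(H(P_X)-\gamma')}$ followed by a union bound over the at most $2^{nR}$ bins, the only difference being that the paper obtains this estimate by decomposing over nearby type classes and bounding $\Pr\{\bX\in A_i\cap T_P\}\leq |A_i\cap T_P|/|T_P|$, whereas you work per sequence -- an equivalent shortcut. One small imprecision to watch: the deviation $\sum_x(\type_\bx(x)-P_X(x))\log(1/P_X(x))$ that you say $\gamma'$ ``absorbs'' is not actually bounded in magnitude by $\gamma'$ when $P_X$ has tiny nonzero entries; what the argument requires, and what does hold, is only the one-sided bound obtained by writing $\sum_x\type_\bx(x)\log(1/P_X(x))=H(\type_\bx)+D(\type_\bx||P_X)\geq H(\type_\bx)\geq H(P_X)-|\cX|\gamma\log(1/\gamma)$, the last step being precisely the entropy-continuity estimate of \cite[Lemma~2.7]{CsiszarKorner2nd} that the paper also invokes.
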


\begin{proof} Appendix~\ref{app:MostAiAreLarge}.\end{proof}
Next, consider a specific $A = A_i$, and suppose that $|A| \geq 2^{n(H(X)-R-2\gamma')}$ (by the previous lemma we know that this occurs with high probability). We invoke the inherently typical subset lemma and conclude that for any $m > 2^{16 |\cX|^2}$ and large enough $n$, there exists a subset $\tilde A \subseteq A$, for which:
\begin{enumerate}
  \item The size of the set $\tilde A$ is essentially the same as $A$:
  \begin{equation}\label{eqn:ITSproprtey1}
    \frac{1}{n}\log \frac{|A|}{|\tilde A|} \leq |\cX|(m+1)^{|\cX|} \frac{\log(n+1)}{n}.
  \end{equation}
  \item There exists an $n$-type $Q \in \cP_n(\cX \times \cU_m)$, and a causal mapping $\phi:\cX^n \ra \cU_m^n$, s.t. for every $\bx \in \tilde A$,
      \begin{equation}\label{eqn:ITSproprtey2}
        \type_{\bx,\phi(\bx)} = Q.
      \end{equation}
  \item If $X_0,U_0$ are jointly distributed according to $Q$, then
  \begin{equation}\label{eqn:ITSproprtey3}
    \frac{1}{n}\log |\tilde A| \leq H(X_0 | U_0) \leq \frac{1}{n}\log |\tilde A| + \frac{\log^2m}{m}.
  \end{equation}
\end{enumerate}

Since $A \subseteq T_{P_X,\gamma}$, we must have that the marginal of $Q$ w.r.t. $X$ has a type $P$ that satisfies $\|P-P_X\|_\infty \leq \gamma$.

Since $\tilde A \subseteq A$, we have
\begin{equation}
  \Pr\{\bY \in \Gamma^D A \} \geq  \Pr\{\bY \in \Gamma^D \tilde A \}.
\end{equation}

\bigskip

Let $\eps >0$ and let $V:\cX\times \cU_m \ra \cY$ be a stochastic matrix, s.t.
\begin{equation}\label{eqn:propertyV}
  \sum_{x,y,u} Q(x,u) V(y|x,u) \rho(x,y) \leq D - 2\eps.
\end{equation}
In other words, if $X_0,U_0,Y_0$ are distributed according to $(Q,V)$, then $\EE[\rho(X_0,Y_0)]\leq D-2\eps$.


Following \eqref{eqn:propertyV}, for all $\bx \in \tilde A$ and large enough $n$, $\by \in T_{Y_0|X_0U_0,\delta_n}(\bx,\phi(\bx))$ implies $d(\bx,\by)\leq D-\eps$.
Next, define the set $F$ to be the union of all such conditional type classes as follows:
\begin{equation}
  F \triangleq \bigcup_{\bx \in \tilde A} T_{Y_0|X_0U_0,\delta_n}(\bx,\phi(\bx)).
\end{equation}
Clearly, from the above it follows that $F \subseteq \Gamma^{D-\eps}\left(\tilde A\right)$.

Let $\eps' \triangleq \eps/\rho_{\max}$. Note the following fact:
\begin{equation}\label{eqn:MixedDistortions}
  \Gamma_H^{\eps'} \left(\Gamma^{D-\eps}\left(\tilde A\right)\right) \subseteq \Gamma^{D}\left(\tilde A\right).
\end{equation}
To see this, suppose $\by \in \Gamma_H^{\eps'} \left(\Gamma^{D-\eps}\left(\tilde A\right)\right)$.
Then there exists $\by' \in \Gamma^{D - \eps} (\tilde A)$ s.t. $d_H(\by,\by') \leq \eps'$. Also, there exists $\bx \in \tilde A$ s.t. $d(\bx,\by') \leq D - \eps$. Now write
    \begin{align}
      n\cdot d(\bx, \by)
      & = \sum_{i=1}^n d(x_i,y_i) \\
      & = \sum_{i : y_i = y_i'} d(x_i,y_i') + \sum_{i : y_i \neq y_i'} d(x_i,y_i') \\
      & \leq \sum_{|i=1}^n d(x_i,y_i') + \sum_{i : y_i \neq y_i'} \rho_{\max} \\
      & \leq n(D-\eps) + n \eps' \rho_{\max} \\
      & =  nD.
    \end{align}
With \eqref{eqn:MixedDistortions} above we have
\begin{align}
  \Pr\left\{\by \in \Gamma^D(A)\right\} & \geq \Pr\left\{\by \in \Gamma_H^{\eps'}\left(F\right)\right\}.
\end{align}
Next, we apply the blowing up lemma to the set $F$, and have:
\begin{align}
\Pr\left\{\by \in \Gamma_H^{\eps'}\left(F\right)\right\}
  & \geq 1 - \exp\left[-2n\left(\eps'-\sqrt{\frac{1}{2n}\ln\left(\frac{1}{\Pr\{\bY \in F\}}\right)}\right)^2\right],
\end{align}
whenever
\begin{equation}\label{eqn:blowingUpCondition}
  \eps'>\sqrt{\frac{1}{2n}\ln\left(\frac{1}{\Pr\{\bY \in F\}}\right)}.
\end{equation}
Since $\eps'>0$ is a constant, all that is left is to prove that $\Pr\{\bY \in F\}$ does not vanish exponentially. Note that we only need to show the existence of a single channel $V$ for which $\Pr\{\bY \in F\}$ does not vanish exponentially. To show this, we follow the steps of \cite{Ahlswede97}:

First, since $F$ is a union of $V$-shells, we deduce that all the sequences in it are typical:
\begin{equation}
  F \subseteq T_{Y_0,\delta_n\cdot |\cX||\cU_m|},
\end{equation}
where $Y_0$ is an RV with the marginal distribution of $(Q,V)$. It follows that
\begin{equation}\label{eqn:PrYinF}
  \Pr\{\bY \in F\} \geq \frac{|F|}{2^{n H(Y_0)}} 2^{-n(D(Y_0||Y) + \xi_n)},
\end{equation}
where $\xi_n \ra 0$ with $n$. We see that the key to evaluating $\Pr\{\bY \in F\}$ is evaluating $|F|$.

\bigskip

Let $\tilde \bX$ be uniformly distributed on $\tilde A$, and let $\tilde \bU \triangleq \phi(\tilde\bX)$. Note that $\tilde \bX,\tilde \bU$ are random vectors which are not i.i.d., but have the joint type $Q$ with probability $1$. Next, define $\tilde \bY$ to be the output of the memoryless channel $V$ with inputs $\tilde\bX,\tilde\bU$. Here are several facts regarding the random vector $\tilde \bY$, expressed by the random variables $X_0,U_0,Y_0$ that are distributed according to $(Q,V)$.
\begin{itemize}
  \item Since $\tilde \bY$ is conditionally typical w.h.p. (given $\tilde \bX = \bx \in \tilde A$), \begin{align}
    \Pr\{\tilde\bY \in F | \tilde \bX = \bx\}
    & \geq \Pr\{\tilde \bY \in T_{Y_0|X_0U_0,\delta_n}(\bx,\phi(\bx)) | \tilde \bX = \bx\} \\
    & \geq 1 - \tfrac{|\cX||\cY||\cU_m|}{4n\delta_n}.
  \end{align}
  \item Since the above holds for any $\bx \in \tilde A$, we also have
      \begin{align}
    \Pr\{\tilde\bY \in F \}
    &= \sum_{\bx \in \tilde A} \Pr\{\tilde\bX = \bx\} \Pr\{\tilde\bY \in F | \tilde \bX = \bx\} \nonumber\\
%
    & \geq 1 - \tfrac{|\cX||\cY||\cU_m|}{4n\delta_n}.\label{eqn:PrtildeYinF}
  \end{align}
  %
\end{itemize}

\bigskip

For convenience, let $\gamma_n \triangleq \frac{|\cX| |\cY|}{4n\delta_n^2}$. Note that by the delta convention, we have $\gamma_n \ra 0$. Define the indicator RV $\chi_F$ as
\begin{equation}\
  \chi_F(\tilde \bY) \triangleq 1\{\tilde \bY \in F\}
\end{equation}

It follows that
\begin{align}
  H(\tilde \bY)
  & = H\left(\tilde \bY , \chi_F(\tilde \bY)\right) \\
  & = H\left(\tilde \bY | \chi_F(\tilde \bY)\right) + H(\chi_F(\tilde\bY))\\
  & = H\left(\tilde \bY | \chi_F(\tilde \bY)\right) + h(\Pr\{\tilde\bY \in F\})\\
  & \overset{(a)} \leq H\left(\tilde \bY | \chi_F(\tilde \bY)\right) + h(\gamma_n)\\
  & = H\left(\tilde \bY | \tilde\bY \in F \right)\Pr\{\tilde\bY \in F\} +
  H\left(\tilde \bY | \tilde\bY \notin F \right)\Pr\{\tilde\bY \notin F\}
   + h(\gamma_n)\\
  & \leq H\left(\tilde \bY | \tilde\bY \in F \right) +
  n \log |\cY| \Pr\{\tilde\bY \notin F\}
   + h(\gamma_n)\\
  & \leq \log |F| +
  n \log |\cY| \Pr\{\tilde\bY \notin F\}
   + h(\gamma_n)\\
  & \leq \log |F| +
  \gamma_n n \log |\cY|
   + h(\gamma_n). \label{eqn:HofYtildeBound}
\end{align}

where $(a)$ follows from \eqref{eqn:PrtildeYinF} for $n$ large enough s.t. $\gamma_n < 1/2$.

The last derivation reveals that we can bound $H(\tilde \bY)$ to get a lower bound on $|F|$ (and by that a lower bound on $\Pr \{\bY \in \Gamma^D(A)\}$).

The next step follows \cite[(4.26)-(4.30)]{Ahlswede97} almost verbatim. For completeness, we pack the argument into a lemma, and prove it in the appendix (with simpler notation than in \cite{Ahlswede97}).
\begin{lem}\label{lem:LogSumLemma}
    With $\tilde\bY$ defined above, we have
    \begin{equation}
      \frac{1}{n}H(\tilde\bY) \geq H(Y_0 | U_0) - \frac{\log^2m}{m}.
    \end{equation}
\end{lem}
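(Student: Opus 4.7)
The plan is to decompose $H(\tilde\bY)$ via the chain-rule identity
\begin{equation}
H(\tilde\bY) = H(\tilde\bX) + H(\tilde\bY\mid\tilde\bX) - H(\tilde\bX\mid\tilde\bY),
\end{equation}
and to control each of the three terms using the inherent-typicality structure. The first term is $H(\tilde\bX)=\log|\tilde A|$ by the uniformity of $\tilde\bX$ on $\tilde A$, and by property \eqref{eqn:ITSproprtey3} of the inherently typical subset lemma this satisfies $\log|\tilde A| \geq nH(X_0\mid U_0) - n\tfrac{\log^2 m}{m}$. The second term is exact: since $\tilde\bU=\phi(\tilde\bX)$ is a deterministic function of $\tilde\bX$ and $V$ is memoryless, $H(\tilde\bY\mid\tilde\bX)=\sum_i H(V(\cdot\mid\tilde X_i,\tilde U_i))$, and because every $\bx\in\tilde A$ satisfies $\type_{\bx,\phi(\bx)}=Q$ by \eqref{eqn:ITSproprtey2}, this sum evaluates to $nH(Y_0\mid X_0,U_0)$ for every realization of $\tilde\bX$.

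Using the elementary identity $H(X_0\mid U_0)+H(Y_0\mid X_0,U_0)-H(X_0\mid Y_0,U_0)=H(Y_0\mid U_0)$, the lemma therefore reduces to the upper bound
\begin{equation}
H(\tilde\bX\mid\tilde\bY) \leq nH(X_0\mid Y_0,U_0) + O\!\left(n\tfrac{\log^2 m}{m}\right).
\end{equation}
I would establish this by splitting $H(\tilde\bX\mid\tilde\bY)=H(\tilde\bU\mid\tilde\bY)+H(\tilde\bX\mid\tilde\bY,\tilde\bU)$, which is valid because $\tilde\bU=\phi(\tilde\bX)$ is a function of $\tilde\bX$, and bounding each summand separately. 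The conditional term $H(\tilde\bX\mid\tilde\bY,\tilde\bU)$ is handled by a standard type-counting argument: on the high-probability event that the joint type of $(\tilde\bX,\phi(\tilde\bX),\tilde\bY)$ is within $\delta_n$ of $Q\cdot V$, which follows from \eqref{eqn:probConditionalTypicalSet} applied position-wise through the memoryless channel, the posterior support of $\tilde\bX$ given $(\tilde\bY,\tilde\bU)=(\by,\bu)$ lies in a single conditional type class of cardinality at most $2^{nH(X_0\mid Y_0,U_0)+o(n)}$.

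The delicate and main step is bounding $H(\tilde\bU\mid\tilde\bY)$ at the order of $n\tfrac{\log^2 m}{m}$. A crude type-counting bound yields only $H(\tilde\bU\mid\tilde\bY)\leq nH(U_0\mid Y_0)+o(n)$, which is far too loose -- this quantity can be a constant fraction of $n$. The resolution relies on the causal structure of $\phi$: because $\tilde U_i=\phi_i(\tilde X^{i-1})$ depends only on the strict past of $\tilde\bX$, once the channel posterior localizes $\tilde\bX$ to the extent the inherently typical subset lemma guarantees (which is precisely within a residual slack of $n\tfrac{\log^2 m}{m}$), the sequence $\tilde\bU$ is localized at the same rate. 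This is the argument carried out in Ahlswede, Cai, and Zhang \cite[eqs.~(4.26)--(4.30)]{Ahlswede97}, which I would follow almost verbatim in the appendix, packaging the resulting bound into the clean inequality stated in the lemma.
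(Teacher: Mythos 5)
Your reduction is algebraically sound up to the point where you must control $H(\tilde\bX\mid\tilde\bY)$, and the first two pieces ($H(\tilde\bX)=\log|\tilde A|$ and the exact identity $H(\tilde\bY\mid\tilde\bX)=nH(Y_0\mid X_0,U_0)$) are correctly argued. However, the step you single out as ``delicate and main'' -- showing $H(\tilde\bU\mid\tilde\bY)=O\!\left(n\,\tfrac{\log^2 m}{m}\right)$ -- is a genuine gap, and the proposed fix does not work.

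First, the bound you want is simply false in general. Nothing in the inherently-typical structure forces $\tilde\bU=\phi(\tilde\bX)$ to be (approximately) determined by $\tilde\bY$: take $\phi_i(x^{i-1})=x_{i-1}$ (a legitimate causal map) and restrict $\tilde A$ to sequences whose second-order empirical joint type is a fixed product $P\times P$, so that $Q=P\times P$ and $H(X_0\mid U_0)=H(P)$; if the channel $V$ is chosen so that $\tilde\bY$ carries little information about $\tilde\bX$, then $H(\tilde\bU\mid\tilde\bY)\approx H(\tilde\bU)\approx\log|\tilde A|\approx nH(P)$, which is $\Theta(n)$. Second, even where the target $H(\tilde\bX\mid\tilde\bY)\leq nH(X_0\mid Y_0,U_0)+O\!\left(n\,\tfrac{\log^2 m}{m}\right)$ does hold, your plan of bounding $H(\tilde\bU\mid\tilde\bY)$ and $H(\tilde\bX\mid\tilde\bY,\tilde\bU)$ \emph{separately} and adding them necessarily overcounts: in the same example the type-counting bound $H(\tilde\bX\mid\tilde\bY,\tilde\bU)\leq nH(X_0\mid Y_0,U_0)+o(n)$ is nearly vacuous (the true value is $O(1)$, since $\tilde\bU$ essentially determines $\tilde\bX$), while $H(\tilde\bU\mid\tilde\bY)$ eats up the entire budget, so the two upper bounds sum to roughly $2nH(X_0\mid Y_0,U_0)$. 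You cannot split the conditional entropy this way and recover the required constant. Third, and relatedly, the citation to \cite[(4.26)--(4.30)]{Ahlswede97} does not support your plan: those equations (reproduced in the paper's Appendix) establish, via the chain rule $H(\tilde\bY)=\sum_i H(\tilde Y_i\mid\tilde Y^{i-1})$, the Markov step $\tilde Y_i-(\tilde X^{i-1},\tilde U^{i-1})-\tilde Y^{i-1}$, causality of $\phi$, and the log-sum inequality, the comparison $nH(Y_0\mid U_0)-H(\tilde\bY)\leq nH(X_0\mid U_0)-H(\tilde\bX)$; they do not prove, and do not need, any bound on $H(\tilde\bU\mid\tilde\bY)$. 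In fact, once you observe that $H(\tilde\bX\mid\tilde\bY)\leq nH(X_0\mid Y_0,U_0)+\text{(slack)}$ is equivalent (via $\log|\tilde A|\leq nH(X_0\mid U_0)$) to the lemma itself, it becomes clear that the decomposition is a restatement rather than a reduction; the real content must come from an argument such as the paper's log-sum-inequality chain, not from localizing $\tilde\bU$.
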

\begin{proof} Appendix~\ref{app:LogSumLemma}.\end{proof}
We conclude that
\begin{equation}
   \frac{1}{n}\log|F| \geq H(Y_0|U_0) - \frac{\log^2m}{m} - \tfrac{1}{n}h(\gamma_n) - \gamma_n \log|\cY|.
\end{equation}
Substituting back into \eqref{eqn:PrYinF} gives
\begin{align}
  \Pr\{\bY \in F\}
  &\geq \frac{|F|}{2^{n H(Y_0)}} 2^{-n(D(Y_0||Y) + \xi_n)}\\
  &\geq 2^{-n \left[I(Y_0;U_0)+D(Y_0||Y) + \xi_n + \tfrac{\log^2m}{m}
+\tfrac{1}{n}h(\gamma_n) + \gamma_n \log|\cY|
  \right]}.\label{eqn:ExponentThatNeedsToBeSmall}
\end{align}
Our next step is to show that the exponent in \eqref{eqn:ExponentThatNeedsToBeSmall} can be made arbitrarily small, by a proper selection of the channel $V$. For that purpose, write:
\begin{align}
  I(X_0;U_0)
  & \leq H(X_0)-H(X_0|U_0) \\
  & \overset{(a)}\leq H(X)-H(X_0|U_0) + \gamma'\\
  & \overset{(b)}\leq H(X)-\frac{1}{n}\log|\tilde A| + \gamma'\\
  & \overset{(c)}\leq H(X)-\frac{1}{n}\log|A| + \gamma' + \zeta_n\\
  & \overset{(d)}\leq R+3\gamma' + \zeta_n\\
  & \overset{(e)}\leq \bar R_\ID(D-\Delta_D) - \Delta_R +3\gamma' + \zeta_n.
\end{align}
In the above, $(a)$ follows by \cite[Lemma 2.7]{CsiszarKorner2nd} (cf. also the proof of Lemma~\ref{lem:MostAiAreLarge}). $(b)$ and $(c)$ follow from \eqref{eqn:ITSproprtey3} and  \eqref{eqn:ITSproprtey1} respectively, where  $\zeta_n \triangleq |\cX|(m+1)^{|\cX|} \frac{\log(n+1)}{n}$. $(d)$ follows from the assumption that $|A| \geq 2^{n(H(X)-R-2\gamma')}$, and $(e)$ follows from the assumption at the beginning of the proof.

Next, we use the fact that $X$ and $X_0$ have distributions that are very close (with closeness quantified by $\gamma$), and write:
\begin{align}
  \bar R_\ID(D-\Delta_D)
  & = \min_{P_{U|X}: \sum_{u\in\cU} P_U(u) \bar\rho(P_{X|U}(\cdot|u),P_Y)\geq D - \Delta_D } I(X;U),\\  & \leq \min_{P_{U|X_0}: \sum_{u\in\cU} P_U(u) \bar\rho(P_{X_0|U}(\cdot|u),P_Y)\geq D - \Delta_D + \gamma''} I(X_0;U) + \gamma'',
\end{align}
for some $\gamma''>0$ that vanishes with $\gamma$.

Next, let $\eps$, defined above to be arbitrarily small but positive, take the value of $\frac{1}{3}\Delta_D$. Also let $\gamma$ be small enough s.t. $\gamma'' < \eps$, and that $3\gamma' + \gamma'' \leq \frac{1}{2}\Delta_R$. This way, whenever $n$ is large enough so that $\zeta_n \leq \frac{1}{2}\Delta_R$, we have
\begin{equation}
  I(X_0;U_0) \leq \min_{P_{U|X_0}: \sum_{u\in\cU} P_U(u) \bar\rho(P_{X_0|U}(\cdot|u),P_Y)\geq D  -2\eps} I(X_0;U).
\end{equation}
From the equation above, we deduce that
\begin{equation}
  \sum_{u\in\cU_m} P_{U_0}(u) \bar\rho(P_{X_0|U_0}(\cdot|u),P_Y) < D  -2\eps.
\end{equation}
By the definition of $\bar\rho(\cdot,\cdot)$, there exist distributions $\Psi_u(x,y)$, for each $u\in\cU_m$, s.t.
\begin{equation}
  \sum_{u\in\cU_m} P_{U_0}(u) \EE_{\Psi_u}[\rho(X,Y)] \leq D-2\eps.
\end{equation}
Furthermore, the marginals of $\Psi_u$ are $P_{X_0|U_0}(\cdot|u)$ and $P_Y$. With $P_{U_0}$, $\Psi_u$ defines a joint distribution with the following properties:
\begin{itemize}
  \item The marginal of the distribution w.r.t. $X_0,U_0$ is exactly $Q$. Therefore the conditional distribution w.r.t. $Y$ is a feasible choice for $V$ according to \eqref{eqn:propertyV}. Denote the RV at the output of this channel by $Y_0$.
  \item The marginal distribution w.r.t. $U_0,Y_0$ is given by $P_{U_0} \times P_Y$, i.e. $U_0$ and $Y_0$ are \emph{independent}, and we also have that $P_{Y_0} = P_Y$.
\end{itemize}
Indeed, we choose $V$ to be defined according to $P_{U_0},\Psi_u$, and conclude that
\begin{equation}
  I(Y_0;U_0)+D(Y_0||Y) = 0.
\end{equation}
Substituting back into \eqref{eqn:ExponentThatNeedsToBeSmall}, we see that the desired exponent is arbitrarily close to $\frac{\log^2m}{m}$. We then set $m$ to be large enough s.t. the condition \eqref{eqn:blowingUpCondition} holds.

To summarize, so far we have shown that if $|A| \geq 2^{n(H(X)-R-2\gamma')}$, then
\begin{equation}\label{eqn:upperBoundForLargeA}
  \Pr\{\bY \in \Gamma^D(A)\} \geq \eta_n,
\end{equation}
where $\eta_n$ approaches $1$ (exponentially fast), as a result of the blowing up lemma.

Finally, we repeat this for each of the sets $A_i$, and write:
\begin{align}
    \Pr\{\MAYBE\}
    & \geq \sum_{i=1}^{2^{nR}} \Pr\left\{T(\bX) = i\right\} \Pr\left\{\bY \in \Gamma^D(A_i) \right\}\\
    & \geq \sum_{i : |A_i| \geq 2^{n(H(X)-R-2\gamma')}} \Pr\left\{T(\bX) = i\right\} \Pr\left\{\bY \in \Gamma^D(A_i) \right\}\\
    & \overset{(a)}\geq \sum_{i : |A_i| \geq 2^{n(H(X)-R-2\gamma')}} \Pr\left\{T(\bX) = i\right\} \eta_n\\
    & = \eta_n \sum_{i : |A_i| \geq 2^{n(H(X)-R-2\gamma')}} \Pr\left\{T(\bX) = i\right\} \\
    & \overset{(b)}\geq \eta_n \left[ 1- \sum_{i : |A_i| \leq 2^{n(H(X)-R-2\gamma')}}\Pr\left\{T(\bX) = i\right\} \right]\\
    & \geq \eta_n \left[ 1- 2^{-n\gamma'} \right].
\end{align}
In the above, $(a)$ follows from \eqref{eqn:upperBoundForLargeA} and $(b)$ follows from Lemma~\ref{lem:MostAiAreLarge}. Since $\eta_n$ approaches $1$ exponentially fast, we conclude that the probability for $\MAYBE$ approaches $1$, also exponentially fast. This concludes the proof of the converse.
\end{proof}

\section{Schemes based on the triangle inequality}\label{sec:Triangle}

In this section we discuss the triangle-inequality based schemes: the lossy compression - triangle inequality (\LCT) and the type covering - triangle inequality (\TCT).

\subsection{Lossy compression with triangle inequality (\LCT)}
Here we prove that any compression rate above $R^\LCT_\ID(D)$ [defined as the inverse function of $D^\LCT_\ID(R)$, see \eqref{eqn:DIDLCT}] can be attained via a scheme that employs standard lossy compression for the signature assignment and the triangle inequality for the decision rule.

\begin{proof}[Proof of Theorem~\ref{thm:Naive}]
We will show that any pair $(R,D)$ s.t. $D > D^\LCT_\ID(R)$ is achievable, where
\begin{equation}\label{eqn:NaiveRepeated}
  D^\LCT_\ID(R) = \EE[\rho(\hat X,Y)]-D(R),
\end{equation}
where $D(R)$ is the classical distortion-rate function of the source $X$, and $\hat X$, which is independent of $Y$, is distributed according to any marginal distribution of the $D(R)$ achieving distribution.

Let $P_{X|\hat X}$ be an achieving distribution for the standard distortion rate function at rate $R$, and let $\hat X$ be the corresponding marginal distribution. Next, use the covering lemma (Lemma~\ref{lem:covering}) to show the existence of a code $\cC$ that covers the typical set%
\footnote{Recall that $\delta_n$ is defined according to the delta-convention \cite{CsiszarKorner2nd}. See also Sec.~\ref{sec:ProofConverse}.} $\typecl_{P_X,\delta_n}$ with the distribution $P_{X\hat X}$. In other words, for each sequence in the typical set $\bx\in \typecl_{P_X,\delta_n}$, there exists a sequence $\omega(\bx) = \hat \bx \in \cC$ s.t. $\bx,\hat\bx$ are strongly jointly typical according to the distribution $P_{X\hat X}$ (formally,
$ \hat\bx \in \typecl_{V,\delta_n}(\bx)$). We also know by the covering lemma that the code rate is upper bounded by $I(X;\hat X) + \eps_n$ where $\eps_n$ vanishes as $\delta_n \ra 0$. Since $P_{X,\hat X}$ is the distortion-rate achieving distribution, we know that $d(\bx,\hat\bx) \leq D(R) + \eps'_n$, with some $\eps'_n$ that vanishes as $\delta_n \ra 0$.

So far, we have constructed a standard code for lossy compression: for an input $\bx$, if it is typical, then its compressed representation is the index to $\hat\bx$. If $\bx$ is not typical, then declare an `erasure' $\texttt{e}$. Therefore with probability approaching one, we have a guarantee that the distortion between the source and the reconstruction is at most $D(R)$.

Next, we use this code in order to construct a compression scheme for identification. We only need to specify the decision process $g(\cdot,\cdot)$, which proceeds as follows. If $T(\bx) = \texttt{e}$, we set $g(T(\bx),\by)= \MAYBE$. Otherwise, we reconstruct $\hat\bx$, and have
\begin{equation}
  g(T(\bx), \by) = \left\{
                     \begin{array}{ll}
                       \MAYBE, & \hbox{if $d(\hat\bx,\by) \leq D + D(R) + \eps'_n$ ;} \\
                       \NO, & \hbox{otherwise.}
                     \end{array}
                   \right.
\end{equation}
It follows from the triangle inequality that whenever $d(\bx,\by) \leq D$ and when $d(\bx,\hat\bx) \leq D(R) + \eps'_n$, then $d(\hat\bx,\by) \leq D + D(R) + \eps'_n$, triggering a $\MAYBE$. Therefore the scheme is admissible.

Since, by construction, the rate of the scheme is arbitrarily close to $R$, we only need to verify that the probability of $\MAYBE$ vanishes.

Next, assume that the similarity threshold $D$ satisfies $D = D^\LCT_\ID(R)=\Delta_D$ for some $\Delta_D > 0$. We analyze the probability of $\MAYBE$ as follows:
\begin{equation}\label{eqn:asdasdasd}
  \Pr\{\MAYBE\} \leq \Pr\{\bX \notin \typecl_{P_X,\delta_n}\} + \Pr\{\bY \notin \typecl_{P_Y,\delta_n}\} + \Pr\{\MAYBE | \bX \in \typecl_{P_X,\delta_n}, \bY \in \typecl_{P_Y,\delta_n}\}.
\end{equation}
The first two terms in the summation vanish with $n$. To bound the third term, we need to evaluate the probability that a sequence $\bY$ will be in a `ball' of radius $D + D(R) + \eps'_n$ centered at $\hat\bx$. Let $\hat\bx$ be the reconstruction sequence with type $P'_{\hat X}$. We know that it has a type close to $P_{\hat X}$, the marginal of the $D(R)$-achieving distribution. Also, let $P_Y' \in \cP_n(\cY)$ be a given $n$-type. Conditioned on $\bY \in T_{P_Y'}$, $\bY$ is now distributed uniformly on $T_{P_Y'}$. Among those sequences, the fraction of sequences that trigger a $\MAYBE$ is given (up to sup-exponential factors) by
\begin{equation}\label{eqn:asdasd}
  \sum \frac{2^{n H(Y|\hat X)}}{2^{n H(Y)}},
\end{equation}
where the summation is over all joint $n$-types for $Y,\hat{X}$ with marginals $P_Y',P'_{\hat X}$ s.t.
$$\EE[\rho(\hat X,Y)] \leq D + D(R) + \eps_n.$$
The exponent of this expression is given by
\begin{equation}
  \min I(Y;\hat X),
\end{equation}
where the minimization is the same as in \eqref{eqn:asdasd}. We can see that this exponent can be made strictly positive: if it was zero, this would imply that there exist independent $\hat X,Y$ s.t.
$$\EE[\rho(\hat X,Y)] > D + D(R) + \eps_n,$$
which contradicts the assumption that $D > D_\ID^\LCT(R)$. The next step follows by standard type arguments showing that $P_Y'$ and $P_{\hat X}'$ are arbitrarily close to the real $P_Y,P_{\hat{X}}$ (see also the direct part of the proof of Theorem~\ref{thm:RID}). Finally, we see that all three terms in the expression for the probability of $\MAYBE$ in \eqref{eqn:asdasdasd} vanish with $n$, as required.
\end{proof}

\subsection{Type covering with triangle inequality (\TCT)}
Here we prove that for any similarity threshold $D > D_\ID^\TCT(R)$, there exists a sequence of rate-$R$ schemes that are $D$-admissible.

\begin{proof}[Proof of Theorem~\ref{thm:LessNaive}]
    The proof follows the steps of the proof of Theorem~\ref{thm:Naive} above, with the following key exception. The conditional distribution determining the code that describes $X$ remains a design parameter and is optimized at the end (rather than at the beginning of the proof as in Theorem~\ref{thm:Naive}, where it is set to be the one that minimizes the expected distortion between $X$ and $\hat X$).

More formally, let $P_{\hat X|X}$ be a conditional distribution s.t. $I(X;\hat X) > R$.
Again, we use the covering lemma (Lemma~\ref{lem:covering}) to show the existence of a code $\cC$ that covers the typical set $\typecl_{P_X,\delta_n}$ with the distribution $P_{X\hat X}$. In other words, for each $\bx\in \typecl_{P_X,\delta_n}$, there exists a sequence $\omega(\bx) = \hat \bx \in \cC$ s.t. $\bx,\hat\bx$ are strongly jointly typical according to the distribution $P_{X\hat X}$ (formally, $ \hat\bx \in \typecl_{V,\delta_n}(\bx)$). Again, we know by the covering lemma that the code rate is upper bounded by $I(X;\hat X) + \eps_n$ where $\eps_n$ vanishes as $\delta_n \ra 0$. Since $\bx,\hat\bx$ are strongly jointly typical, we know that $d(\bx,\hat\bx) \leq \EE[\rho(X,\hat X)] + \eps'_n$, with some $\eps'_n$ that vanishes as $\delta_n \ra 0$.

The rest of the proof is identical to that of Theorem~\ref{thm:Naive}, where $D(R)$ is replaced by $\EE[\rho(X,\hat{X})]$. If we choose $P_{\hat X|X}$ s.t. $\EE[\rho(\hat X,Y)]-\EE[\rho(X,\hat X)] \geq D$, we can verify that the exponent of the third term in the equivalent of \eqref{eqn:asdasdasd} is positive, proving that the overall probability of $\MAYBE$ vanishes, as required.
\end{proof}

Remarks:
\begin{itemize}
  \item It is obvious that $R_\ID^\TCT(D) \leq R_\ID^\LCT(D)$, since the distortion-rate achieving distribution in \eqref{eqn:DIDLCT} is a feasible transition probability for the expression in \eqref{eqn:DIDTCT}. Therefore the \TCT\ scheme can be regarded as a generalization of \LCT.
  \item In order to simplify the discussion, we have assumed that the distortion measure $\rho(\cdot,\cdot)$ is \emph{symmetric}. Similar results can be obtained for the non-symmetric case, where the only difference is that the triangle inequality is applied in the following form:
\begin{equation}
  d(\hat \bx,\by) \leq d(\hat\bx,\bx) + d(\bx,\by).
\end{equation}
Therefore the compression of $\bx$ needs to be done (in the \LCT\ scheme) for the distortion measure $\rho'(x,\hat x) \triangleq \rho(\hat x,x)$. The decision rule (in the typical case) is given by
\begin{equation}
  g(T(\bx), \by) = \left\{
                     \begin{array}{ll}
                       \MAYBE, & \hbox{if $d(\hat\bx,\by) \leq D + \EE[\rho(\hat X,X)] + \eps'_n$ ;} \\
                       \NO, & \hbox{otherwise.}
                     \end{array}
                   \right.
\end{equation}
    \item If, in addition to the symmetry condition, we would require that $\rho(x,y)=0$ if and only if $x=y$, this would make the measure a \emph{metric}. However, there is no need for this third condition in order for the results to hold.
    \item In principle, one could add another condition that rules out sequences that are not similar, using the modified decision rule:
\begin{equation}
  g(T(\bx), \by) = \left\{
                     \begin{array}{ll}
                       \MAYBE, & \hbox{if $D(R)-D \leq d(\hat\bx,\by) \leq D + D(R)$ ;} \\
                       \NO, & \hbox{otherwise.}
                     \end{array}
                   \right.
\end{equation}
This condition retains the admissibility of the scheme by another usage of the triangle inequality. In essence, it allows us to rule out sequences $\by$ that are too close to $\hat \bx$, since we know that $\bx$ is at distance approximately $D(R)$ from it. A similar argument holds for the \LCT\ scheme as well. However, this condition does not improve the achievable rate, and even in practice, the performance gain is generally negligible (see \cite{IdoiaAllerton2013}).
\end{itemize}

\subsection{Special cases}
In general,
\begin{equation}\label{eqn:generalIneq}
      R_\ID(D) \leq R_\ID^\TCT(D) \leq R_\ID^\LCT(D),
\end{equation}
where the inequalities may be strict (see Fig.~\ref{fig:ternary}). There are cases, however, where some of the inequalities in \eqref{eqn:generalIneq} are \emph{equalities}. We review some of those cases here.

\begin{theorem}\label{thm:D0}
    If there exists a constant $D_0$ (that may depend on $P_Y$), s.t. for all $\hat x \in \cX$
    \begin{equation}
        \sum_{y\in\cX} P_Y(y) \rho(\hat x,y) = D_0,
    \end{equation}
    then
\begin{equation}
  R_\ID^\TCT(D) = R_\ID^\LCT(D) = R(D_0-D),
\end{equation}
where $R(\cdot)$ is the rate-distortion function under distortion measure $\rho(\cdot,\cdot)$.
\end{theorem}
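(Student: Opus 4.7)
The key observation driving the proof is that the hypothesis $\sum_{y} P_Y(y)\rho(\hat x, y) = D_0$ for all $\hat x \in \cX$ implies that for \emph{any} random variable $\hat X$ with values in $\cX$ and independent of $Y$, we have
\begin{equation}
  \EE[\rho(\hat X, Y)] = \sum_{\hat x} P_{\hat X}(\hat x) \sum_y P_Y(y) \rho(\hat x, y) = D_0,
\end{equation}
regardless of the marginal distribution of $\hat X$. This decouples the $\EE[\rho(\hat X,Y)]$ term in both $D_\ID^\TCT(R)$ and $D_\ID^\LCT(R)$ from any optimization over $P_{\hat X|X}$.

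My plan is then to substitute this identity into each of the two expressions separately. For the \TCT\ scheme, starting from the form
\begin{equation}
  R_\ID^\TCT(D) = \min_{P_{\hat X|X}:\, \EE[\rho(\hat X,Y)] - \EE[\rho(X,\hat X)] \geq D} I(X;\hat X),
\end{equation}
the constraint collapses to $\EE[\rho(X,\hat X)] \leq D_0 - D$, so the program becomes exactly the classical rate-distortion minimization and yields $R_\ID^\TCT(D) = R(D_0 - D)$. For the \LCT\ scheme, \eqref{eqn:DIDLCT} gives $D_\ID^\LCT(R) = D_0 - D(R)$ immediately, and inverting yields $R_\ID^\LCT(D) = R(D_0 - D)$.

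Finally I would appeal to the general chain $R_\ID(D) \leq R_\ID^\TCT(D) \leq R_\ID^\LCT(D)$ already established in the paper to conclude that the two quantities coincide, both equalling $R(D_0 - D)$. Since the entire argument reduces to substituting a constant for the expectation $\EE[\rho(\hat X, Y)]$ in two previously derived formulas, there is no real obstacle; the only subtlety worth flagging is that one should verify that the optimizing $\hat X$ in the standard rate-distortion problem is a feasible choice in both minimization formulations (which is immediate, as no additional constraint on $P_{\hat X}$ is imposed beyond those already in $R(D_0 - D)$).
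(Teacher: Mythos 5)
Your proof is correct and follows essentially the same route as the paper: both exploit the observation that $\EE[\rho(\hat X,Y)] = D_0$ for any $\hat X$ independent of $Y$, so that the constraint in the $\TCT$ program collapses to the classical distortion constraint and the $\LCT$ expression simplifies immediately. The only cosmetic difference is that your final appeal to the chain $R_\ID(D)\leq R_\ID^\TCT(D)\leq R_\ID^\LCT(D)$ is superfluous once you have already computed both quantities directly to be $R(D_0-D)$.
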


\begin{proof}
    Under the stipulation in the theorem, we have that for any RV $\hat X$ that is independent of $Y$,
    \begin{align}
      \EE[\rho(\hat X,Y)]
        & =\sum_{\hat x,y} P_Y(y)P_{\hat X}(\hat x) \rho(\hat x,y) \\
        & =\sum_{\hat x} P_{\hat X}(\hat x) \sum_{y} P_Y(y) \rho(\hat x,y) \\
       & = D_0.
    \end{align}
    It follows that
\begin{align}
    D^\TCT_\ID(R)
    &= \max_{P_{\hat X|X} : I(X;\hat X) \leq R} D_0-\EE[\rho(X,\hat X)]\\
    &= D_0 - \min_{P_{\hat X|X} : I(X;\hat X) \leq R} \EE[\rho(X,\hat X)]\label{eqn:RLCTwannabe}\\
    &= D_0 - D(R).
\end{align}
    The proof follows by noting that \eqref{eqn:RLCTwannabe} is equal to $D_\ID^\LCT(R)$.
\end{proof}

The conditions for the above theorem holds, for example, in the following cases:
\begin{itemize}
  \item If $Y$ is equiprobable on $\cY=\cX$, and the columns of $\rho(\cdot,\cdot)$ are permutations of each other (e.g. if $\rho(\cdot,\cdot)$ is a `difference' distortion measure), then $D_0$ is given by
    \begin{equation}
        D_0 = \sum_{y} \frac{1}{|\cX|} \rho(y,1).
    \end{equation}
  \item A special case of the above is the Hamming distortion. In this case, (where $P_Y$ is still equiprobable),
      \begin{equation}
        D_0 = \frac{|\cX|-1}{|\cX|}.
      \end{equation}
\end{itemize}

\medskip

Theorem~\ref{thm:D0} implies that in simple cases the \LCT\ scheme is equivalent (in the rate sense) to the \TCT\ scheme. If $X$ and $Y$ are binary and equiprobable, and the distortion measure is Hamming, it follows from \cite[Theorem 1]{IdoiaAllerton2013} that $R_\ID^\TCT(D) = R_\ID(D)$, i.e. the \TCT\ and the \LCT\ schemes are optimal. However, if $X$ and $Y$ are not equiprobable (and the distortion measure is still Hamming), the \LCT\ scheme differs from the \TCT\ scheme (see \cite[Fig. 2]{idoiaDCC2014}). Is this \TCT\ scheme optimal for the binary-Hamming case? The following theorem answers this question in the affirmative.

\begin{theorem}\label{thm:RIDbinHamming}
    For the binary-Hamming case, i.e. $X \sim \Ber(p)$ and $Y\sim\Ber(q)$,
    \begin{equation}
      R_\ID(D) = R_\ID^\TCT(D).
    \end{equation}
\end{theorem}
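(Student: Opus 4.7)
Since $R_\ID(D) \leq R_\ID^\TCT(D)$ is established before the theorem, the task is to prove the reverse inequality. The plan is to show that, in the binary-Hamming case, an optimal auxiliary $U$ in the single-letter expression \eqref{eqn:R_ID_info} for $R_\ID(D)$ can be replaced by a binary random variable whose two conditional means $P(X=1\mid U)$ lie on opposite sides of $q$---which is precisely the structure of a TCT scheme on binary $\hat X$.

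The execution proceeds in two steps. First, reparametrize $U$ by the $[0,1]$-valued random variable $A := P(X=1\mid U)$. A one-line calculation shows that $\bar\rho(\Ber(a),\Ber(q)) = |a-q|$ under Hamming distortion, so the expression \eqref{eqn:R_ID_info} reduces to
\begin{equation*}
R_\ID(D) = h(p) - \max\{\,\EE[h(A)] : \EE[A]=p,\ \EE[|A-q|]\geq D\,\}.
\end{equation*}
Second, binarize $A$: in the non-degenerate case $D > |p-q|$, any feasible $A$ must place mass on both sides of $q$. Let $\alpha := \Pr(A\leq q)$, $\mu_- := \EE[A \mid A \leq q]$ and $\mu_+ := \EE[A \mid A > q]$, and consider the binary $A'$ taking the value $\mu_-$ with probability $\alpha$ and $\mu_+$ with probability $1-\alpha$. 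The tower property gives $\EE[A']=p$; linearity of $a\mapsto |a-q|$ on each half-interval (combined with $\mu_-\leq q \leq \mu_+$) gives $\EE[|A'-q|]=\EE[|A-q|]$; and concavity of $h$ applied separately on $[0,q]$ and $[q,1]$ (Jensen) gives $\EE[h(A')] \geq \EE[h(A)]$. Hence $A'$ is also feasible for $R_\ID(D)$ with an objective value at most that of $A$.

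To close the loop, identify $A'$ with a TCT scheme: take $\hat X\in\{0,1\}$ with $P(\hat X=0)=\alpha$ and conditionals $P(X=1\mid \hat X=0)=\mu_-$, $P(X=1\mid \hat X=1)=\mu_+$. A direct computation using Hamming yields
\begin{equation*}
\EE[\rho(\hat X, Y)] - \EE[\rho(X,\hat X)] = \alpha(q-\mu_-)+(1-\alpha)(\mu_+-q) = \EE[|A-q|] \geq D,
\end{equation*}
so $\hat X$ is feasible for the TCT program, and $I(X;\hat X) = h(p)-\EE[h(A')]$ matches the $R_\ID$-objective of $A'$. Taking the infimum over $U$ then gives $R_\ID^\TCT(D)\leq R_\ID(D)$. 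The only remaining subtleties are the degenerate sub-cases ($D \leq |p-q|$, or all mass of $A$ on a single side of $q$), which reduce to the trivial rate $R=0$ via a constant $\hat X$. The key conceptual point---and the only place where the binary-Hamming assumption enters in an essential way---is the identity $\bar\rho(\Ber(a),\Ber(q))=|a-q|$, which makes the $R_\ID$-constraint linear on each side of $q$; without such linearity, the Jensen-style binarization would not preserve feasibility.
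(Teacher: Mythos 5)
Your argument is correct and follows essentially the same route as the paper: both exploit that $\bar\rho(\Ber(a),\Ber(q))=|a-q|$ is piecewise linear about $q$ to collapse the auxiliary $U$ to a binary variable with one conditional mean on each side of $q$, and then observe that the resulting constraint coincides with the TCT feasibility condition (with the degenerate same-side case handled separately via a constant $U$). The only cosmetic difference is that you perform the cardinality reduction in one shot via the conditional-mean variable $A=P(X=1\mid U)$ and conditional Jensen on $h$, whereas the paper iterates pairwise merges of $u$-symbols and invokes the data-processing inequality; the two are the same argument in different packaging.
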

\begin{proof}
    We first show that it is sufficient to take the cardinality of $\cU$ in \eqref{eqn:R_ID_info} to be equal to $2$. To that end, note that for the binary-Hamming case we have $\bar\rho(p,q) = |p-q|$.

    Let $\cU$ be an arbitrary (but finite) alphabet, and let $P_{U|X}$ be a given channel from $\cX$ to $\cU$, that attains the minimum in $R_\ID(D)$. It has to satisfy
    \begin{equation}
      \sum_{u\in\cU} P_U(u) \bar\rho(\Wrev(\cdot|u),P_Y)\geq D,
    \end{equation}
    i.e. it is feasible for the optimization in $R_\ID(D)$. Suppose there exist $u_1$ and $u_2$ for which both $\Wrev(1|u_1) \geq q$ and $\Wrev(1|u_2) \geq q$ hold. Next, define a new channel $P'_{U|X}$ that is the result of the channel $P_{U|X}$, followed by a merge of $u_1$ and $u_2$ into a new symbol $u^*$. By the data processing inequality, the mutual information does not increase (in fact, it decreases unless $\Wrev(1|u_1) = \Wrev(1|u_2)$). The new reverse channel given $u^*$ is easily calculated as
    \begin{equation}
      \Wrev'(1|u^*) = \frac{P_U(u_1)\Wrev(1|u_1)+P_U(u_2)\Wrev(1|u_2)}{P_U(u_1)+P_U(u_2)},
    \end{equation}
    and the new prior is
    \begin{equation}
      P_U'(u^*) = P_U(u_1) + P_U(u_2).
    \end{equation}

    Next, observe that
    \begin{align*}
      &P_U(u_1) \bar\rho(\Wrev(\cdot|u_1),P_Y)+P_U(u_2) \bar\rho(\Wrev(\cdot|u_2),P_Y) \\
      &= P_U(u_1) \left|\Wrev(\cdot|u_1)-q\right| +P_U(u_2) \left|\Wrev(\cdot|u_2)-q\right| \\
      & = P_U(u_1) \left(\Wrev(\cdot|u_1)-q\right) +P_U(u_2) \left(\Wrev(\cdot|u_2)-q\right) \\
      & = P_U'(u^*) \left(\Wrev'(\cdot|u^*)-q\right).
    \end{align*}
    Since for all the other values of $u \in \cU$, $P_{X|U}(x|u) = P'_{X|U}(x|u)$, we conclude that the new channel is also feasible, and attains lower mutual information. Therefore the cardinality of $\cU$ can be safely reduced by $1$, assuring that the value of the optimization of $R_\ID(D)$ will not be higher because of this reduction. The same holds for the case where both $\Wrev(1|u_1) \leq q$ and $\Wrev(1|u_2) \leq q$ hold. This process can be applied for any channel $W$ with more than one value of $u$ for which $\Wrev(1|u)$ at the same side of $q$. Therefore for the optimal channel, it suffices to check only channels with $|\cU|=2$.

\medskip

Next, we aim to show that
\begin{equation}\label{eqn:sameObjDiffFeas}
  \min_{P_{U|X}: \sum_{u\in\cU} P_U(u) \bar\rho(\Wrev(\cdot|u),P_Y)\geq D} I(X;U) = \min_{P_{U|X}: \EE[\rho(U,Y)]-\EE[\rho(X,U)] \geq D } I(X;U).
\end{equation}
Since it suffices to look at binary $U$, the feasibility condition on the LHS of \eqref{eqn:sameObjDiffFeas} can be rewritten as
    \begin{equation}\label{eqn:R2feasible}
        P_U(0) \left|\Wrev(1|0)-q\right|+
        P_U(1) \left|\Wrev(1|1)-q\right|
        \geq D.
    \end{equation}

    \medskip

    Let $P^*_{X|U}$ be a minimizing channel for the LSH of \eqref{eqn:sameObjDiffFeas}. Denote the reverse channel by $P^*_{U|X}$. Next, assume that $\Wrev^*(1|0),\Wrev^*(1,1)$ are on different sides of $q$. Then, there is no loss of generality when assuming that $\Wrev^*(1|0) \leq q$ and that $\Wrev^*(1|1) \geq q$ (if this is not the case, then it can be achieved by reversing the roles of $u=1$ and $u=0$). In this case, the feasibility condition can be rewritten as
    \begin{equation}\label{eqn:feasibleTriangleBH}
        P_U(0) \left(q-\Wrev^*(1|0)\right)+
        P_U(1) \left(\Wrev^*(1|1)-q\right)
        \geq D.
    \end{equation}

    On the other hand, the feasibility condition on $W$ for $R^\triangle(D)$ is
    \begin{equation}
      \EE[\rho(U,Y)]-\EE[\rho(U,X)] \geq D,
    \end{equation}
    which is the same as
    \begin{equation}
      P_U(0)\left(P_Y(1) - \Wrev(1|0)\right) +
      P_U(1)\left(P_Y(0) - \Wrev(0|1)\right) \geq D,
    \end{equation}
    or, equivalently,
    \begin{equation}
      P_U(0)\left(q - \Wrev(1|0)\right) +
      P_U(1)\left(\Wrev(1|1)-q\right) \geq D.
    \end{equation}
    We conclude that the channel $P^*_{U|X}$ is also feasible for the RHS of \eqref{eqn:sameObjDiffFeas}, thereby proving \eqref{eqn:sameObjDiffFeas}, i.e. that $R_\ID(D) = R_\ID^\TCT(D)$.

    If $\Wrev^*(1|0),\Wrev^*(1|1)$ are on the same side of $q$, then $u=0$ and $u=1$ can be merged, following the steps of the merging process at the beginning of the proof. If they are merged, this means that $U$ is no longer a random variable, but a constant, and that $X$ and $U$ are therefore independent. This implies that $R_\ID(D) = 0$, and also that $|p-q| \geq D$. In this special case, we show that $R_\ID^\TCT(D)= 0$ directly.

Our goal is to find a channel $W$ that will make $X,U$ independent, and at the same time be feasible for the minimization of $R_\ID^\TCT(D)$ according to \eqref{eqn:feasibleTriangleBH}. For this purpose, choose the channel $P_{U|X}$ to be
    \begin{equation}
      P_{U|X}(0|x) = \alpha; P_{U|X}(1|x) = 1-\alpha, \mbox{ for any } x\in\{0,1\}.
    \end{equation}
    It is easy to see that $U$ and $X$ are independent, i.e. $I(X;U)=0$, and that $P_U(0) = \alpha$. Next, in order to satisfy \eqref{eqn:feasibleTriangleBH}, choose either $\alpha=0$ or $\alpha = 1$,  according to whether $p<q$ or $q<p$ (if $p=q$, this implies that $D=0$ and then any choice of $\alpha$ will work).
\end{proof}

Theorems~\ref{thm:D0} and \ref{thm:RIDbinHamming}, when combined, result in the following corollary.
\begin{cor}
    If $X\sim\Ber(p)$, $Y\sim\Ber(\tfrac{1}{2})$, and the distortion measure is Hamming, then
    \begin{equation}
        R_\ID(D) = R_\ID^\TCT(D) = R_\ID^\LCT(D) = R(\tfrac{1}{2}-D),
    \end{equation}
    where $R(\cdot)$ is the rate distortion function of the source $X$ and Hamming distortion.
\end{cor}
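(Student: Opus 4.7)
The plan is to combine Theorem \ref{thm:D0} and Theorem \ref{thm:RIDbinHamming} in a straightforward way; no new ideas are needed. First, I would verify the hypothesis of Theorem \ref{thm:D0} for the stated setup. Since $Y \sim \Ber(\tfrac12)$ and $\rho$ is Hamming, for each $\hat x \in \{0,1\}$ we have
\begin{equation}
    \sum_{y \in \{0,1\}} P_Y(y)\rho(\hat x, y) = \tfrac12 \cdot 0 + \tfrac12 \cdot 1 = \tfrac12,
\end{equation}
so the hypothesis holds with $D_0 = \tfrac12$. Theorem \ref{thm:D0} then immediately yields
\begin{equation}
    R_\ID^\TCT(D) = R_\ID^\LCT(D) = R(\tfrac12 - D),
\end{equation}
where $R(\cdot)$ is the rate-distortion function of $X \sim \Ber(p)$ under Hamming distortion.

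Next, Theorem \ref{thm:RIDbinHamming} asserts that $R_\ID(D) = R_\ID^\TCT(D)$ for the entire binary-Hamming setting (any $p$ and any $q$, including $q=\tfrac12$). Chaining this with the previous equality gives the full chain
\begin{equation}
    R_\ID(D) = R_\ID^\TCT(D) = R_\ID^\LCT(D) = R(\tfrac12 - D),
\end{equation}
which is the desired identity. I would remark that the general upper bounds $R_\ID(D) \leq R_\ID^\TCT(D) \leq R_\ID^\LCT(D)$ derived earlier in the paper guarantee consistency of the chain and require no further verification.

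There is no real obstacle here, since all the work has been done in Theorems \ref{thm:D0} and \ref{thm:RIDbinHamming}. The only point worth double-checking is that the distortion-rate achieving distribution implicit in $R(\tfrac12 - D)$ is genuinely a feasible transition probability in the $R_\ID^\LCT$ formula of Theorem \ref{thm:Naive}, i.e.\ that $\tfrac12 - D \geq 0$ lies in the valid range of the rate-distortion curve for $\Ber(p)$; this is automatic whenever $D \leq \tfrac12$, which is the meaningful regime for Hamming similarity.
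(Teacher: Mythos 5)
Your proof is correct and is exactly the combination of Theorems~\ref{thm:D0} and~\ref{thm:RIDbinHamming} that the paper invokes (the paper states the corollary without writing out the verification). You have simply spelled out the check that $D_0 = \tfrac12$ for $Y\sim\Ber(\tfrac12)$ under Hamming distortion and chained the two theorem conclusions, which is the intended argument.
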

Note that this result is slightly more general than \cite[Theorem 1]{IdoiaAllerton2013}, since here $X$ is not restricted to be symmetric.

\section{Computing the Identification Rate}\label{sec:computation}
Calculating the value of the achievable rates $R_\ID^\LCT(D)$ and $R_\ID^\TCT(D)$ is relatively easy. The term $D_\ID^\LCT(R)$, shown in \eqref{eqn:DIDLCT}, can be calculated from the distortion rate function (and the achieving reconstruction distribution), which can be calculated with, e.g. the well known Blahut-Arimoto algorithm, or simply as a minimization problem of a linear function over a convex set. The term $D_\ID^\TCT(R)$, shown in \eqref{eqn:DIDTCT}, is also given as a minimization problem of a linear function with convex constraints, and therefore can be solved easily. The general term $R_\ID(D)$, however, is posed as a minimization problem with nonconvex constraints, making its computation a challenge. In this section we give two results that facilitate the computation of this quantity. In Subsection \ref{ssec:card} we improve the bound on the cardinality of $U$, which reduces the dimensions of the optimization problem. In Subsection \ref{ssec:minmin} we describe the process of transforming the (non-convex) problem into a sequence of convex problems that can be solved efficiently.

\subsection{Cardinality of the auxiliary RV $U$}\label{ssec:card}

For the evaluation of $R_\ID(D)$, it was already shown in \cite[Lemma 3]{Ahlswede97} that it suffices to consider only $|\cU|=|\cC|+2$. Here we prove an improvement of the cardinality bound, stated in Theorem~\ref{thm:cardinality}.

\medskip

\begin{proof}[Proof of Theorem~\ref{thm:cardinality}]
We start by proving that taking $|\cU|=|\cX|+1$ suffices to calculate $R_\ID(D)$. The proof follows the idea from \cite{salehi1978cardinality}, i.e. using the strengthened version of Carath\'eodory's theorem due to Fenchel and Eggleston.

Define $|\cX|+1$ functions $\Psi_i: \cP(\cU \ra \cX) \ra \Reals$. In other words, the functions $\Psi_i$ take a conditional distribution from $\cU$ to $\cX$, and return a real number. The functions are given by:
\begin{align}
  \Psi_x(Q) & = Q(x), \mbox{, for } x=1,...,|\cX|-1;\\
  \Psi_{|\cX|}(Q) & = \bar\rho\left(Q,P_Y\right);\\
  \Psi_{|\cX|+1}(Q) & = H(P_X) -H(Q).
\end{align}
Note that in the optimization function $R_\ID(D)$, the objective function can be written as
\begin{equation}
    I(X;U) = \sum_{u \in \cU} P_U(u) \Psi_{|\cX|} (P_{X|U}(\cdot|u)),
\end{equation}
and the constraint can be written as
\begin{equation}
  \sum_{u \in \cU} P_U(u) \Psi_{|\cX|+1} (P_{X|U}(\cdot|u)) \geq D.
\end{equation}
Define the set $\cA$ to be the set of tuples $(\Psi_1(Q),...,\Psi_{|\cX+1|}(Q))$ for all $Q \in \cP(\cX)$. Note that $\cA$ is a closed and connected set, and therefore any point in the convex hull of $\cA$ can be represented as a convex combination of at most $|\cX|+1$ elements of $\cA$ (this is due to the Fenchel-Eggleston-Carath\'eodory theorem, see, e.g. \cite[Theorem 18]{Egglston}). Define $\cB$ to the convex hull of $\cA$. Further, define $\cB_{P_X}$ as
\begin{equation}
  \cB_{P_X} \triangleq \left\{ (\psi_1,...,\psi_{|\cX|+1}: \psi_i = P_X(i), 1\leq i \leq |\cX|-1\right\}.
\end{equation}
In other words, the set $\cB_{P_X}$ contains only convex combinations of $\cA$ that correspond to combinations of distributions on $\cX$ that, when averaged with the convex combination (which represents the distribution on $\cU$), result in the distribution $P_X$.

Finally, let $P^*_{U|X}$ be an achieving distribution for $R_\ID(D)$ and let $P^*_U$ and $P^*_{X|U}$ be the induced marginal on $\cU$ and the reverse conditional distribution, respectively. $P^*_U$ and $P^*_{X|U}$ can be associated with a point in the set $\cB_{P_X}$, where the attained $(R,D)$ pair is given in the last two coordinates of the vector in $\cB_{P_X}$. As claimed before, any point in $\cB$ can be represented as a convex combination of at most $|\cX|+1$ points of $\cA$, and in other words, the same pair $(R,D)$ can be achieved with a distribution that averages only $|\cX|+1$ distributions of $\cX$, i.e. the cardinality of $\cU$ can be limited to $|\cX|+1$.

\medskip
For the second part, recall the following facts:
\begin{itemize}
  \item $R_\ID(D)$ is a convex function of $D$ \cite[Lemma 3]{Ahlswede97}. Denote the region of achievable pairs $(R,D)$ as
\begin{equation}
  \mathcal{R} \triangleq \left\{(R,D) : R \geq R_\ID(D) \right\},
\end{equation}
where%
\footnote{The restriction on the values of $D$ and $R$ is due to the fact that any rate above $\log |\cX|$ is trivially achievable by using the sequence $\bx$ itself as the signature, and for a distortion threshold above $\rho_{\max}$ renders all sequences similar to each other, making the problem degenerate.} $D \in [0,\rho_{\max}]$ and $R \leq \log|\cX|$. Therefore the set is closed, bounded and convex.
\item For a convex set, an \emph{extreme} point is a point in the set that cannot be represented as a nontrivial\footnote{A convex combination is considered trivial if all the coefficients are zero except for one of them (which is equal to one).} convex combination of other points in the set. It is a well-known theorem that any convex set is equal to the convex hull of its extreme points (e.g. \cite[Corr. 18.5.1]{rockafellar70}). For our proof we require the more delicate notion of exposed points.
\item An \emph{exposed} point $p$ of a convex set is a point on the boundary of the set, s.t. there exists a supporting hyperplane of the set at $p$ (a hyperplane that touches the set at $p$, but the set is at one side of the hyperplane), whose intersection with the set itself is equal to $\{p\}$. Any exposed point is also an extreme point. The most useful fact about exposed points is the fact that any closed and bounded convex set is equal to the closure of a convex hull of its exposed points (a special case of \cite[Theorem 18.7]{rockafellar70}). We shall use this fact directly.
\end{itemize}
\medskip

To begin the proof, let $(R_0,D_0)$ (where $R_0 = R_\ID(D_0)$) be an exposed point of the achievable region. Our goal is to show that this point $D_0,R_0$ is achievable with a conditional distribution $P_{U|X}$ s.t. the distribution of $U$ is supported on at most $|\cX|$ elements. Next, let $(c,\lambda)$ be constants s.t. $R = c + \lambda D$ is a supporting line (hyperplane in 2D) of the achievable region at $(D_0,R_0)$, s.t. the intersection of $\mathcal{R}$ and the line contains this point only. Such a line is guaranteed to exist by the assumption that $(R_0,D_0)$ is an exposed point. A typical image is shown in Fig.~\ref{fig:exposed}.
\begin{figure}
  \centering
  \includegraphics[width=4in]{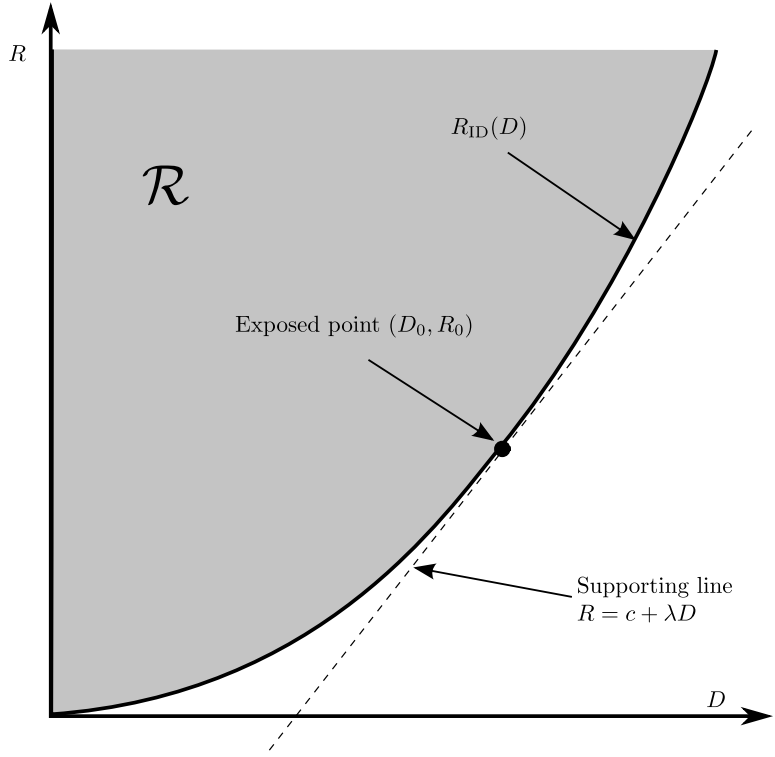}\\
  \caption{The achievable region $\mathcal{R}$, an exposed point $(D,R)$ and a supporting line.}\label{fig:exposed}
\end{figure}
Recall the $R_\ID(D)$ is given by the minimization expression \eqref{eqn:R_ID_info}. Let $P^*_{U|X}$ be an achieving conditional distribution at $D_0$, i.e. that minimizes \eqref{eqn:R_ID_info}. If $R_\ID(D_0) = 0$, this implies that $U$ and $X$ are independent, and therefore $\bar\rho(P_X,P_Y)\geq D$. This means that $R_\ID(D_0)$ can be attained by a trivial distribution of $U$ (where $U$ is a constant). In the general case where $R_\ID(D_0) > 0$, we conclude that the constraint in \eqref{eqn:R_ID_info} is active, and therefore
\begin{equation}
  R_0 = I(X;U); \mbox{ and } D_0 = \EE[\bar\rho(P^*_{X|U}(\cdot,U),P_Y)],
\end{equation}
where $X,U$ are distributed according to $P_X,P^*_{U|X}$.

Next, note that $P^*_{U|X}$ also minimizes the expression
\begin{equation}\label{eqn:Lagrangian}
  I(X;U) - \lambda \EE[\bar\rho(P_{X|U}(\cdot,U),P_Y)],
\end{equation}
where the minimization is without constraints (other than the fact that $P_{U|X}$ is a conditional distribution). This fact follows since an existence of a better minimum would imply a distribution $P'_{U|X}$ for which $(I(X;U),\EE[\bar\rho(P'_{X|U}(\cdot,U),P_Y)])$ falls outside the achievable region (due to the supporting hyperplane property), leading to a contradiction.

Next, claim that from $P^*_{U|X}$, we can construct a distribution $P^{**}_{U|X}$ that attains the same minimum in \eqref{eqn:Lagrangian}, for which the distribution of $P_U$ has at most $|\cX|$ elements. This can be shown by following the same steps as in the first part of the proof (where the cardinality was shown to be bounded by $|\cX|+1$), but now we replace the two functions $\Psi_{|\cX|}$ and $\Psi_{|\cX|+1}$ by a single function $\Psi_{|\cX|}$ that is equal to \eqref{eqn:Lagrangian}.

Since the new distribution $P^{**}_{U|X}$ attains the same minimum in \eqref{eqn:Lagrangian} as $P^*_{U|X}$ does, we conclude that the point $(D_1,R_1)$, given by
\begin{equation}
  R_1 = I(X;U); \mbox{ and } D_1 = \EE[\bar\rho(P^*_{X|U}(\cdot,U),P_Y)],
\end{equation}
where $X,U$ are distributed according to $P_X,P^{**}_{U|X}$, satisfies the same line equation $R_1 = c + \lambda D_1$. However, since we assumed that $(D_0,R_0)$ is an \emph{exposed} point of the achievable region, then by definition
\begin{equation}
  (D_1,R_1) = (D_0,R_0).
\end{equation}
In other words, the distribution $P^{**}_{U|X}$ attains the minimum of the original minimization problem \eqref{eqn:R_ID_info}.

The proof is concluded since, as noted before, a bounded, closed and convex set is equal to the closure of the convex hull of its exposed points. As a result, the achievable region $\mathcal{R}$ can be calculated by calculating $R^k_\ID(D)$, and then taking the lower convex envelope (the closure operation has no practical effect).
\end{proof}

\subsection{Conversion to a set of convex functions}\label{ssec:minmin}

Consider first the case where the distortion measure is Hamming, and $P_X,P_Y$ are arbitrary distributions on $\cX$. In this case, it is not hard to verify that
\begin{align}
  \bar\rho(P_X,P_Y)
    &= \frac{1}{2}\|P_X-P_Y\|_1 \label{eqn:first}\\
    &= \frac{1}{2}\sum_{x \in \cX}|P_X(x)-P_Y(x)|.
\end{align}

With this fact, we can rewrite the identification rate as
\begin{equation}
  \min_{P_{U|X}} I(X;U),
\end{equation}
where the minimization is w.r.t. all conditional distributions $P_{U|X}$ s.t.
\begin{equation}\label{eqn:HammingConditionL1}
  \sum_{u \in \cU} P_U(u) \sum_{x \in \cX} \left|P_{X|U}(x|u) - P_Y(x) \right| \geq 2D.
\end{equation}

Define $\cF$ to be the set of all functions that take a pair in $\cX \times \cU$ and return a binary value. With this, the condition \eqref{eqn:HammingConditionL1} is equivalent to
\begin{equation}\label{eqn:HammingConditionMultiple}
  \max_{f\in \cF} \left[\sum_{u \in \cU} P_U(u) \sum_{x \in \cX} (-1)^{f(x,u)}\left(P_{X|U}(x|u) - P_Y(x) \right)\right] \geq 2D.
\end{equation}
Alternatively, we may require $P_{U|X}$ to satisfy
\begin{equation}\label{eqn:HammingConditionMultiple2}
  \sum_{u \in \cU} P_U(u) \sum_{x \in \cX} (-1)^{f(x,u)}\left(P_{X|U}(x|u) - P_Y(x) \right) \geq 2D,
\end{equation}
for \emph{some} function $f\in \cF$. Define the LHS of \eqref{eqn:HammingConditionMultiple2} as $L_f(P_{U|X})$, and rewrite it as
\begin{align}
L_f(P_{U|X}) &\triangleq \sum_{u \in \cU} P_U(u) \sum_{x \in \cX} (-1)^{f(x,u)}\left(P_{X|U}(x|u) - P_Y(x) \right) \label{eqn:LfFirst}\\
&= \sum_{u \in \cU}\sum_{x \in \cX} (-1)^{f(x,u)} \left(P_{U|X}(u|x) P_X(x)  - P_U(u) P_Y(x) \right) \\
&= \sum_{u \in \cU}\sum_{x \in \cX} (-1)^{f(x,u)} \left(P_{U|X}(u|x) P_X(x)  - P_U(u) P_Y(x) \right),\label{eqn:Lf}
\end{align}
which shows that $L_f(P_{U|X})$ is a \emph{linear} function of the optimization variable $P_{U|X}$. Finally, we can rewrite the optimization problem as
\begin{equation}\label{eqn:minmin}
  R_\ID(D) = \min_{f\in\cF} \min_{L_f(P_{U|X}) \geq 2D} I(X;U).
\end{equation}
The expression \eqref{eqn:minmin} gives rise to the following scheme for computing $R_\ID(D)$: since $L_f(\cdot)$ is a linear function, the inner optimization in \eqref{eqn:minmin} is that of a convex target function with linear constraints, and can be solved efficiently (e.g. via \texttt{cvx} \cite{cvx}). To get the value of $R_\ID(D)$, simply repeat the inner optimization for all $f\in\cF$, and take the overall minimal value.

The main problem with this approach is that the number of optimization problems can be very high. Assume that $\cU = \cX$, following the previous subsection. The size of $\cF$ is $2^{|\cX|^2}$. For $|\cX|=5$, for example, one would need to solve $2^{25} \cong 33.5\times 10^6$ optimization problems. We can slightly improve the situation by utilizing symmetries in the expression \eqref{eqn:HammingConditionMultiple2}.

\begin{theorem}\label{thm:minmin}
Define the set $\cF' \subseteq \cF$ as follows. The set $\cF'$ shall contain only functions $f(\cdot,\cdot)$ where:
\begin{itemize}
  \item For all $u$, the function $f(\cdot,u)$ cannot be constant (in $x$). In other words, for the inner summation in \eqref{eqn:HammingConditionMultiple2}, some of the summands must be flipped and some not.
  \item There are no $u_1\neq u_2 \in \cU$ s.t. $\forall_{x\in\cX} f(x,u_1) = f(x,u_2)$.
\end{itemize}
    Then in the double optimization of the form \eqref{eqn:minmin}, it suffices to consider functions $f\in \cF'$ as defined above. In addition, the number of such functions is given by
\begin{equation}\label{eqn:F'size}
    |\cF'| = \comb{2^{|\cX|}-2}{|\cU|}.
\end{equation}
\end{theorem}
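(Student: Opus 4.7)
The plan is to prove the theorem via two reduction steps followed by a counting argument. The target is to show that any feasible pair $(f,P_{U|X})$ in the double minimization \eqref{eqn:minmin} can be transformed into one where $f\in\cF'$, without loss in either $L_f$ or $I(X;U)$. First, I would handle the non-constant-column condition. Suppose $f(\cdot,u_0)$ is constant, equal to some $c\in\{0,1\}$. Then, since $P_{X|U}(\cdot|u_0)$ and $P_Y$ both sum to $1$,
\[
P_U(u_0)\sum_x (-1)^{f(x,u_0)}(P_{X|U}(x|u_0)-P_Y(x)) = (-1)^c P_U(u_0)\cdot 0 = 0,
\]
so the $u_0$-summand in $L_f$ vanishes. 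Replacing $f(\cdot,u_0)$ by $\tilde f(x,u_0)\triangleq\Done\{P_{X|U}(x|u_0)<P_Y(x)\}$ yields a $u_0$-contribution equal to $P_U(u_0)\|P_{X|U}(\cdot|u_0)-P_Y\|_1\geq 0$, so $L_f$ weakly increases while $I(X;U)$, depending only on $P_{U|X}$, is unchanged. The new column is non-constant unless $P_{X|U}(\cdot|u_0)=P_Y$, in which case the contribution is $0$ for any column and I may freely substitute any non-constant one.

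Next I would address the distinct-columns condition. If $f(\cdot,u_1)=f(\cdot,u_2)$, merge $u_1,u_2$ into a single symbol $u^*$, leaving all other columns of $f$ intact, to obtain $U'$ and $f'$. Data processing gives $I(X;U')\leq I(X;U)$ because merging is a deterministic function of $U$. Using the common value $f(\cdot,u_1)=f(\cdot,u_2)$ to factor $(-1)^{f(x,u_1)}$ out of both the $u_1$- and $u_2$-summands, a direct computation gives
\begin{align*}
&P_U(u_1)\sum_x (-1)^{f(x,u_1)}(P_{X|U}(x|u_1)-P_Y(x)) + P_U(u_2)\sum_x (-1)^{f(x,u_2)}(P_{X|U}(x|u_2)-P_Y(x)) \\
&\quad= \sum_x (-1)^{f(x,u_1)}\bigl[P_U(u_1)P_{X|U}(x|u_1)+P_U(u_2)P_{X|U}(x|u_2)-(P_U(u_1){+}P_U(u_2))P_Y(x)\bigr],
\end{align*}
which is exactly the $u^*$-summand of $L_{f'}(P'_{U|X})$. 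Hence $L_{f'}(P'_{U|X})=L_f(P_{U|X})\geq 2D$, so feasibility is preserved while the objective weakly improves.

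Combining these two reductions, the infimum in \eqref{eqn:minmin} is attained by some $f\in\cF'$. For the count, there are exactly $2^{|\cX|}-2$ non-constant binary vectors of length $|\cX|$ (all $2^{|\cX|}$ vectors minus the all-zero and all-one ones). A function $f\in\cF'$ assigns a distinct such vector to each $u\in\cU$; since permuting the labels of $\cU$ yields an equivalent optimization problem (relabeling $P_{U|X}$ correspondingly leaves both $L_f$ and $I(X;U)$ invariant), I may fix a canonical ordering of the columns, giving $|\cF'|=\binom{2^{|\cX|}-2}{|\cU|}$.

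The main obstacle I anticipate is bookkeeping for the target cardinality $|\cU|$ after the merging step, since merging strictly decreases the alphabet. I would resolve this either by interpreting \eqref{eqn:minmin} as minimizing over all $|\cU|$ up to the cardinality bound of Theorem~\ref{thm:cardinality} and collecting the results, or equivalently by padding any reduced $U'$ with zero-probability symbols assigned any unused non-constant columns, in order to restore the original $|\cU|$; neither operation changes $I(X;U)$ or $L_f$, so the statement holds for each fixed $|\cU|$ as written.
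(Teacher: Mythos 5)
Your proof is correct and follows essentially the same route as the paper's: eliminate constant columns by noting they contribute zero to $L_f$ and can be replaced by a non-constant sign pattern that can only increase $L_f$, merge duplicate columns via data processing, and count unordered selections of $|\cU|$ distinct non-constant columns. Your write-up is in fact a bit more explicit than the paper's (the zero-contribution observation for constant columns, and the algebra verifying that the merged $u^*$-summand equals the sum of the two originals), and the zero-probability padding you propose to restore $|\cU|$ is exactly the paper's device.
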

\begin{proof}Appendix~\ref{app:minmin}.
\end{proof}
For quick reference, we show in Table~\ref{tab:optHamming} the improvement in the number of optimization problems that is sufficient to solve as a result of Theorem~\ref{thm:minmin}.
\begin{table}
  \centering
    \begin{tabular}{ | l | l | l |}
    \hline
    $|\cX|$ & $|\cF|$ & $|\cF'|$  \\ \hline
    $2$ & $16$ & $1$  \\ \hline
    $3$ & $512$ & $20$  \\ \hline
    $4$ & $65536$ & $1001$  \\ \hline
    $5$ & $34\cdot 10^6$ & $142 \cdot 10^3$  \\ \hline\hline
    $10$ & $1.3 \cdot 10^{30}$ & $3.3 \cdot 10^{23}$  \\ \hline
    \end{tabular}
  \caption{Number of convex optimization problems to be solved in order to calculate  $R_\ID(D)$.}\label{tab:optHamming}
\end{table}
For example, the identification rate in Fig.~\ref{fig:ternary} above, for ternary alphabet, was calculated using the method above. At each point, we have solved $20$ convex optimization programs and took the minimum value.

\medskip

As seen in Table~\ref{tab:optHamming}, the proposed method for the computation of $R_\ID(D)$, although it improves on the naive \eqref{eqn:minmin}, is only effective for small values of $|\cX|$. It is therefore an open problem how to calculate $R_\ID(D)$ effectively for larger alphabets.

\medskip

Finally, we briefly note how to extend the process described here for arbitrary distortion measures. The key fact in the computation of $R_\ID(D)$ in the Hamming case is the fact that the function $f(P)= \bar\rho(P,P_Y)$ can be represented as a maximum of \emph{linear} functions of $P$. This fact holds in the general case as well, following the fact that the \emph{epigraph} of the function $f$, defined as \begin{equation}
  \mathrm{epi} f \triangleq \left\{(P,D) \in \cP(\cX)\times \Reals : f(P) \leq D\right\},
\end{equation}
is always a polyhedron. For the proof of this fact, see \cite{IDexponent}. Once the $\bar\rho$-distance has been represented as a maximization of linear functions, the process described in Equations \eqref{eqn:first}-\eqref{eqn:minmin} can be followed. Note that as in the Hamming case, it is expected that this approach will only allow easy computation of $R_\ID(D)$ in cases where the alphabet $\cX$ is small.

\section{Conclusion and Future Work}\label{sec:Summary}

In this paper we have established the fundamental limit of compression for similarity identification: the minimal compression rate that allows reliable answers to the query ``is the compressed sequence similar to the query sequence''. While the achievability part was mostly derived in previous work (namely Ahlswede et al. \cite{Ahlswede97}), for the converse part we combined the approach of \cite{Ahlswede97} with the blowing-up lemma. We then investigated the achievable performance when using lossy compression as a building block, and provided a method for efficiently computing $R_\ID(D)$ for small alphabets.

There are several directions for future research that are natural given the result in the paper, some are theoretical, and some are more on the practical side. Future work that relates to theory includes the following:
\begin{itemize}
  \item Symmetric compression schemes: how does $R_\ID(D)$ change when the query sequence is also compressed, and possibly at a different rate than the source sequence? While the achievability part of this question is rather similar in spirit to the one presented here, the converse seems to be more complicated.
  \item Characterization of the ``identification exponent'' -- how fast does the probability of $\MAYBE$ (or, similarly, the false-positive probability) go to zero when the sequence length grows? Results for variable length compression has been presented in \cite{Ahlswede97}. However, they depend on an auxiliary random variable with unbounded cardinality, making the result uncomputable. Recently \cite{IDexponent}, the cardinality issue has been resolved, along with the exponent for the fixed-length compression case (which is different than that of the variable length).
  \item In addition to the error exponent, in lossy source coding (and also in channel coding) there exist additional results that characterize the tradeoff between rate, reliability and sequence length. Such results include different asymptotics (i.e. ``dispersion''-type results \cite{IngberKochmanSourceDispersion2011}) and also explicit bounds for finite sequence length (e.g. \cite{KostinaVerduFiniteBlock}). It would be interesting to discover similar results for the setting of the current paper.
  \item More complicated source and query models: how do the results change when the source and/or query sequence are no longer i.i.d.? For finite-order Markov-type sources, it is expected that an approach based on the method of types (namely its extension for sources with memory, e.g. \cite[Sec. VII]{csiszar1998method}) will lead to the right direction. For the case where the source and query sequences are statistically dependent, Ahlswede et al. \cite{Ahlswede97} provide partial results, as the dependent case seems to be more difficult.
\end{itemize}

\medskip

On the practical side, these are possible directions for future research, some of which are already being pursued:
\begin{itemize}
  \item Practical schemes for compression for similarity queries: Shannon's classical rate distortion theorem \cite{shannon1959coding} is now over 50 years old, but practical schemes for approaching the rate-distortion limit has only appeared roughly in the last two decades. It would be interesting to study how to harness the vast amount of work that has been done on practical source coding systems for the related (but different) task of compression for identification. This direction is already being pursued, with preliminary results reported in \cite{IdoiaAllerton2013} and \cite{idoiaDCC2014}.
  \item Computation of $R_\ID(D)$: As discussed in Section~\ref{sec:computation}, the computation of $R_\ID(D)$ is a challenge, mainly due to the fact that it is given as a non-convex optimization problem. While for the Hamming case and an alphabet of small size we have presented an efficient way to calculate $R_\ID(D)$, the general problem remains open. It would be interesting to study other approaches, perhaps in the spirit of the well known Blahut-Arimoto iterative algorithm, for efficiently computing $R_\ID(D)$.
\end{itemize}

\section*{Acknowledgement}
The authors would like to thank Thomas Courtade for fruitful discussions and for proving an earlier version of Theorem~\ref{thm:UniversalLB_Hamming}, and to Golan Yona for introducing us to the world of biological databases that provided the initial motivation for this work.

\appendices

\section{Equivalence of Fixed and Variable Length Identification Rate}\label{app:VL=FL}

\begin{proof}[Proof of Prop.~\ref{prop:VL=FL}]
    Our goal is to show that $R_\ID(D) \leq R_\ID^{vl}(D)$. Let $T^{(i)}_{vl},g^{(i)}_{vl}$ be a sequence of variable length schemes of rate $R$, that achieve a vanishing probability for $\MAYBE$. We will construct a sequence of \emph{fixed-length} schemes with rate arbitrarily close to $R$, that also attain a vanishing probability for $\MAYBE$.

The fixed-length scheme shall be constructed as a concatenation of $M$ variable-length schemes, operating on a single sequence $\bx$ of length $nM$ (each instance of the variable length scheme operates on a separate block in the input sequence).

Define $L_m$ to be normalized length of the binary codeword of the $m$-th block of $\bX$, i.e. the output of the $m$-th instance of the variable length scheme. Note that since the compressed sequence $\bX$ is i.i.d., and the random variables $L_m$ are i.i.d. as well.

Let $\eps>0$, $\Delta R>0$ to be arbitrarily small constants.
By a standard Chebyshev-type argument, we have that
\begin{equation}
  \Pr\left\{\frac{1}{M}\sum_{m=1}^M L_m > R + \Delta R\right\} \leq \frac{R^2}{M\cdot \Delta R^2}.
\end{equation}
Choose $M$ s.t. the RHS in the above inequality is equal to $\eps/2$.

The new fixed-length scheme shall work as follows:

\textbf{Encoding}:
\begin{itemize}
  \item Encode each sub-block (of length $n$) with the underlying variable length scheme.
  \item Calculate $\frac{1}{M}\sum_m L_m$. If larger than $R+\Delta R$, set the signature of the entire sequence to $\texttt{e}$, denoting ``erasure''.
  \item Otherwise, the signature is the concatenation of the variable-length codewords that correspond to each of the sub-block. Note that is guaranteed that the number of different signatures is at most $2^{nM (R+\Delta R)}+1$, i.e. the rate is arbitrarily close to $R$ (the added one is due to the erasure symbol).
\end{itemize}
\textbf{Decision function}:\\
Given a signature and a query sequence $\by$, the decision function $g(\cdot,\cdot)$ is defined as follows.
\begin{itemize}
  \item If the signature equal to $\texttt{e}$, answer $\MAYBE$.
  \item Otherwise, compute the answers of the $M$ sub-schemes where the input for each of them is the $m$-th signature (corresponding to the $m$-th block of $\bx$), and the $m$-th block of $\by$.
  \item Finally, answer $\NO$ if \emph{all} the sub-schemes returned a $\NO$. Otherwise, return a $\MAYBE$.
\end{itemize}

\textbf{Analysis:}\\
The probability of $\MAYBE$ in the overall scheme can be bounded by, by the union bound, as
\begin{equation}
  \Pr\{\MAYBE\} \leq \Pr\{\mbox{The signature is } \texttt{e}\} + M \times \Pr\{\mbox{A sub-scheme has returned a } \MAYBE\}.
\end{equation}
Recall that $M$ was chosen to be equal to $c/\eps$, where $c$ is a constant (independent of $n$), and that the probability of erasure is bounded by $\eps/2$. Overall, we have
\begin{equation}
  \Pr\{\MAYBE\} \leq \eps/2 + \frac{c}{\eps} \times \Pr\{\mbox{A sub-scheme has returned a } \MAYBE\}.
\end{equation}
Finally, note that the probability of $\MAYBE$ in the underlying scheme can be made to be arbitrarily small (while letting $n$ grow), and specifically, it can be made smaller than $\eps^2/(2c)$. With this choice, the overall probability of $\MAYBE$ is upper bounded by $\eps$, which was chosen to be arbitrarily small. Since the rate of the fixed-length scheme is arbitrarily close to $R$, this completes the proof of the proposition.
\end{proof}

\section{}\label{app:UniversalLB_Hamming}
\begin{proof}[Proof of Theorem~\ref{thm:UniversalLB_Hamming}]
Let $\rho(\cdot,\cdot)$ denote the Hamming distortion. The proof relies on a bound on the $\bar\rho$ distance due to Marton \cite{marton1996bounding}, where it is called a $\bar{d}$-distance in the context of Hamming distance only. The result in \cite[Prop.~1]{marton1996bounding} says that for any two distribution $(P_A,P_B)$,
\begin{equation}\label{eqn:pinsker}
  \bar\rho(P_A,P_B) \leq \left[\frac{1}{2}D_e(P_A||P_B)\right]^{1/2},
\end{equation}
where $D_e(\cdot||\cdot)$ is the KL divergence, given in nats\footnote{Note that since the $\bar \rho$ distance for the Hamming distance is equal to the $\ell_1$ distance between the distributions, \eqref{eqn:pinsker} is nothing but Pinsker's inequality.}. With this result, consider the constraint on $P_{XU}$ in the expression for $R_\ID(D)$:
\begin{align}
  \sum_u P_U(u) \bar\rho(P_{X|U}(\cdot|u),P_Y)
&\leq \sum_u P_U(u) \left[\frac{1}{2}D_e(P_{X|U}(\cdot|u)||P_Y)\right]^{1/2}   \\
&\leq \left[\frac{1}{2}\sum_u P_U(u) D_e(P_{X|U}(\cdot|u)||P_Y)\right]^{1/2},
\end{align}
where the second inequality follows from Jensen's inequality and the concavity of $\sqrt{\cdot}$. By writing the explicit expression for the divergence, we obtain:
\begin{align}
  \sum_u P_U(u) D_e(P_{X|U}(\cdot|u)||P_Y)
&= \frac{1}{\log e}\sum_u P_U(u) D(P_{X|U}(\cdot|u)||P_Y)\\
&= \frac{1}{\log e}\sum_u P_U(u) \sum_x P_{X|U}(x|u) \log \frac{P_{X|U}(x|u)}{P_{Y}(x)}\\
&= \frac{1}{\log e}\sum_u P_U(u) \sum_x P_{X|U}(x|u) \log \frac{P_{X|U}(x|u)P_X(x)}{P_X(x)P_{Y}(x)}\\
&= \frac{I(X;U) + D(P_X||P_Y)}{\log e}.
\end{align}
Therefore the constraint
\begin{equation}
  \sqrt{\frac{I(X;U)+D(P_X||P_Y)}{2 \log e}} \geq D
\end{equation}
is more loose than that of the identification rate, and therefore we obtain
\begin{align}
  R_\ID(D) & \geq \min_{I(X;U) + D(P_X||P_Y) \geq 2D^2\log e} I(X;U) \\
   & \geq 2D^2\log e - D(P_X||P_Y).
\end{align}
since $R_\ID(D)$ is nonnegative, the proof is concluded.
\end{proof}

\section{}\label{app:MostAiAreLarge}

\begin{proof}[Proof of Lemma~\ref{lem:MostAiAreLarge}]
We first give an upper bound on $\Pr\{\bX \in A_i\}$ in terms of $|A_i|$.
\begin{align}
    \Pr\{\bX \in A_i\}
    &= \sum_{P \in \cP_n(\cX) :\|P-P_X\|_\infty \leq \gamma} \Pr\{\bX \in A_i \cap T_P\} \\
    & = \sum_{P \in \cP_n(\cX) :\|P-P_X\|_\infty \leq \gamma} \frac{|A_i \cap T_P|}{|T_P|}\\
   & \leq \sum_{P \in \cP_n(\cX) :\|P-P_X\|_\infty \leq \gamma} \frac{|A_i \cap T_P|}{\tfrac{1}{(n+1)^{|\cX|}}2^{nH(P)}} \\
   & = \sum_{P \in \cP_n(\cX) :\|P-P_X\|_\infty \leq \gamma}  \frac{|A_i \cap T_P|}{2^{nH(P) - |\cX| \log(n+1)}}\\
   & = \sum_{P \in \cP_n(\cX) :\|P-P_X\|_\infty \leq \gamma}  \frac{|A_i \cap T_P|}{2^{nH(P) - |\cX| \log(n+1)}}\\
   & = \sum_{P \in \cP_n(\cX) :\|P-P_X\|_\infty \leq \gamma} |A_i \cap T_P| 2^{-nH(P) + |\cX| \log(n+1)}\\
   & \leq \sum_{P \in \cP_n(\cX) :\|P-P_X\|_\infty \leq \gamma}
   |A_i \cap T_P| 2^{-n[H(P_X) - \gamma |\cX|\log (1/\gamma)]+ |\cX| \log(n+1)}\\
   & \leq \sum_{P \in \cP_n(\cX) :\|P-P_X\|_\infty \leq \gamma} |A_i \cap T_P| 2^{-n[H(P_X) - \gamma']}   \\
   & = 2^{-n[H(P_X) - \gamma']} \sum_{P \in \cP_n(\cX) :\|P-P_X\|_\infty \leq \gamma}
   |A_i \cap T_P| \\
   & = |A_i|\cdot 2^{-n(H(P_X) - \gamma')}\label{eqn:PrAiUpperBound}.
\end{align}
The first two inequalities in the above derivation follow from \cite[Lemma 2.3]{CsiszarKorner2nd} and  \cite[Lemma 2.7]{CsiszarKorner2nd} respectively. The last inequality follows from the definition of $\gamma'$ and by setting $n_0$ to be the smallest $n_0$ s.t.
$\frac{1}{n} \log(n+1) \leq \gamma \log (1/\gamma)$.

Next, we have (for any $R'$):
\begin{align}
    \sum_{i: |A_i| \leq 2^{nR'} }\Pr\{\bX \in A_i\}
    &\overset{(a)}\leq \sum_{i: |A_i| \leq 2^{nR'} } |A_i| \cdot 2^{n(H(P_X)-\gamma')}\\
    &\leq \sum_{i: |A_i| \leq 2^{nR'} } 2^{nR'} \cdot 2^{n(H(P_X)-\gamma')}\\
    &\overset{(b)}\leq 2^{n(R' + R - H(P_X) + \gamma')},
\end{align}
where $(a)$ follows from \eqref{eqn:PrAiUpperBound} and $(b)$ follows since the sum contains $2^{nR}$ elements. The proof of the lemma is concluded by choosing $R'= H(P_X)-R -2\gamma'$.
\end{proof}

\section{}\label{app:LogSumLemma}
\begin{proof}[Proof of Lemma~\ref{lem:LogSumLemma}]
    We start with
    \begin{align}
      H(\tilde\bY)
       & =\sum_{i=1}^n H(\tilde Y_i | \tilde Y^{i-1}) \\
       & \geq \sum_{i=1}^n H(\tilde Y_i | \tilde Y^{i-1} \tilde X^{i-1} \tilde U^{i-1}) \\
       & \overset{(a)}= \sum_{i=1}^n H(\tilde Y_i | \tilde X^{i-1} \tilde U^{i-1}) \\
       & \overset{(b)}= \sum_{i=1}^n H(\tilde Y_i | \tilde X^{i-1})\\
       & = \sum_{i=1}^n \sum_{x^{i-1}\in A_{i-1}} \Pr\left\{\tilde X^{i-1}=x^{i-1}\right\}
      H(\tilde Y_i | \tilde X^{i-1}=x^{i-1})\\
      & = \sum_{i=1}^n \sum_{x^{i-1}\in A_{i-1}} \Pr\left\{\tilde X^{i-1}=x^{i-1}\right\}\nonumber\\
      &\quad \times \sum_{y\in\cY} \Pr\{\tilde Y_i=y | \tilde X^{i-1}=x^{i-1}\} \log \frac{1}{\Pr\{\tilde Y_i=y | \tilde X^{i-1}=x^{i-1}\}},\label{eqn:H_Y_tilde}
\end{align}    where $(a)$ follows since $\tilde Y^i - (\tilde X^{i-1},\tilde U^{i-1}) - \tilde Y^{i-1}$ form a Markov chain, and $(b)$ follows since $\tilde U^{i-1}$ is a function of $\tilde X^{i-2}$.

Next, it is not hard to verify that
\begin{align}
  \frac{1}{n} \sum_{i=1}^n \Pr\{\tilde X_i = x\} &= P_{X_0}(x),\\
  \frac{1}{n} \sum_{i=1}^n \Pr\{\tilde Y_i = y, \tilde U_i = u \}
  &= \Pr\{Y_0 = y, U_0 = u\}\\
  &= \sum_{x \in \cX} Q(x,u) V(y | x,u).
\end{align}
We remind that $X_0,U_0,Y_0$ are random variables that are distributed according to $(Q,V)$. i.e. that \begin{align}
  \Pr\{X_0 = x,U_0 = u\} &= P_{X_0U_0}(x,u) = Q(x,u),\\
  \Pr\{Y_0 = y|X_0=x,U_0 = u\} &= P_{Y_0|X_0U_0}(y|x,u) = V(y|x,u).
\end{align}

Next, write:
\begin{align}
  n H(Y_0|U_0)
  & = n \sum_{u\in\cU_m,y\in\cY} \Pr\{Y_0=y,U_0=u\} \log \frac{1}{P_{Y_0|U_0}(y|u)}\label{eqn:ref1}\\
  & = \sum_{i=1}^n \sum_{u\in\cU_m,y\in\cY}  \Pr\{\tilde Y_i = y, \tilde U_i = u \}
  \log \frac{1}{P_{Y_0|U_0}(y|u)}\\
  & = \sum_{i=1}^n \EE
\left[ \log \frac{1}{P_{Y_0|U_0}(\tilde Y_i|\tilde U_i)}\right]\\
  & = \sum_{i=1}^n \EE
\left[ \log \frac{1}{
P_{Y_0|U_0}\left(\tilde Y_i|\phi \left(\tilde X^{i-1}\right)\right)
}\right]\\
  & = \sum_{i=1}^n \sum_{x^{i-1}\in A_{i-1}}  \Pr\{\tilde X^{i-1} = x^{i-1}\}\nonumber\\
  &\quad \times \sum_{y \in \cY} \Pr\{\tilde Y_i = y | \tilde X^{i-1} = x^{i-1}\}
\log \frac{1}{P_{Y_0|U_0}\left(y|\phi \left(x^{i-1}\right)\right)}\label{eqn:ref2}.
\end{align}
Combined with \eqref{eqn:H_Y_tilde} we can write
\begin{align}
   n H(Y_0|U_0) - H(\tilde\bY)
   & \leq \sum_{i=1}^n \sum_{x^{i-1}\in A_{i-1}}  \Pr\{\tilde X^{i-1} = x^{i-1}\}\nonumber\\
  &\quad \times \sum_{y \in \cY} \Pr\{\tilde Y_i = y | \tilde X^{i-1} = x^{i-1}\}
\log \frac{\Pr\{\tilde Y_i=y | \tilde X^{i-1}=x^{i-1}\}}{
P_{Y_0|U_0}\left(y|\phi \left(x^{i-1}\right)\right)}.\label{eqn:preLogSum}
\end{align}
Next, note that

\begin{align}
P_{Y_0|U_0}\left(y|\phi \left(x^{i-1}\right)\right)
  &= \sum_{x \in \cX} P_{X_0|U_0}(x|\phi \left(x^{i-1}\right)) V(y | x,\phi(x^{i-1})), \label{eqn:LogSumIngredient1}\\
  \Pr\left\{\tilde Y_i=y | \tilde X^{i-1}=x^{i-1}\right\} & =
  \sum_{x \in \cX} \Pr\left\{\tilde X_i = x |\tilde X^{i-1}=x^{i-1} \right\} V(y | x,\phi(x^{i-1})).\label{eqn:LogSumIngredient2}
\end{align}
With \eqref{eqn:LogSumIngredient1} and \eqref{eqn:LogSumIngredient2}, we can use the log-sum inequality and write:
\begin{align}
  & \sum_{y \in \cY} \Pr\{\tilde Y_i = y | \tilde X^{i-1} = x^{i-1}\}
\log \frac{\Pr\{\tilde Y_i=y | \tilde X^{i-1}=x^{i-1}\}}{
P_{Y_0|U_0}\left(y|\phi \left(x^{i-1}\right)\right)}\nonumber\\
  & \leq \sum_{y \in \cY} \sum_{x \in \cX}
  \Pr\left\{\tilde X_i = x |\tilde X^{i-1}=x^{i-1} \right\} V(y | x,\phi(x^{i-1}))
   \log \frac
   {\Pr\left\{\tilde X_i = x |\tilde X^{i-1}=x^{i-1} \right\}}
   {P_{X_0|U_0}(x|\phi \left(x^{i-1}\right))}\\
  & =  \sum_{x \in \cX}
  \Pr\left\{\tilde X_i = x |\tilde X^{i-1}=x^{i-1} \right\}\log \frac
   {\Pr\left\{\tilde X_i = x |\tilde X^{i-1}=x^{i-1} \right\}}
   {P_{X_0|U_0}(x|\phi \left(x^{i-1}\right))}.
\end{align}
Combined with \eqref{eqn:preLogSum} we have
\begin{align}
   n H(Y_0|U_0) - H(\tilde\bY)
   & \leq \sum_{i=1}^n \sum_{x^{i-1}\in A_{i-1}}  \Pr\{\tilde X^{i-1} = x^{i-1}\}\nonumber\\
  &\quad \times \sum_{x \in \cX}
  \Pr\left\{\tilde X_i = x |\tilde X^{i-1}=x^{i-1} \right\}\log \frac
   {\Pr\left\{\tilde X_i = x |\tilde X^{i-1}=x^{i-1} \right\}}
   {P_{X_0|U_0}(x|\phi \left(x^{i-1}\right))}\label{eqn:similarly}\\
   & = nH(X_0|U_0)-H(\tilde\bX)\\
   & = nH(X_0|U_0)-\log |\tilde A|\\
   & \leq n \frac{\log^2m}{m},
\end{align}
where \eqref{eqn:similarly} follows from derivations similar to \eqref{eqn:ref1}-\eqref{eqn:ref2}, and the last inequality follows from \eqref{eqn:ITSproprtey3}. This concludes the proof of the lemma.
\end{proof}

\section{}\label{app:minmin}
\begin{proof}[Proof of Theorem~\ref{thm:minmin}]
Our goal is to show that
\begin{equation}\label{eqn:F=F'}
  \min_{f\in\cF} \min_{L_f(P_{U|X}) \geq 2D} I(X;U) = \min_{f\in\cF'} \min_{L_f(P_{U|X}) \geq 2D} I(X;U).
\end{equation}
Let $P^*_{U|X}$ be a minimizer of the LHS of \eqref{eqn:F=F'}. Our goal is to construct some $f^*\in\cF'$ s.t. $P^*_{U|X}$ will be a minimizer of $\min_{L_{f^*}(P_{U|X}) \geq 2D} I(X;U)$.
To this end, define $f^*(x,u)$ as follows:
\begin{equation}
    f^*(x,u) = \left\{
                 \begin{array}{ll}
                   0, & \hbox{if $P_{U|X}^*(u|x) P_X(x) > P^*_U(u) P_Y(x)$;} \\
                   1, & \hbox{if $P_{U|X}^*(u|x) P_X(x) < P^*_U(u) P_Y(x)$.}
                 \end{array}
               \right.
\end{equation}
where $P_U^*$ is the marginal of $U$ that results from $P_X,P_{U|X}^*$. Whenever $P_{U|X}^*(u|x) P_X(x) = P^*_U(u) P_Y(x)$, break ties arbitrarily so that $f^* \in \cF'$, e.g. by setting $f^*(u,x) = 0$ if $x=0$, and $1$ otherwise. With this definition, it is obvious that
\begin{align}
L_{f^*}(P^*_{U|X})
&= \sum_{u \in \cU}\sum_{x \in \cX} \left|P_{U|X}(u|x) P_X(x)  - P_U(u) P_Y(x) \right|,
\end{align}
and also that for all $f \in \cF$, $L_{f^*}(P^*_{U|X}) \geq L_{f}(P^*_{U|X})$. The conclusion is that if $P^*_{U|X}$ is feasible for some $f \in \cF$, (i.e. satisfies $L_f(P^*_{U|X}) \geq 2D$, then for sure it will also be feasible for $f^*$, and hence we have equality in \eqref{eqn:F=F'}.

\medskip

For the second part, it is convenient to consider function $f\in \cF$ as a matrix, with $f(u,x_1),f(u,x_2),...$ in the $u$'th row. The first claim of the theorem says that there cannot be any rows in the matrix with fixed values. In other words, there are $2^{|\cX|}-2$ possible values for every row in the matrix. The second claim is that there are no two rows that are equal to each other. This immediately shows that the number of such matrices, which is equal to $|\cF'|$, is simply the number of combinations of $|\cU|$ different rows from $2^{|\cX|}-2$ possible values. The order of the rows does not matter, since it is equivalent to relabeling the values $u \in \cU$, and hence we arrive at \eqref{eqn:F'size}. To show that indeed there is no need to have repeated rows in the matrix, note the following.

Suppose that $P_{U|X}^*$ is a minimizer for the LHS of \eqref{eqn:F=F'}, at some $f$ with two equal rows in the matrix corresponding to $f$. Denote by $P_{X|U}^*$ the reverse channel. Let $u_1,u_2$ correspond to the two identical rows in the matrix. Then, construct the following conditional distribution $P^{**}_{U|X}$, by \emph{merging} the outputs $u_1$ and $u_2$ into a new symbol $u'$ (a process similar to that of Theorem~\ref{thm:RIDbinHamming}). The new distribution results in mutual information $I(P_X,P^{**}_{U|X})$ that is not smaller than the one obtained by $P^{*}_{U|X}$, because of the data processing inequality. Next, rename $u'$ to be $u_1$, and add a fictitious new symbol $u_2$ with probability zero. Then, define a new function $f'(\cdot,\cdot)$ to be equal to $f(\cdot,\cdot)$ for all $u \neq u_2$, and for $u_2$, choose $f'(u_2,\cdot)$ to be a new line in the matrix that hasn't occurred there. This is guaranteed to exist, since there are $2^{|\cX|}-2 > |\cU|-1$ such possible values. Finally, note that by construction,
\begin{equation}
  L_f(P^{*}_{U|X}) = L_{f'}(P^{**}_{U|X}).
\end{equation}
In other words, if $P^{*}_{U|X}$ is feasible for some $f\in \cF$, then there exists a distribution $P^{**}_{U|X}$, with better (lower) mutual information, that is feasible for another $f'\in\cF'$.

The conclusion is, then, that it suffices to consider only functions $f \in \cF'$.
\end{proof}

\bibliographystyle{IEEEtran}
\bibliography{Master}

\end{document}